\DeclarePairedDelimiter\ceil{\lceil}{\rceil}
\DeclarePairedDelimiter\floor{\lfloor}{\rfloor}
\def\mc {\mathcal}
\def\mk {\mathfrak}
\def\mr {\mathrm}
\def\mb {\mathbb}
\def\RR {{\mathbb R}}
 \newtheorem{thm}{Theorem}[section]
 \newtheorem{coll}[thm]{Corollary}
 \newtheorem{lem}[thm]{Lemma}
  \newtheorem{fact}[thm]{Fact}
 \theoremstyle{definition}
 \newtheorem{definition}[thm]{Definition}
 \theoremstyle{remark}
 \numberwithin{equation}{section}
\begin{document}

\title{Universality of single qudit gates}

\author{Adam Sawicki}

\address{%
Center for Theoretical Physics PAS\\ 
Al. Lotnik\'ow 32/46\\
02-668, Warsaw\\
Poland}
\email{a.sawicki@cft.edu.pl}

\author{Katarzyna Karnas}
\address{%
Center for Theoretical Physics PAS\\ 
Al. Lotnik\'ow 32/46\\
02-668, Warsaw\\
Poland}
\email{karnas@cft.edu.pl}

\subjclass{Primary 81P45; Secondary 22E46}

\keywords{universal gates, universal Hamiltonians, Lie groups, Lie algebras, representation theory}

\begin{abstract}
We consider the problem of deciding if a set of quantum one-qudit gates $\mathcal{S}=\{g_1,\ldots,g_n\}\subset G$ is universal, i.e. if $<\mathcal{S}>$ is dense in $G$, where $G$ is either the special unitary or the special orthogonal group. To every gate $g$ in $\mathcal{S}$ we assign the orthogonal matrix $\mr{Ad}_g$ that is image of $g$ under the adjoint representation $\mathrm{Ad}:G\rightarrow SO(\mathfrak{g})$ and $\mathfrak{g}$ is the Lie algebra of $G$. The necessary condition for the universality of $\mc{S}$ is that the only matrices that commute with all $\mathrm{Ad}_{g_i}$'s are proportional to the identity. If in addition there is an element in $<\mc{S}>$ whose Hilbert-Schmidt distance from the centre of $G$ belongs to $]0,\frac{1}{\sqrt{2}}]$, then $\mc{S}$ is universal. Using these we provide a simple algorithm that allows deciding the universality of any set of $d$-dimensional gates in a finite number of steps and formulate a general classification theorem.\end{abstract}

\maketitle

\global\long\global\long\global\long\def\bra#1{\mbox{\ensuremath{\langle#1|}}}
 \global\long\global\long\global\long\def\ket#1{\mbox{\ensuremath{|#1\rangle}}}
 \global\long\global\long\global\long\def\bk#1#2{\mbox{\ensuremath{\ensuremath{\langle#1|#2\rangle}}}}
 \global\long\global\long\global\long\def\kb#1#2{\mbox{\ensuremath{\ensuremath{\ensuremath{|#1\rangle\!\langle#2|}}}}}
 
\section{Introduction}
Quantum computer is a device that operates on a finite dimensional quantum system $\mc{H}=\mc{H}_1\otimes\ldots \otimes \mc{H}_n$ consisting of $n$ {\it qudits} \cite{barenco,deutsch,Loyd} that are described by $d$-dimensional Hilbert spaces, $\mc{H}_i\simeq \mathbb{C}^d$ \cite{nielsen}. When $d=2$ qudits are typically called qubits. The ability to effectively manufacture optical gates operating on many modes, using for example optical networks that couple modes of light \cite{exp1,exp2,Reck}, is a natural motivation to consider not only qubits but also higher dimensional systems in the quantum computation setting (see also \cite{oszman, oszman2} for the case of fermionic linear optics and quantum metrology). One of the necessary ingredients for a quantum computer to work properly is the ability to perform arbitrary unitary operation on the system $\mc{H}$. We distinguish two types of operations. The first are one-qudit operations (one-qudit gates) that belong to $SU(\mc{H}_i)\simeq SU(d)$ and act on a single qudit. The second are $k$-qudit operations ($k$-qudit gates), $k\geq 2$, that belong to $SU(\mc{H}_{i_1}\otimes\ldots\otimes \mc{H}_{i_k})\simeq SU(d^k)$ and act on the chosen $k$ qudits. A $k$-qudit gate is nontrivial if it is not a tensor product of $k$ single qudit gates. We say that one-qudit gates $\mathcal{S}=\{g_1,\ldots,g_n\}$ are {\it universal} if any gate from $SU(d)$ can be built, with an arbitrary precision, using gates from $\mathcal{S}$. Mathematically this means that the set $<\mc{S}>$ generated by elements from $\mc{S}$ is dense in $SU(d)$ and its closure is the whole $SU(d)$, i.e. $\overline{<\mc{S}>}=SU(d)$. It is known that once we have access to a universal set of one-qudit gates together with one additional two-qudit gate that does not map separable states onto separable states, we can build within a given precision, an arbitrary unitary gate belonging to $SU(\mc{H})$ \cite{Brylinski} (see \cite{OZ17} for the similar criteria for fermionic and bosonic quantum computing). Thus in order to characterise universal sets of gates for quantum computing with qudits, one needs to characterise sets that are universal for one qudit. 

Although there are some qualitative characterisations of universal one-qudit gates, the full understanding is far from complete. It is known, for example, that almost all sets of qudit gates are universal, i.e  universal sets $\mathcal{S}$ of the given cardinality $c$ form a Zariski open set in $SU(d)^{\times c}$. By the definition of a Zariski open set we can therefore deduce that non-universal gates can be characterised by vanishing of a finite number of polynomials in the gates entries and their conjugates \cite{field, kuranishi}. These polynomials are, however, not known and it is hard to find operationally simple criteria that decide one-qudit gates universality. Some special cases of two and three dimensional gates  have been studied in \cite{aronson,sawicki1}. The main obstruction in these approaches is the lack of classification of finite and infinite disconnected subgroups of $SU(d)$ for $d>4$. Recently there were also approaches providing algorithms for deciding universality of a given set of quantum gates that can be implemented on quantum automatas \cite{Derksen05}.

The goal of this paper is to provide some simple criteria for universality of one-qudit gates that can be applied even if one does not know classification of finite/infinite disconnected subgroups of $SU(d)$. To achieve this we divide the problem into two. First, using the fact that considered gates $\mc{S}=\{g_1,\ldots,g_n\}$ belong to groups that are compact simple Lie groups $G$, we provide a criterion which allows to decide if an infinite subgroup is the whole group $G$. It is formulated in terms of the adjoint representation matrices $\mr{Ad}_g$, $g\in\mc{S}$ and boils down to finding the dimension of the commutant of all $\mr{Ad}_{g_i}$'s. The necessary condition for universality is that the commutant is one-dimensional. Checking this reduces to calculating the dimension of the kernel of a matrix constructed from $\mr{Ad}_{g_i}$'s, whose coefficients are polynomial in the entries of gates and their complex conjugates. Next, we give sufficient conditions for a set generated by $\mc{S}$ to be infinite. They stem from inequalities that relate the distances of two group elements and their commutators from the identity \cite{curtis,bottcher}. In particular we show that for a pair of gates $g_1$ and $g_2$, for which the Hilbert-Schmidt distances from the centre $Z(G)$ of $G$ are less than $\frac{1}{\sqrt{2}}$ and such that $[g_1,g_2]_\bullet:=g_1g_2g_1^{-1}g_2^{-1}\notin Z(G)$, deciding universality boils down to checking if the corresponding Lie algebra elements generate the whole Lie algebra. Next we show that for a gate whose distance from $Z(G)$ is larger that $\frac{1}{\sqrt{2}}$, $\mr{dist}(g,Z(G))\geq\frac{1}{\sqrt{2}}$, there is always $n\in\mathbb{N}$ such that $\mr{dist}(g^n,Z(G))<\frac{1}{\sqrt{2}}$. Moreover, using Dirichlet approximation theorems (and their modifications) we give an upper bound for the maximal $N_G$ such, that for every $g\in G$ we have $\mr{dist}(g^n,Z(G))<\frac{1}{\sqrt{2}}$ for some $1\leq n\leq N_G$. For the gates that satisfy the necessary condition for universality, we show that the group generated by $\mc{S}$ is either 1) finite iff the distance of all its elements (beside those belonging to $Z(G)$) from $Z(G)$ is longer than $\frac{1}{\sqrt{2}}$ or 2) otherwise equal to $G$. This key observation gives rise to a simple algorithm that allows to decide universality of any given set of gates. Moreover, it leads to a general classification theorem. In order to formulate it we  introduce the notion of the {\it exceptional spectrum}. For example, the spectrum of $g\in SU(d)$ is exceptional iff it is a collection of $n^{\mathrm{th}}$ roots of $\alpha\in \mathbb{C}$, where $1\leq n\leq N_{SU(d)}$ and $\alpha^d=1$. Notably there are only finitely many  exceptional spectra and their number can be easily calculated. Our classification theorem states that $\mc{S}$ which satisfies the necessary universality condition and contains at least one matrix with a non-exceptional spectrum is universal. Our approach for checking if the generated group is infinite is somehow related to \cite{freedman, jeandel}, however the conceptual differences in both approaches are significant and the methods should be treated as independent. The problem of deciding if a finitely generated group is infinite has been also studied and there are some algorithms that allow checking this property (see for example \cite{babai1, babai2,babai3,Derksen05}). In contrast to these approaches, our reasoning is based on the set of basic properties of compact connected simple Lie groups. The advantage for us of this approach is that it is explicit and direct. Moreover, the resulting algorithm is simple and can be easily implemented.

It is worth stressing here that universality criteria on the level of Lie groups require some additional conditions comparing to the level of Lie algebras. As an example, it was shown in \cite{schuch} that for the system of $n$ qubits, the set $\mc{S}$ consisting of all $1$-qubit gates and the SWAP gates between all pairs of qubits is not universal, whereas an analogous set of gates with the square roots of SWAP is universal. It is, however, evident that in both cases the corresponding Hamiltonians generate $\mk{su}(2^n)$. The interesting universal and non-universal extensions of local unitary gates in the setting for fermionic and bosonic quantum computing  can be also found in \cite{OZ17}.

In our paper we also demonstrate that the adjoint representation, this time for Lie algebras, can be useful in deciding if a finite subset $\mc{X}$ of a real compact semisimple Lie algebra generates the whole algebra (section \ref{algebra-gen}). This problem has been studied intensively in control theory \cite{albertini,brockett,jurdjevic} and in connection to universality of Hamiltonians, symmetries and controllability of quantum systems  \cite{Laura,Schirmer,zeier,daniel}. There are numerous criteria known and admittedly some are very general. Nevertheless, in section 3.1 we provide criteria for the universality of $\mc{X}$ using our approach with the adjoint representation.
As the considered groups are compact and connected, any gate $g\in G$ can be written as $g=e^X$, where $X$ is an element of the Lie algebra of the group. In Theorem \ref{thm-alg} we show that  when all elements $g\in \mc{S}$ satisfy $\mr{dist}(g,Z(G))<\frac{1}{\sqrt{2}}$  the necessary and sufficient condition for universality of $\mc{S}$ is completely  determined by generation of the Lie algebra  by the logarithms of the gates  from $\mc{S}$ (see Section \ref{log-construct} for the definition of the logarithm).

The last part of the paper concerns applications of the above ideas to $SU(2)$, $SO(3)$ and $SU(3)$. In particular we give a full characterisation of the universal pairs of single qubit gates and show that for any pair of $SU(2)$ gates our algorithm terminates for a word of the length $l\leq13$. Moreover, if the universality algorithm does not terminate in Step 2. with $1\leq l\leq 4$ the set $\mc{S}$ cannot be universal. We also show that for $SU(2)$ the exceptional spectra are in direct correspondence with the characters of the finite subgroups of $SU(2)$. We also characterise real and complex $2$-mode beamsplitters that are universal when acting on $d\geq 3$ modes. Our approach allows to reproof the results of \cite{aronson,sawicki1} without the knowledge of disconnected  infinite or finite subgroups of $SO(3)$ and $SU(3)$.

\section{Preliminaries}

\subsection{Compact semisimple Lie algebras}
A real Lie algebra is a finite dimensional vector space $\mathfrak{g}$ over $\RR$ together with a commutator $[\cdot,\cdot]:\mathfrak{g}\times\mathfrak{g}\rightarrow\mathfrak{g}$ that is: (1) bilinear (2) antisymmetric and (3) satisfies Jacobi identity $\left[\left[X,Y\right],Z\right]+\left[\left[Z,X\right],Y\right]+\left[\left[Y,Z\right],X\right]=0$.  In this paper we will often skip `real' as we will consider only real Lie algebras. A Lie algebra $\mathfrak{g}$ is nonabelian if there is a pair $X,Y\in\mathfrak{g}$ such that $[X,Y]\neq 0$. A subspace $\mathfrak{h}\subset\mathfrak{g}$ is a subalgebra of $\mathfrak{g}$ if and only if for any $X,Y\in\mathfrak{h}$ we have $[X,Y]\in\mathfrak{h}$, i.e. $\mathfrak{h}$ is closed under taking commutators. An important class of subalgebras are {\it ideals}. A subalgebra $\mathfrak{h}\subset\mathfrak{g}$ is an ideal of $\mathfrak{g}$ if for any $X\in \mathfrak{g}$ and any $Y\in \mathfrak{h}$ we have $[X,Y]\in\mathfrak{h}$. One easily checks that an intersection of ideals is an ideal. 
\begin{definition}
A nonabelian Lie algebra $\mathfrak{g}$ is simple if $\mk{g}$ has no ideals other than ${0}$ and $\mathfrak{g}$.
\end{definition}
We say that a Lie algebra $\mk{g}$ is a direct sum of Lie algebras, $\mk{g}=\oplus_{i=1}^n\mk{g}_i$, if and only if it is a direct sum of vector spaces  $\{\mk{g}_i\}_{i=1}^n$ and $[\mk{g}_i,\mk{g}_j]=0$ for all $i\neq j$. In this case $\mk{g}_i$'s are ideals of $\mk{g}$. The algebras we will be interested in belong to a special class of either simple Lie algebras or their direct sums. In the following we briefly discuss their properties.

A representation of a real Lie algebra on a real vector space is a linear map $\phi:\mathfrak{g}\rightarrow\mathrm{End}_\RR(V)$ that satisfies $\phi\left(\left[X,Y\right]\right)=\left[\phi(X),\phi(Y)\right]$. A representation is called {\it irreducible} if $V$ has no $\phi(\mk{g})$-invariant subspace $W\subset V$, i.e. a subspace for which $\phi(X)W\subset W$, for all $X\in\mk{g}$.

As $\mathfrak{g}$ is a real vector space itself, one can consider representation of $\mathfrak{g}$ on $\mathfrak{g}$. In fact, there exists a canonical representation of this type that is called the {\it adjoint representation}:
\begin{gather}
\mathrm{ad}:\mathfrak{g}\rightarrow\mathrm{End}(\mathfrak{g}),\,\,\mathrm{ad}_{X}(Y):=[X,Y].
\end{gather}
Note that invariant spaces of the adjoint representation are ideals and therefore the adjoint representation  of a simple Lie algebra is irreducible. Using the adjoint representation we define a bilinear form on $\mathfrak{g}$, called the {\it Killing form} given by $B(X,Y)=\mr{tr}\left(\mr{ad}_{X}\circ\mr{ad}_{Y}\right)$ \footnote{Upon a choice of basis in $\mathfrak{g}$ endomorphisms $\mr{ad}_{X}$ and $\mr{ad}_{Y}$ are matrices and hence we can compute the trace.}. The Killing form satisfies 
\begin{gather}\label{killing}
B\left(\mathrm{ad}_{X}\left(Y\right),Z\right)+B\left(\mathrm{ad}_{X}\left(Z\right),Y\right)=0.
\end{gather}
\begin{definition}
A real Lie algebra $\mathfrak{g}$ is a \underline{compact semisimple Lie algebra} if its Killing form is negative definite.
\end{definition}
Assume now that $\mathfrak{g}$ is a compact semisimple Lie algebra and let $\mathfrak{a}\subset\mk{g}$ be an ideal. Let $\mathfrak{a}^\perp$ be the orthogonal complement of $\mathfrak{a}$ with respect to the Killing form. For any $X\in\mathfrak{g}$, $Y\in\mathfrak{a}^\perp$, and $Z\in\mathfrak{a}$ we have
\begin{gather}
B\left(\left[X,Y\right],Z\right)=-B\left(Y,\left[X,Z\right]\right)=0.
\end{gather}
Hence $[X,Y]\in\mk{a}^\perp$. Therefore $\mathfrak{a}^\perp$ is also an ideal. Note next that $[\mk{a},\mk{a}^\perp]\subset\mk{a}\cap\mk{a}^\perp$. The restriction of $B$ to the ideal $\mk{a}\cap\mk{a}^\perp$ is obviously zero. But $B$ is negative definite, hence $\mk{a}\cap\mk{a}^\perp=0$. As a result $\mathfrak{g}=\mathfrak{a}\oplus\mathfrak{a}^\perp$ is a direct sum of ideals. We can repeat this procedure for $\mk{a}$ and $\mk{a}^\perp$ and after a finite number of steps finally we get: 
\begin{fact}
A real compact semisimple Lie algebra is a direct sum of real compact simple Lie algebras.
\end{fact}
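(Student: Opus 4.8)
The plan is to argue by induction on $\dim\mk{g}$, taking the decomposition established just above the statement as the inductive step. If $\mk{g}$ is simple there is nothing to prove, since it is then a direct sum with a single compact simple summand; and $\mk{g}$ cannot be abelian, for if $\mr{ad}$ were identically zero the Killing form would vanish, contradicting negative definiteness. So assume $\mk{g}$ is not simple.

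Then $\mk{g}$ contains an ideal $\mk{a}$ with $0\neq\mk{a}\neq\mk{g}$; we may take $\mk{a}$ of minimal positive dimension among such ideals. By the computation preceding the statement, $\mk{a}^\perp$ (the orthocomplement with respect to $B$) is again an ideal and $\mk{g}=\mk{a}\oplus\mk{a}^\perp$ is a direct sum of ideals, with both summands nonzero and of dimension strictly less than $\dim\mk{g}$.

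The key point that makes the induction run is that each summand is itself a compact semisimple Lie algebra. Indeed, since $[\mk{a},\mk{a}^\perp]=0$, for $X,Y\in\mk{a}$ the endomorphism $\mr{ad}_{X}\circ\mr{ad}_{Y}$ preserves the decomposition $\mk{g}=\mk{a}\oplus\mk{a}^\perp$ and annihilates $\mk{a}^\perp$, so its trace on $\mk{g}$ equals the trace of $\mr{ad}_{X}|_{\mk{a}}\circ\mr{ad}_{Y}|_{\mk{a}}$ on $\mk{a}$. Hence the Killing form of the Lie algebra $\mk{a}$ coincides with the restriction $B|_{\mk{a}\times\mk{a}}$, which is negative definite because $B$ is; thus $\mk{a}$, and by the same argument $\mk{a}^\perp$, is compact semisimple. (This incidentally re-proves that $\mk{a}$ must be nonabelian, since an abelian ideal would be a null subspace for $B$.)

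Applying the inductive hypothesis to $\mk{a}$ and to $\mk{a}^\perp$ writes each as a direct sum of compact simple Lie algebras, and concatenating the two decompositions exhibits $\mk{g}$ as such a sum, completing the induction. The only step needing a little care is the Killing-form identity of the previous paragraph, on which the whole argument rests; everything else is bookkeeping. (Alternatively, one may observe directly that the minimal nonzero ideal $\mk{a}$ is already simple --- any ideal of $\mk{a}$ is an ideal of $\mk{g}$ because $[\mk{a}^\perp,\mk{a}]=0$ --- so that only $\mk{a}^\perp$ has to be fed back into the induction.)
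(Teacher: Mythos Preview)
Your proof is correct and follows essentially the same approach as the paper: split off an ideal using the Killing-form orthocomplement, then iterate. The paper's argument is given in the paragraph immediately preceding the Fact and concludes with ``We can repeat this procedure for $\mk{a}$ and $\mk{a}^\perp$ and after a finite number of steps finally we get'' the stated decomposition. You have made this iteration into an explicit induction and, importantly, justified why the iteration is legitimate by checking that the Killing form of the ideal $\mk{a}$ coincides with the restriction of the ambient Killing form (so that $\mk{a}$ and $\mk{a}^\perp$ are again compact semisimple); the paper leaves this step implicit.
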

Let us next choose a basis $\left\{X_i\right\}_{i=1}^{\mr{dim}\mk{g}}$ in $\mk{g}$ that satisfies $B(X_i,X_j)=-\delta_{ij}$. In this basis $\mathrm{ad}_X$ is an antisymmetric trace zero real matrix, hence an element of the special orthogonal Lie algebra $\mk{so}(\mr{dim}\;\mk{g})$. Finally we remark that the subalgebra of a simple or a semisimple Lie algebra need not to be simple/semisimple. 

\subsection{Compact semisimple Lie groups}
A Lie group $G$ is a group that has a structure of a differential manifold and the group operations are smooth. We say that $G$ is compact if it is a compact manifold, i.e. any open covering of $G$ has a finite subcovering. It is well known that a closed subgroup of a Lie group is a Lie group \cite{nojman,Cartan}. In this section we will always consider closed subgroups. An important class of subgroups are normal subgroups. $H\subset G$ is a normal subgroup if for each $g\in G$ we have $gHg^{-1}\subset H$. We denote it by $H\triangleleft G$. In this case the quotient $G/H$ is a group. A disconnected $G$ consists of connected components. Connected components of a Lie group are open and their number is finite if $G$ is compact, as otherwise they would constitute an open covering of $G$ that does not possess finite subcovering.  The identity component $G_e$, i.e. the component that contains the neutral element $e$, is a normal subgroup of $G$. This can be easily seen as the maps $\phi_g:G\rightarrow G$, $\phi_g(h)=ghg^{-1}$ are continuous for every $g\in G$, hence they map components into components. But $e\in \phi_g(G_e)$ for all $g\in G$, hence $ \phi_g(G_e)=G_e$. The quotient $G/G_e$ is a group (because $G_e$ is normal) which for a compact $G$ is a finite group called the {\it components group}. 

The connection between Lie groups and Lie algebras is established in the following way. Left invariant vector fields on $G$ together with vector fields commutators form the Lie algebra $\mk{g}$ of a Lie group $G$. Note that these fields are determined by their value at $e$ and therefore $\mk{g}$ can be identified with the tangent space to $G$ at $e$, i.e $\mathfrak{g}=T_eG$. For every $X\in\mk{g}$ there is a unique one parameter subgroup $\gamma(t)$ whose tangent vector at $e$ is $X$. We define the exponential map $\exp:
\mathfrak{g}\rightarrow G$ to be: $\exp(X):=\gamma(1)$. For any Lie group the image of the exponential map, $\exp(\mk{g})$, is contained in the  identity component $G_e$ and when $G$ is compact $\exp(\mathfrak{g})=G_e$. Therefore for a compact and connected group every element $g\in G$ is of the form $\exp(X)$ for some $X\in \mk{g}$. For matrix Lie groups $G\subset\mathrm{GL}(n,\mathbb{C})$ these definitions simplify as the exponential map is the matrix exponential that is defined by $e^X=\sum_{i=0}^\infty\frac{X^n}{n!}$ and the Lie algebra is defined as $\mk{g}=\{X:e^{tX}\in G,\,\,\forall t\in \RR\}$.

\begin{definition}
A compact connected Lie group is simple/semisimple if its Lie algebra is a compact simple/ a compact semisimple Lie algebra.
\end{definition}

Recall that the  Lie algebra $\mk{h}$ of the identity component of $H\triangleleft G$ is an ideal of the Lie algebra $\mathfrak{g}$. We can also use an equivalent definition that says a compact connected group $G$ is simple if it has no connected normal subgroups. Similarly as for Lie algebras, compact semisimple Lie groups have a particularly nice structure. 
\begin{fact}
Let G be a compact connected semisimple group. Then $$G=\left(G_1\times\ldots\times G_k\right)/Z,$$ where each $G_i$ is a simple compact group and $Z$ is contained in the centre of $G_1\times\ldots\times G_k$.
\end{fact}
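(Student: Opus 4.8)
The plan is to deduce the claim from the Lie-algebra decomposition already in hand together with the structure of the universal cover of $G$. By the preceding Fact, $\mk{g}=\mathrm{Lie}(G)$ splits as a direct sum of compact simple ideals $\mk{g}=\mk{g}_1\oplus\cdots\oplus\mk{g}_k$. First I would pass to the universal covering homomorphism $\pi:\widetilde{G}\to G$: since $G$ is connected, $\widetilde{G}$ is a simply connected Lie group with $\mathrm{Lie}(\widetilde{G})=\mk{g}$, and $G\cong\widetilde{G}/\Gamma$, where $\Gamma=\ker\pi$ is the (discrete) fibre over the identity. Because $\widetilde{G}$ is connected, $\Gamma$ is automatically central: for fixed $\gamma\in\Gamma$ the continuous map $g\mapsto g\gamma g^{-1}$ takes values in the discrete set $\Gamma$, hence is constant and equal to $\gamma$.

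Next I would realise $\widetilde{G}$ as a direct product. Let $\widetilde{G}_i$ be the simply connected Lie group with Lie algebra $\mk{g}_i$. Then $\widetilde{G}_1\times\cdots\times\widetilde{G}_k$ is simply connected and has Lie algebra $\mk{g}_1\oplus\cdots\oplus\mk{g}_k\cong\mk{g}$, so by uniqueness of the simply connected group attached to a Lie algebra we obtain $\widetilde{G}\cong\widetilde{G}_1\times\cdots\times\widetilde{G}_k$. Here I would invoke Weyl's theorem: a simply connected Lie group with compact semisimple (in particular, compact simple) Lie algebra is compact. Hence each $G_i:=\widetilde{G}_i$ is a compact simple group, their product is compact, and $\Gamma$, being a discrete --- hence closed --- subgroup of the compact group $\widetilde{G}$, is finite. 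Putting $Z:=\Gamma$ gives $G\cong(G_1\times\cdots\times G_k)/Z$ with $Z$ contained in the centre of $G_1\times\cdots\times G_k$, as required.

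The remaining points are routine bookkeeping with the Lie group--Lie algebra correspondence: that a product of simply connected groups is simply connected with Lie algebra the direct sum, that a discrete normal subgroup of a connected group is central, and that the relations $[\mk{g}_i,\mk{g}_j]=0$ match the product decomposition. The one genuinely nontrivial ingredient --- which I would cite rather than reprove --- is Weyl's compactness theorem for the universal cover; this is exactly where the hypothesis that the Killing form is negative definite does essential work. An alternative is to build the $G_i$ directly inside $G$ as the integral subgroups with $\mathrm{Lie}(G_i)=\mk{g}_i$ and use that, since the $G_i$ pairwise commute, multiplication $G_1\times\cdots\times G_k\to G$ is a surjective homomorphism with discrete kernel; but concluding compactness of the $G_i$ still requires the same input, and one must additionally argue that these integral subgroups are closed.
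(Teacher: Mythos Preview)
Your argument is correct and is the standard textbook route: decompose $\mk{g}$ into simple ideals, pass to the universal cover $\widetilde{G}\cong\widetilde{G}_1\times\cdots\times\widetilde{G}_k$, invoke Weyl's theorem for compactness of the factors, and identify $G$ as the quotient by the discrete (hence finite and central) kernel. The paper, however, does not supply a proof at all --- this Fact is simply stated as a known structural result, in parallel with the preceding Fact on the Lie-algebra side. So there is nothing to compare: your proof fills in what the paper leaves to the literature, and the one nontrivial external input you flag (Weyl's theorem) is indeed the point where compactness of the $G_i$ genuinely requires work.
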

A representation of a Lie group on a real vector space is a homomorphism $\Phi:G\rightarrow\mathrm{GL}_\RR(V)$, i.e. $\Phi$ satisfies $\Phi(g_1g_2)=\Phi(g_1)\Phi(g_2)$. A particularly important example is the adjoint representation of $G$ on $\mathfrak{g}$.
\begin{gather}\label{ad-rep}
\mathrm{Ad}:G\rightarrow\mathrm{Aut}(\mathfrak{g}),\,\,\mathrm{Ad}_g(X):=gXg^{-1}.
\end{gather}
The image of $\mathrm{Ad}_G$ is $\mathrm{Ad}_G=G/Z(G)$, where $Z(G)$ is the centre of $G$. For a semisimple compact Lie group $Z(G)$ is finite  by definition and therefore $\mathrm{Ad}$ is a finite covering homomorphism onto $G/Z(G)$. For a compact connected simple Lie groups the adjoint representation is irreducible.

The relation between the adjoint representations of a compact connected semisimple Lie group and its Lie algebra, $\mr{Ad}$ and $\mr{ad}$, follows from the fact that $\mr{Ad}$ is a smooth homomorphism. For any $X\in\mk{g}$ and all $t\in\RR$ elements $\mr{Ad}_{e^{tX}}$ form a one-parameter subgroup in $\mathrm{Aut}(\mathfrak{g})$ whose tangent vector at $t=0$ is $\mr{ad}_X$. As this group is uniquely determined by its tangent vector we have $\mathrm{Ad}_{e^{tX}}=e^{\mr{ad}_{tX}}$. Using this relation we easily see that the Killing form on $\mk{g}$ is invariant with respect to the adjoint action, i.e  $B(\mr{Ad}_g X,\mr{Ad}_g Y)=B(X,Y)$. Recall that for a compact semisimple $G$ the Killing form is an inner product (negative definite) and therefore $\mr{Ad}_g$ is an orthogonal matrix belonging to $SO(\mathfrak{g})$. After the choice of an orthonormal basis in $\mathfrak{g}$, using (\ref{ad-rep})  we can calculate entries of the matrix $\mr{Ad}_g$. It is easy to see that this  matrix belongs to $SO(\mr{dim}\;\mathfrak{g})$.

\subsection{Subgroups of a compact semisimple Lie group}\label{sec:sub}
Let $G$ be a Lie group. We say that $H\subset G$ is a discrete subgroup of $G$ if there is an open cover of $H$ such that every open set in this cover contains exactly one element from $H$ - we will call it a {\it discrete open cover} of $H$. If $G$ is compact every discrete subgroup is finite. To see this, assume that there is an infinite discrete subgroup $H$ in a compact $G$ and take the open cover of $G$ that is a union of the discrete open cover of $H$ and the open set which consists of elements not in this discrete cover. Then this cover is infinite and has no finite subcover, hence we get contradiction. By the similar argument any closed disconnected subgroup $H$ of a compact $G$ has finitely many connected components. The Lie algebra $\mk{h}$ of the identity component $H_e$ is a subalgebra of $\mk{g}$ and the exponential map is surjective onto $H_e$, however $\mathfrak{h}$ needs not to be semisimple. We distinguish three possible types of closed subgroups of the compact Lie group $G$: (1) finite discreet subgroup\textcolor{red}{s}, (2) disconnected subgroups with a finite number of connected components, (3) connected subgroups.

In this paper we consider groups that are generated by a finite number of elements from some compact semisimple Lie group $G$. More precisely for $\mc{S}=\{g_1,\ldots,g_k\}\subset G$ we consider the closure of 
\[
<\mc{S}>:=\left\{g_{i_1}^{k_1}\cdot\ldots\cdot g_{i_m}^{k_m}:g_{i_j}\in \mathcal{S},k_j\in \mathbb{N}, i_j\in\{1,\ldots,n\}\right\},
\]
which is a Lie subgroup of $G$ (see Fact \ref{liesub} for the proof). In particular we want to know when $\overline{<\mc{S}>}=G$. It is known that almost any two elements of $G$ generate a compact semisimple $G$. Moreover, as was shown by Kuranishi \cite{kuranishi} elements that are in a sufficiently small neighbourhood of $e$ generate $G$ if and only if their corresponding Lie algebra elements generate $\mathfrak{g}$. The proof is, however, not constructive. The author of \cite{field} shows that pairs generating $G$ form a Zariski open subset of $G\times G$. In our work we adopt and develop some of the ideas contained in \cite{kuranishi} and \cite{field} and this way obtain characterisation of sets $\mc{S}$ that generate groups $SU(d)$ or $SO(d)$. Moreover, our approach results with a simple algorithm that enables deciding the universality of any given set of gates. For the completeness we prove the following.
\begin{fact}\label{liesub}
The closure of $<\mc{S}>$ is a Lie group.
\end{fact}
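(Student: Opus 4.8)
The plan is to reduce the statement to the closed subgroup theorem (the theorem of Cartan cited above as \cite{nojman,Cartan}), which asserts that every closed subgroup of a Lie group is an embedded Lie subgroup, hence itself a Lie group. Since $\overline{<\mc{S}>}$ is closed in $G$ by construction, it suffices to show that $\overline{<\mc{S}>}$ is a subgroup of $G$; Cartan's theorem then supplies the smooth structure, and compactness of $G$ forces $\overline{<\mc{S}>}$ to be one of the three types of closed subgroups listed in Section \ref{sec:sub}.

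To see that $\overline{<\mc{S}>}$ is a subgroup, I would first note that it is a closed subsemigroup: writing $m:G\times G\to G$ for the (continuous) multiplication, one has $m(\overline{<\mc{S}>}\times\overline{<\mc{S}>})=m(\overline{<\mc{S}>\times<\mc{S}>})\subseteq\overline{m(<\mc{S}>\times<\mc{S}>)}\subseteq\overline{<\mc{S}>}$, using $\overline{A}\times\overline{B}=\overline{A\times B}$ for the product topology together with $<\mc{S}>\cdot<\mc{S}>\subseteq\,<\mc{S}>$. It also contains $e$. It remains to check that $\overline{<\mc{S}>}$ is stable under inversion. If one reads $<\mc{S}>$ as the group generated by $\mc{S}$ (so that $<\mc{S}>$ is already inversion-stable), this is immediate from continuity of the inversion map $\iota$, namely $\iota(\overline{<\mc{S}>})=\overline{\iota(<\mc{S}>)}=\overline{<\mc{S}>}$. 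If instead one uses the semigroup written in the displayed definition, one invokes compactness of $G$: for $h\in\overline{<\mc{S}>}$ the closure of $\{h^{n}:n\in\mathbb{N}\}$ is a compact commutative subsemigroup of $G$, and a compact subsemigroup of a group is a subgroup (some subsequence $h^{n_{j}}$ converges, so suitable powers of $h$ accumulate at $e$, whence $h^{-1}$ lies in that closure); thus $h^{-1}\in\overline{<\mc{S}>}$. Either way $\overline{<\mc{S}>}$ is a closed subgroup of $G$.

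Finally, applying the closed subgroup theorem to $\overline{<\mc{S}>}\subseteq G$ finishes the proof. I do not expect a genuine obstacle here: the substantive input is Cartan's theorem, which we quote, and everything else is elementary point-set topology. The only points that deserve a line of care are the bookkeeping around the product topology in the subsemigroup argument and, if one sticks to the literal semigroup definition, the short compactness argument promoting a compact subsemigroup to a subgroup.
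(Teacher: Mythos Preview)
Your proposal is correct and follows the same overall scheme as the paper: invoke Cartan's closed subgroup theorem, then verify that $\overline{<\mc{S}>}$ is a subgroup by checking closure under multiplication, containment of $e$, and stability under inversion.

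The only real difference lies in how you obtain inverses. The paper invokes the Dirichlet approximation theorem (Theorem~\ref{dn}) applied to the spectral angles of a generator $g$ to produce a sequence $g^{n_k}\to I$, and hence $g^{n_k-1}\to g^{-1}$. Your compactness argument---that in a compact group any closed subsemigroup is already a group, because some subsequence $h^{n_j}$ converges and so suitable differences of powers accumulate at $e$---achieves the same conclusion without appealing to Dirichlet. This is a bit more elementary and more general: it works for an arbitrary compact Lie group $G$, not just the matrix groups $SU(d)$, $SO(d)$ where spectral angles are available. The paper's choice is natural in context since Dirichlet is used heavily later (Section~5), but your route is cleaner for this particular fact.
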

\begin{proof}
By the theorem of Cartan \cite{Cartan,nojman} we know that a closed subgroup of a Lie group is a Lie group. The set $\overline{\mc<S>}$ is obviously closed and hence we are left with showing that it is has a group structure. By the construction $\mc{S}$ is invariant under multiplication and therefore $\overline{<\mc{S}>}$ has this property too. As a direct implication of Dirichlet approximation theorem (see theorem \ref{dn}), for every element $g\in\mc{S}$ there is a sequence $\{g^{n_k}\}$, such that $g^{n_k}\rightarrow I$ when $k\rightarrow \infty$. Thus $I\in \overline{<\mc{S}>}$. Note, however, that by the same argument the sequence $\{g^{n_k-1}\}\subset \mc{S}$ converges to $g^{-1}$. Thus $\overline{<\mc{S}>}$ has a group structure. The result follows.   

\end{proof}

In order to clarify the terminology, whenever we say the group generated by $\mc{S}$ we mean the compact Lie group $\overline{<\mc{S}>}$.
\section{Generating sets for compact semisimple Lie algebras and Lie groups}
We begin with some remarks concerning irreducible representations on real vector spaces that we will call irreducible real representations. The well known version of the Schur lemma states that a representation of a Lie group or a Lie algebra on a complex vector space (complex representation) is irreducible iff the only matrices that commute with all representation matrices are $\{\lambda I: \lambda \in \mathbb{C}\}$. In our paper the considered representations are irreducible real representations. A real irreducible representation can be of (1) real type, (2) complex type, or (3) quaternion type. The type of representation determines the structure of endomorphisms commuting with the representation matrices (see chapter II.6 of \cite{Dieck} for full discussion). The following theorem holds (theorem II.6.7 of \cite{Dieck})

\begin{fact}(Schur Lemma) 
For a real irreducible representation of (1) real, (2) complex, (3) quaternion type the algebra of endomorphisms commuting with the representation matrices if isomorphic to (1) $\mathbb{R}$, (2) $\mathbb{C}$, (3) $\mathbb{H}$, respectively, where $\mathbb{H}$ stands for Hamilton quaternions. 
\end{fact}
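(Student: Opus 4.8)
The plan is to reduce the statement to two classical ingredients: the Schur argument showing that the commutant of an irreducible representation is a division algebra, and the Frobenius theorem classifying finite-dimensional real division algebras.

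First I would show that the algebra $D$ of all $\RR$-linear endomorphisms of $V$ commuting with every representation matrix is a finite-dimensional associative division algebra over $\RR$. It is an associative $\RR$-algebra with unit $\mathrm{id}_V$, and, $V$ being finite-dimensional, it is a linear subspace of the finite-dimensional space $\mathrm{End}_\RR(V)$, hence finite-dimensional. For the division property, take $T\in D$, $T\neq 0$. If $R$ is any operator in the image of the representation, then $RT=TR$ forces $R(\ker T)\subseteq\ker T$ and $R(\im T)\subseteq\im T$, so $\ker T$ and $\im T$ are invariant subspaces of $V$; by irreducibility each is $0$ or $V$, and since $T\neq 0$ we conclude $\ker T=0$ and $\im T=V$, i.e. $T$ is invertible. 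Multiplying $RT=TR$ on the left and on the right by $T^{-1}$ gives $T^{-1}R=RT^{-1}$, so $T^{-1}\in D$. Thus every nonzero element of $D$ is invertible.

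Next I would invoke the Frobenius theorem: up to isomorphism $\RR$, $\mathbb{C}$ and $\mathbb{H}$ are the only finite-dimensional associative division algebras over $\RR$. Hence $D$ is isomorphic to exactly one of these three, and this trichotomy is, by definition, the real, complex, resp. quaternionic type of $V$. If instead one defines the type through the complexification $V_{\mathbb C}=V\otimes_\RR\mathbb{C}$ --- namely $V$ is of real/complex/quaternionic type according as $V_{\mathbb C}$ is irreducible, a sum of two non-isomorphic conjugate irreducibles, or twice a self-conjugate irreducible --- one adds the standard identification of the commutant of $V_{\mathbb C}$ with $D\otimes_\RR\mathbb{C}$, together with $\RR\otimes_\RR\mathbb{C}\cong\mathbb{C}$, $\mathbb{C}\otimes_\RR\mathbb{C}\cong\mathbb{C}\oplus\mathbb{C}$ and $\mathbb{H}\otimes_\RR\mathbb{C}\cong M_2(\mathbb{C})$; reading off the commutant of $V_{\mathbb C}$ in each case recovers exactly the three types.

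The Schur step is purely formal --- beyond the usual complex Schur lemma it contributes only the observation that $D$ is closed under inverses --- and the bookkeeping with complexifications is routine linear algebra. The one genuinely external input is the Frobenius classification, so that is where the substantive content of the statement sits; everything else amounts to unwinding the definitions of the three types.
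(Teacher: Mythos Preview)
Your argument is correct and is the standard proof: the Schur step shows the commutant $D$ is a finite-dimensional real division algebra, and Frobenius then forces $D\cong\RR$, $\mathbb{C}$, or $\mathbb{H}$, which matches the definition of the three types. The optional paragraph on complexifications is also accurate and correctly aligns the two common definitions of type.

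There is nothing to compare against in the paper itself: the paper does not prove this Fact but simply quotes it as Theorem~II.6.7 of Br\"ocker and tom Dieck. So your proposal supplies a proof where the paper only gives a reference; the approach you take is exactly the one found in that reference (and in most treatments of real representation theory), so nothing is lost or gained methodologically.
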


Next we show that the adjoint representation for a compact simple Lie group/algebra is of the real type. Using Table II.6.2 and Propositions II.6.3 of \cite{Dieck} it suffices to show that its complexification is of the real type. On the other hand, by Proposition II.6.4 it reduces to showing that the complexfication $\mk{g}^\mathbb{C}$ of a compact simple Lie algebra $\mathfrak{g}$  posseses a symmetric, non-degenerate and $\mr{Ad}_G$-invariant form. To this end we define the Killing form on $\mathfrak{g}^\mathbb{C}$ in the analogues way as in $\mathfrak{g}$, i.e. $B_{\mk{g}^\mathbb{C}}(X_1,X_2)=\mathrm{tr}(\mathrm{ad}_{X}\circ \mathrm{ad}_{Y})$, $X,Y\in\mathfrak{g}^\mathbb{C}$. Note that a basis of $\mathfrak{g}$ over $\mathbb{R}$ is a basis of $\mathfrak{g}^\mathbb{C}$ over $\mathbb{C}$. Thus $B_{\mk{g}^\mathbb{C}}$ is a non-degenerate symmetric $\mathrm{Ad}_G$-invariant form as the Killing form for $\mk{g}$ is such. Hence:
\[
\mc{C}(\mr{ad}_\mk{g})=\{\lambda I: \lambda \in \mathbb{R}\}=\mathcal{C}({\mr{Ad}_G}).
\]

\subsection{Generating sets for compact semisimple Lie algebras}\label{algebra-gen}
In this section $\mathfrak{g}$ will denote a compact semisimple Lie algebra. Let $\mathcal{X}=\{X_1,\ldots,X_n\}\subset \mathfrak{g}$. We say that $\mc{X}$ generates $\mk{g}$ if any element of $\mk{g}$ can be written as a finite linear combination of $X_i$'s and finitely nested commutators of $X_i$'s:
\[
\sum_{i}\alpha_iX_i+\sum_{i,j}\alpha_{i,j}[X_i,X_j]+\ldots.
\]
Our aim is to provide a general criterion that uses the adjoint representation of compact semisimple Lie algebras to verify when $\mc{X}\subset\mk{g}$ generates $\mk{g}$. This problem has been studied over the years and there are many other approaches that do not use the adjoint representation. It is also  an important question in to control theory as it plays central role in controllability of certain dynamical systems \cite{albertini,brockett,jurdjevic}. The corresponding conditions are known as the so-called {\it Lie algebra rank condition} \cite{brockett,jurdjevic}.The more recent conditions that are in the spirit of what we will present in Lemma \ref{ad2} include \cite{zeier1, zeier} and in particular \cite{daniel} where the problem for compact Lie algebras is studied. As we will see in the next section conditions for generation of Lie algebras are too weak when one considers generation of Lie groups. Thus this section plays a marginal role for the rest of the paper (excluding Theorem \ref{thm-alg}). The main purpose of this section is to give evidence that the adjoint representation can be useful in deciding both Lie algebras and Lie groups generation problem.

Let $\mc{C}(\mr{ad}_\mk{g})=\left\{L\in\mr{End}(\mk{g}): \forall X\in\mk{g}\,\left[\mr{ad}_X,L\right]=0\right\}$ denotes the space of endomorphisms of $\mathfrak{g}$ that commute with all $\mr{ad}_X$, $X\in\mk{g}$. By the Jacobi identity $\mc{C}(\mr{ad}_\mk{g})$ is a Lie subalgebra of $\mr{End}(\mk{g})$. Moreover, also by Jacobi identity, if $L\in\mr{End}(\mk{g})$ commutes with $\mr{ad}_X$ and $\mr{ad}_Y$ then it also commutes with $\mr{ad}_{\alpha X+\beta Y}$ and $\mr{ad}_{[X,Y]}$. Let us denote by $\mc{C}(\mr{ad}_{\mc{X}})$ the solution set of 
\[
\left[\mr{ad}_{X_1},\cdot\right]=0,\ldots,\left[\mr{ad}_{X_n},\cdot\right]=0.
\]
It is clear that if $\mc{X}$ generates $\mk{g}$, then $\mc{C}(\mr{ad}_\mk{g})=\mc{C}(\mr{ad}_\mc{X})$. It happens that the converse is true for semisimple Lie algebras. Let next $\mk{g}=\mk{g}_1\oplus\ldots\oplus\mk{g}_k$ be a decomposition of a semisimple $\mk{g}$ into simple ideals. Let $\mc{X}=\{X_1,\ldots,X_n\}\subset\mk{g}$. Every $X_i\in\mc{X}$ has a unique decomposition:
\[
X_i=X_{i,1}+\ldots + X_{i,k}, \,\mr{where}\,X_{i,j}\in\mk{g}_j.
\]
Therefore $\mc{X}$ generates $\mk{g}$ if every set $\mc{X}_i=\{X_{1,i},\ldots,X_{n,i}\}$ generates $\mk{g}_i$, $i\in\{1,\ldots,k\}$. Note that if the projection of $\mc{X}$ onto some simple component of $\mk{g}$ is zero than $\mc{X}$ cannot generate and  $\mk{g}$ and $\mc{C}(\mr{ad}_\mk{g})\neq \mc{C}(\mr{ad}_\mc{X})$. Thus the equality  $\mc{C}(\mr{ad}_\mk{g})=\mc{C}(\mr{ad}_\mc{X})$ implies that $\mc{X}$ has nonzero intersection with every simple component of $\mk{g}$.
\begin{lem}\label{ad2}
Let $\mathfrak{g}$ be a compact semisimple Lie algebra and $\mathcal{X}=\{X_1,\ldots,X_n\}\subset \mathfrak{g}$ its finite subset. $\mc{X}$ generates $\mk{g}$ \textbf{if and only if} $\mc{C}(\mr{ad}_\mk{g})=\mc{C}(\mr{ad}_\mc{X})$.
\end{lem}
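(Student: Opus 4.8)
The plan is to prove the nontrivial direction: assuming $\mc{C}(\mr{ad}_\mk{g})=\mc{C}(\mr{ad}_\mc{X})$, we must show $\mc{X}$ generates $\mk{g}$. Let $\mk{h}\subset\mk{g}$ denote the subalgebra generated by $\mc{X}$ (all finite linear combinations of iterated commutators of the $X_i$'s). The key structural observation, already flagged in the paragraph preceding the lemma, is that any $L\in\mr{End}(\mk{g})$ commuting with $\mr{ad}_{X_i}$ and $\mr{ad}_{X_j}$ also commutes with $\mr{ad}_{\alpha X_i+\beta X_j}$ and with $\mr{ad}_{[X_i,X_j]}$, by the Jacobi identity. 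Iterating, $\mc{C}(\mr{ad}_\mc{X})=\mc{C}(\mr{ad}_\mk{h})$, i.e. an endomorphism commutes with all $\mr{ad}_X$, $X\in\mc{X}$, if and only if it commutes with all $\mr{ad}_Y$, $Y\in\mk{h}$. So the hypothesis becomes $\mc{C}(\mr{ad}_\mk{g})=\mc{C}(\mr{ad}_\mk{h})$, and the goal is to deduce $\mk{h}=\mk{g}$.

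First I would reduce to the simple case. Write $\mk{g}=\mk{g}_1\oplus\cdots\oplus\mk{g}_k$ into simple ideals. As noted before the lemma, the hypothesis forces the projection of $\mc{X}$ (equivalently of $\mk{h}$) onto each $\mk{g}_j$ to be nonzero; moreover the projection $\mk{h}_j$ of $\mk{h}$ onto $\mk{g}_j$ is a subalgebra of $\mk{g}_j$. The orthogonal projection $P_j:\mk{g}\to\mk{g}_j$ (with respect to the Killing form) lies in $\mc{C}(\mr{ad}_\mk{g})$, hence in $\mc{C}(\mr{ad}_\mk{h})$, so $\mk{h}$ is stable under each $P_j$, which gives $\mk{h}=\mk{h}_1\oplus\cdots\oplus\mk{h}_k$ with $\mk{h}_j\subset\mk{g}_j$ a nonzero subalgebra. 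One then checks that $\mc{C}(\mr{ad}_\mk{h})$ contains $\bigoplus_j \mc{C}(\mr{ad}_{\mk{h}_j}\text{ acting on }\mk{g}_j)$; combined with $\mc{C}(\mr{ad}_\mk{g})=\bigoplus_j \RR\,\mr{id}_{\mk{g}_j}$ (since each $\mk{g}_j$ is simple, so its adjoint representation is real-type irreducible, by the Schur-lemma discussion above), the equality $\mc{C}(\mr{ad}_\mk{g})=\mc{C}(\mr{ad}_\mk{h})$ forces $\mc{C}(\mr{ad}_{\mk{h}_j})=\RR\,\mr{id}_{\mk{g}_j}$ acting on $\mk{g}_j$, for each $j$. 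Thus it suffices to treat a single simple $\mk{g}$ with a nonzero subalgebra $\mk{h}$ such that the only endomorphisms of $\mk{g}$ commuting with $\mr{ad}_Y$ for all $Y\in\mk{h}$ are the scalars.

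For the simple case I would use the ideal/orthogonal-complement machinery already developed in the Preliminaries. Consider $\mk{h}$ acting on $\mk{g}$ via $\mr{ad}$. Decompose $\mk{g}$ into $\mr{ad}_\mk{h}$-irreducible subspaces; because $B$ is $\mr{ad}_\mk{h}$-invariant and negative definite, this decomposition is orthogonal, and the corresponding isotypic projections commute with all $\mr{ad}_Y$, $Y\in\mk{h}$, hence are scalars by hypothesis — so $\mk{g}$ is in fact $\mr{ad}_\mk{h}$-\emph{irreducible}. Now $\mk{h}$ itself is an $\mr{ad}_\mk{h}$-invariant subspace of $\mk{g}$ (since $\mk{h}$ is a subalgebra), and it is nonzero, so irreducibility gives $\mk{h}=\mk{g}$. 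This is the crux of the argument, and I expect the main obstacle to be handling the real-representation-type subtleties cleanly: a priori the commutant of an irreducible real representation can be $\RR$, $\mathbb{C}$, or $\mathbb{H}$, so the step ``the isotypic projections are scalars'' must be phrased carefully (one only needs that \emph{projections} onto distinct isotypic/irreducible summands are idempotents in the commutant, and a commutant isomorphic to $\RR$, $\mathbb{C}$ or $\mathbb{H}$ — a division algebra — has no idempotents other than $0$ and $1$, which already forces irreducibility without needing the commutant to be exactly $\RR$). Once irreducibility of $\mr{ad}_\mk{h}$ on $\mk{g}$ is established, the conclusion $\mk{h}=\mk{g}$ is immediate, and assembling the simple pieces via the first paragraph completes the proof; the forward implication $\mc{X}$ generates $\mk{g}\Rightarrow\mc{C}(\mr{ad}_\mk{g})=\mc{C}(\mr{ad}_\mc{X})$ is the trivial direction already observed in the text.
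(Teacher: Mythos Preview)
Your reduction to the simple case contains a genuine gap. The inference ``$P_j\in\mc{C}(\mr{ad}_\mk{h})$, so $\mk{h}$ is stable under each $P_j$'' is not valid: the statement $P_j\,\mr{ad}_Y=\mr{ad}_Y\,P_j$ for all $Y\in\mk{h}$ says nothing about $P_j(\mk{h})\subset\mk{h}$. Concretely, take $\mk{g}=\mk{g}_1\oplus\mk{g}_2$ with $\mk{g}_1\simeq\mk{g}_2$ and $\mk{h}$ the diagonal copy; then $P_1$ commutes with every $\mr{ad}_Y$, $Y\in\mk{h}$, yet $P_1(\mk{h})=\mk{g}_1\not\subset\mk{h}$. (In that example the hypothesis of the lemma happens to fail, but the point is that your implication is a non sequitur.) Without the identity $\mk{h}=\bigoplus_j\mk{h}_j$, the remainder of your argument only yields $\mk{h}_j=\mk{g}_j$ for each $j$, i.e.\ $\bigoplus_j\mk{h}_j=\mk{g}$, which is trivially true and strictly weaker than $\mk{h}=\mk{g}$.

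The paper bypasses the reduction and applies directly, in the semisimple setting, exactly the idea you use in your simple-case step. Namely, the Killing-orthogonal projection $P:\mk{g}\to\mk{h}$ lies in $\mc{C}(\mr{ad}_\mk{h})$ (because $\mk{h}$ is a subalgebra, so both $\mk{h}$ and $\mk{h}^\perp$ are $\mr{ad}_\mk{h}$-invariant via the invariance of $B$); by the hypothesis $P\in\mc{C}(\mr{ad}_\mk{g})$, whence $\mk{h}$ is $\mr{ad}_\mk{g}$-invariant, i.e.\ an ideal of $\mk{g}$. Ideals of a compact semisimple $\mk{g}$ are sums of its simple components, and since the hypothesis forces $\mk{h}$ to project nontrivially onto every component, $\mk{h}=\mk{g}$. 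In short, your simple-case argument is correct and is essentially the paper's argument; the fix is to run it once with $P_\mk{h}$ in place of the $P_j$'s, which makes the reduction step unnecessary.
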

\begin{proof}
Let $n$ be the number of components of $\mk{g}$ and let us denote by $\mk{h}\subset\mk{g}$ the Lie algebra generated by $\mc{X}$. Assume that $\mk{h}\neq\mk{g}$ but $\mc{C}(\mr{ad}_\mk{g})=\mc{C}(\mr{ad}_\mc{X})$. The equality of commutants implies that $\mk{h}$ has nonzero intersection with every simple component of $\mk{g}$. Using the Killing form we can decompose $\mk{g}$ into a direct product of vector spaces (not necessarily Lie algebras), $\mk{g}=\mk{h}\oplus\mk{h}^\perp$. For any $X\in\mk{h}$, $Y\in \mk{h}$ and $Z\in\mk{h}^\perp$ we have $\mr{ad}_XY\in\mk{h}$ and $\mr{ad}_XZ\in \mr{h}^\perp$. The latter is true as $B(\mr{ad}_XZ,Y)=-B(Z,\mr{ad}_XY)=0$, for any $Y\in\mk{h}$. Therefore, for $X\in\mk{h}$ operators $\mr{ad}_X$ respect the decomposition $\mk{g}=\mk{h}\oplus\mk{h}^\perp$ and have a block diagonal structure:
\begin{gather}
\mr{ad}_X=\left(\begin{array}{cc}
\mr{ad}_{X}\big |_{\mk{h}}&0\\ 
0&\mr{ad}_{X}\big |_{\mk{h}^\perp}\end{array} \right).
\end{gather}
Let $P:\mk{g}\rightarrow\mk{h}$ be the orthogonal, with respect to the Killing form, projection operator onto $\mk{h}$. Then obviously $[P,\mr{ad}_X]=0$ for any $X\in\mk{h}$. Note, however, that if $P$ belonged to $\mc{C}(\mr{ad}_\mk{g})$ then $\mk{h}$ would be an ideal of $\mk{g}$. But the only ideals of $\mk{g}$ are direct sums of its simple components. Thus $\mk{h}$ is either  $\mk{g}$ which is a contradiction or $\mk{h}$ is a direct sum of $k< n$ simple components of $\mk{g}$ which is again a contradiction.
\end{proof}

Using the Schur lemma we obtain:
\begin{coll}\label{ad1}
Let $\mathfrak{g}$ be a compact simple Lie algebra and $\mathcal{X}=\{X_1,\ldots,X_n\}\subset \mathfrak{g}$ be its finite subset. $\mc{X}$ generates $\mk{g}$ \textbf{if and only if} $\mc{C}(\mr{ad}_\mk{g})=\{\lambda I:\lambda\in\RR\}$.
\end{coll}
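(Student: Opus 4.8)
The plan is to read off Corollary \ref{ad1} as the special case of Lemma \ref{ad2} in which the compact semisimple Lie algebra has a single simple summand. A compact simple Lie algebra $\mk{g}$ is, by definition, nonabelian with no ideals besides $0$ and $\mk{g}$; in particular it is compact semisimple, and its decomposition into simple ideals is the trivial one $\mk{g}=\mk{g}$. Thus Lemma \ref{ad2} applies directly and gives: $\mc{X}$ generates $\mk{g}$ if and only if $\mc{C}(\mr{ad}_\mk{g})=\mc{C}(\mr{ad}_\mc{X})$. The bookkeeping step in the proof of Lemma \ref{ad2} --- that equality of commutants forces $\mc{X}$ to meet every simple component --- is vacuous here, so no argument about orthogonal projections onto components is needed in this case.

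It remains only to evaluate $\mc{C}(\mr{ad}_\mk{g})$. Here I would invoke the material preceding Section \ref{algebra-gen}: since $\mk{g}$ is simple, any $\mr{ad}_\mk{g}$-invariant subspace is an ideal, hence $0$ or $\mk{g}$, so the adjoint representation is irreducible; and it was shown there that the complexification $\mk{g}^{\mathbb C}$ carries a symmetric, nondegenerate, $\mr{Ad}_G$-invariant form (the Killing form on $\mk{g}^{\mathbb C}$), which by the criterion in Chapter II.6 of \cite{Dieck} makes the adjoint representation of real type. By the Schur lemma for real irreducible representations, its commutant is therefore $\{\lambda I:\lambda\in\RR\}$, that is, $\mc{C}(\mr{ad}_\mk{g})=\{\lambda I:\lambda\in\RR\}$. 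Substituting this identity into the equivalence coming from Lemma \ref{ad2} turns the condition $\mc{C}(\mr{ad}_\mk{g})=\mc{C}(\mr{ad}_\mc{X})$ into $\mc{C}(\mr{ad}_\mc{X})=\{\lambda I:\lambda\in\RR\}$, which is the content of the corollary.

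I do not anticipate a genuine obstacle: the statement is a corollary in the strict sense, and every ingredient --- Lemma \ref{ad2}, irreducibility of the adjoint representation of a simple Lie algebra, and the real-type computation via the complexified Killing form --- is already in place. The only point deserving care is making explicit that for simple $\mk{g}$ the left-hand side $\mc{C}(\mr{ad}_\mk{g})$ is just the scalars, so that the criterion collapses to a statement purely about the commutant of the finite set $\mc{X}$; once that is spelled out, the proof is a one-line substitution.
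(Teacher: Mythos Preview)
Your proposal is correct and follows exactly the paper's approach: the paper derives the corollary in one line (``Using the Schur lemma we obtain:'') by specializing Lemma \ref{ad2} to the simple case and invoking the earlier computation that the adjoint representation of a compact simple Lie algebra is of real type, so $\mc{C}(\mr{ad}_\mk{g})=\{\lambda I:\lambda\in\RR\}$. You have also correctly unpacked that the condition in the corollary is really on $\mc{C}(\mr{ad}_\mc{X})$ (the printed $\mc{C}(\mr{ad}_\mk{g})$ is a typo, as it would otherwise be a tautology independent of $\mc{X}$).
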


Finally let us remark that it is very important to consider not a defining but the adjoint representation. To see this let $X_1,X_2$ be two matrices that generate $\mk{su}(2)$ and consider the set $\mc{X}=\left\{X_1\otimes I,X_2\otimes I, I\otimes X_1,I\otimes X_2\right\}\subset \mk{su}(4)$. Note that the Lie algebra generated by $\mc{X}$ is $\mathfrak{su}(2)\oplus\mk{su}(2)\subset\mk{su}(4)$. One checks by direct calculations that the only $4\times 4$ matrix commuting with $\mc{X}$ is proportional to the identity. This is, however, not the case for matrices $\mr{ad}_X$, $X\in\mc{X}$. Hance changing the adjoint representation in Corollary \ref{ad1} into the defining one would result in the equality between $\mk{su}(2)\oplus\mk{su}(2)$ and $\mk{su}(4)$ which is of course not true.

\subsection{Generating sets for compact semisimple Lie groups}

We are interested in the the following problem. Let $G$ be a compact connected semisimple Lie group and let $\mathcal{S}=\left\{g_1,\ldots,g_n\right\}\subset G$. We want to know when $\overline{<\mc{S}>}=G$. To this end we use adjoint representation. 


Let $\mc{C}(\mr{Ad}_G)=\left\{L\in\mr{End}(\mk{g}): \forall g\in G\,\left[\mr{Ad}_g,L\right]=0\right\}$ denote the space of endomorphisms of $\mathfrak{g}$ that commute with all $\mr{Ad}_g$, $g\in G$. By the Jacobi identity $\mc{C}(\mr{Ad}_G)$ is a Lie subalgebra of $\mr{End}(\mk{g})$. Moreover, if $L\in\mr{End}(\mk{g})$ commutes with $\mr{Ad}_g$ and $\mr{Ad}_h$ then it also commutes with $\mr{Ad}_{gh}$. Let us denote by $\mc{C}(\mr{Ad}_{\mc{S}})$ the solution set of 
\[
\left[\mr{Ad}_{g_1},\cdot\right]=0,\ldots,\left[\mr{Ad}_{g_n},\cdot\right]=0.
\]
It is clear that if $\mc{S}$ generates $G$ then $\mc{C}(\mr{Ad}_G)=\mc{C}(\mr{Ad}_\mc{S})$. It happens that with some additional assumptions the converse is true for semisimple Lie groups.
\begin{lem}\label{Ad2}
Let $G$ be a compact connected semisimple Lie group and $\mathcal{S}=\{g_1,\ldots,g_n\}\subset G$ its finite subset such that $<\mc{S}>$ is infinite and the projection of $<\mc{S}>$ onto every simple component of $G$ is also infinite. $\mc{S}$ generates $G$ \textbf{if and only if} $\mc{C}(\mr{Ad}_G)=\mc{C}(\mr{Ad}_\mc{S})$.
\end{lem}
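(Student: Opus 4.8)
The plan is to follow the template of Lemma \ref{ad2} but lift everything from the Lie algebra to the Lie group level, the extra input being the two infiniteness hypotheses, which are exactly what is needed to control the identity component of $\overline{<\mc{S}>}$. The forward implication is trivial, as already noted: if $\mc{S}$ generates $G$ then $\mr{Ad}_\mc{S}$ and $\mr{Ad}_G$ have the same image up to closure (since $\mr{Ad}$ is continuous), hence the same commutant. So I would concentrate on the converse.

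Assume $\mc{C}(\mr{Ad}_G)=\mc{C}(\mr{Ad}_\mc{S})$ and let $H=\overline{<\mc{S}>}$, a compact Lie subgroup of $G$ by Fact \ref{liesub}, with identity component $H_e$ and Lie algebra $\mk{h}\subset\mk{g}$. First I would argue that $\mk{h}$ is nonzero and in fact an ideal of $\mk{g}$: since $<\mc{S}>$ is infinite and has finitely many connected components (compactness, Section \ref{sec:sub}), $H_e$ is a positive-dimensional connected subgroup, so $\mk{h}\neq 0$. Because $H=\overline{<\mc{S}>}$ is stable under conjugation only by its own elements, not all of $G$, I cannot immediately say $\mk{h}$ is an ideal; instead I would use the commutant hypothesis directly. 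Any $L$ commuting with all $\mr{Ad}_g$, $g\in<\mc{S}>$, commutes with all $\mr{Ad}_h$, $h\in H$ by continuity; conversely $\mc{C}(\mr{Ad}_H)\subset \mc{C}(\mr{Ad}_\mc{S})=\mc{C}(\mr{Ad}_G)$, so $\mc{C}(\mr{Ad}_H)=\mc{C}(\mr{Ad}_G)$. Now $\mk{h}$ is $\mr{Ad}_H$-invariant, and differentiating, $\mk{h}$ is an ideal of itself and $\mr{ad}_{\mk{h}}$-invariant; the key point is that $\mk{h}$ is also $\mr{Ad}_{H}$-invariant hence its orthogonal complement $\mk{h}^\perp$ (w.r.t. the Killing form, which is $\mr{Ad}$-invariant) is $\mr{Ad}_H$-invariant too, so the orthogonal projection $P:\mk{g}\to\mk{h}$ lies in $\mc{C}(\mr{Ad}_H)=\mc{C}(\mr{Ad}_G)$. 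As in the proof of Lemma \ref{ad2}, an element of $\mc{C}(\mr{Ad}_G)$ that is an idempotent projection forces its image to be an ideal of $\mk{g}$ (the image is $\mr{Ad}_G$-invariant, equivalently $\mr{ad}_\mk{g}$-invariant), so $\mk{h}$ is a direct sum of simple components of $\mk{g}$.

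It remains to rule out $\mk{h}$ being a proper sub-sum, i.e. to show $\mk{h}=\mk{g}$. This is where the second hypothesis enters: the projection of $<\mc{S}>$ onto every simple component $G_i$ is infinite, so by the previous paragraph applied componentwise (or by noting that $\mk{h}$ projects onto a nonzero, hence by simplicity full, ideal in each $\mk{g}_i$) we get $\mk{g}_i\subset\mk{h}$ for every $i$, whence $\mk{h}=\mk{g}$ and $H_e=G$ (both connected with the same Lie algebra, $G$ connected). Finally $H$ is a closed subgroup containing the connected group $G=H_e$, and since $G$ is already all of $G$, $H=G$, i.e. $\overline{<\mc{S}>}=G$.

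I expect the main obstacle to be the step showing $P\in\mc{C}(\mr{Ad}_G)$ rather than merely $P\in\mc{C}(\mr{Ad}_H)$: one must be careful that $\mk{h}$, defined as the Lie algebra of $\overline{<\mc{S}>}$, is genuinely $\mr{Ad}_H$-invariant as a subspace of $\mk{g}$ (this is automatic since $H_e\triangleleft H$), and then that the chain $\mc{C}(\mr{Ad}_H)\subset\mc{C}(\mr{Ad}_\mc{S})=\mc{C}(\mr{Ad}_G)\subset\mc{C}(\mr{Ad}_H)$ collapses — the last inclusion because $H\supset<\mc{S}>$ might a priori be larger than any single-component behaviour would suggest, but $\mr{Ad}_G$-commuting trivially implies $\mr{Ad}_H$-commuting. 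A secondary subtlety is handling the semisimple (non-simple) case cleanly: one should verify that "projection of $<\mc{S}>$ onto $G_i$ infinite" correctly translates, via the finite cover $G_1\times\cdots\times G_k\to G$, into "$\mk{h}$ meets $\mk{g}_i$ nontrivially", so that simplicity of $\mk{g}_i$ finishes the argument. The infiniteness assumptions are essential and not cosmetic: without them $H_e$ could be trivial (finite $<\mc{S}>$) while the commutants still coincide, exactly as the $\mk{su}(2)\oplus\mk{su}(2)$-type examples show on the group level.
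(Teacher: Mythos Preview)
Your proposal is correct and follows essentially the same route as the paper: form $H=\overline{<\mc{S}>}$, let $\mk{h}=\mathrm{Lie}(H_e)$, observe that both $\mk{h}$ and $\mk{h}^\perp$ are $\mr{Ad}_H$-invariant so the orthogonal projection $P$ onto $\mk{h}$ lies in $\mc{C}(\mr{Ad}_H)\subset\mc{C}(\mr{Ad}_\mc{S})=\mc{C}(\mr{Ad}_G)$, whence $\mk{h}$ is $\mr{Ad}_G$-invariant and therefore a direct sum of simple ideals of $\mk{g}$, and finally use the componentwise infiniteness to rule out a proper sub-sum. One small imprecision: in your parenthetical ``$\mk{h}$ projects onto a nonzero, hence by simplicity full, ideal in each $\mk{g}_i$'', the projection of an arbitrary subalgebra need not be an ideal; what actually does the work is that you have \emph{already} shown $\mk{h}$ is a sum of simple components, so its projection onto $\mk{g}_i$ is either $0$ or $\mk{g}_i$, and the infiniteness of $\pi_i(<\mc{S}>)$ forces $d\pi_i(\mk{h})\neq 0$.
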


\begin{proof}
Let us denote by $H$ the closure of the group generated by $\mc{S}$, i.e. $H=\overline{<\mc{S}>}$. $H$ is a compact Lie group that contains infinite number of elements. Let $H_e$ be the identity component of $H$. As we know $H_e$ is a normal subgroup of $H$. Let $\mk{h}\subset\mk{g}$ be the Lie algebra of $H_e$ and let $n$ be the number of simple components of $\mk{g}=Lie(G)$. Under our assumption $\mk{h}$ has nonzero intersection with every simple component of $\mk{g}$. Assume that $\mk{h}\neq\mk{g}$ but $\mc{C}(\mr{Ad}_G)=\mc{C}(\mr{Ad}_\mc{S})$. Using the Killing form we can decompose $\mk{g}$ into a direct product of vector spaces (not necessarily Lie algebras), $\mk{g}=\mk{h}\oplus\mk{h}^\perp$. For any $g\in H$, $X\in \mk{h}$ and $Y\in\mk{h}^\perp$ we have $\mr{Ad}_gY\in\mk{h}$ and $\mr{Ad}_gY\in \mr{h}^\perp$. The latter is true as $B(\mr{Ad}_gY,X)=B(Y,\mr{Ad}_{g^{-1}}X)=0$, for any $X\in\mk{h}$. Therefore, for $h\in H$ the operators $\mr{Ad}_h$ respect the decomposition $\mk{g}=\mk{h}\oplus\mk{h}^\perp$ and have a block diagonal structure:
\begin{gather}
\mr{Ad}_h=\left(\begin{array}{cc}
\mr{Ad}_{h}\big |_{\mk{h}}&0\\ 
0&\mr{Ad}_{h}\big |_{\mk{h}^\perp}\end{array} \right).
\end{gather}
Let $P:\mk{g}\rightarrow\mk{h}$ be the orthogonal projection with respect to the Killing form onto $\mk{h}$. Then obviously $[P,\mr{Ad}_h]=0$ for any $h\in H$. Note, however, that if $P$ belonged to $\mc{C}(\mr{Ad}_G)$ then $\mk{h}$ would be  $\mr{Ad}_G$ invariant subspace of $\mk{g}$. But the only $\mr{Ad}$-invariant subspaces of $\mathfrak{g}$ are simple components of $\mk{g}$. Hence either $\mathfrak{h}=\mk{g}$ which is a contradiction or $\mathfrak{h}$ is a direct sum of $k<n$ simple components of $\mk{g}$ which again is a contradiction as $\mk{h}$ has nonzero intersection with all $n$ simple components.
\end{proof}

Using the Schur lemma we obtain:
\begin{coll}\label{Ad1}
Let $G$ be a compact connected simple Lie group and $\mathcal{S}=\{g_1,\ldots,g_n\}$ its finite subset. Assume $<\mc{S}>$ is infinite. The set $\mc{S}$ generates $G$ if and only if $\mc{C}(\mr{Ad}_G)=\{\lambda I:\lambda\in\RR\}$.
\end{coll}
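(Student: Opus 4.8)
The plan is to deduce Corollary \ref{Ad1} directly from Lemma \ref{Ad2} by feeding it the Schur Lemma for the adjoint representation. First I would observe that when $G$ is a compact connected \emph{simple} Lie group, it has exactly one simple component, namely $\mk{g}$ itself, so the extra hypothesis in Lemma \ref{Ad2} that ``the projection of $<\mc{S}>$ onto every simple component of $G$ is infinite'' collapses to the single hypothesis that $<\mc{S}>$ is infinite. Thus the hypotheses of Lemma \ref{Ad2} are exactly met as soon as we assume $<\mc{S}>$ is infinite, and no work is needed to verify the projection condition.

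Next I would invoke the discussion preceding this subsection, where it was shown that the adjoint representation of a compact simple Lie group is of real type: the complexified Killing form $B_{\mk{g}^\mathbb{C}}$ is a symmetric, non-degenerate, $\mr{Ad}_G$-invariant bilinear form on $\mk{g}^\mathbb{C}$, which by Propositions II.6.3--II.6.4 of \cite{Dieck} forces the real type. Combined with the irreducibility of the adjoint representation of a compact simple Lie group (stated in the Preliminaries), the Schur Lemma (Fact following the definition of real/complex/quaternion type) then gives that the algebra of endomorphisms of $\mk{g}$ commuting with all $\mr{Ad}_g$, $g\in G$, is isomorphic to $\RR$, i.e.
\[
\mc{C}(\mr{Ad}_G)=\{\lambda I:\lambda\in\RR\}.
\]

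Having identified $\mc{C}(\mr{Ad}_G)$ explicitly, the corollary is now immediate: by Lemma \ref{Ad2}, $\mc{S}$ generates $G$ if and only if $\mc{C}(\mr{Ad}_G)=\mc{C}(\mr{Ad}_\mc{S})$, and by the Schur computation the left-hand side equals $\{\lambda I:\lambda\in\RR\}$, so this is the same as requiring $\mc{C}(\mr{Ad}_\mc{S})=\{\lambda I:\lambda\in\RR\}$. Substituting this identification into the statement of Lemma \ref{Ad2} yields precisely the claim of Corollary \ref{Ad1}.

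There is essentially no hard step here; the corollary is a packaging of Lemma \ref{Ad2} with the real-type/Schur analysis already carried out. The only point that requires a moment's care is confirming that the auxiliary infiniteness hypothesis on projections in Lemma \ref{Ad2} is vacuous in the simple case — which is clear since a simple $G$ has a single simple component — and recalling that the inclusion $\mc{C}(\mr{Ad}_G)\subseteq\mc{C}(\mr{Ad}_\mc{S})$ always holds, so the content is entirely in the reverse inclusion supplied by Lemma \ref{Ad2}.
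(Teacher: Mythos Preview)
Your proposal is correct and follows exactly the paper's approach: the paper simply writes ``Using the Schur lemma we obtain'' before stating the corollary, and you have spelled out precisely that deduction from Lemma \ref{Ad2} together with the real-type Schur analysis at the start of Section 3. Your observation that the projection hypothesis in Lemma \ref{Ad2} is vacuous for simple $G$ is the only point needing comment, and you handle it correctly; note also that the condition in the corollary is clearly intended to read $\mc{C}(\mr{Ad}_\mc{S})=\{\lambda I\}$ (as in the parallel Corollary \ref{ad1} and in your final substitution), since $\mc{C}(\mr{Ad}_G)=\{\lambda I\}$ holds automatically by Schur.
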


%

Finally, note that $\overline{<\mc{S}>}$ is infinite in particular when at least one of $g_i$'s is of infinite order. Hence:

\begin{coll}\label{col:ad_infOrder}
Let $G$ be a compact connected simple Lie group and $\mathcal{S}=\{g_1,\ldots,g_n\}\subset G$ its finite subset such that at least one of $g_i$'s is of infinite order. $\mc{S}$ generates $G$ if and only if $\mc{C}(\mr{Ad}_\mc{S})=\{\lambda I:\lambda\in\RR\}$.
\end{coll}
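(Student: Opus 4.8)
The plan is to reduce Corollary \ref{col:ad_infOrder} directly to Corollary \ref{Ad1}, which already gives the equivalence between generation of $G$ and the triviality of the commutant $\mc{C}(\mr{Ad}_G)$ under the standing assumption that $<\mc{S}>$ is infinite. So the only thing that genuinely needs to be checked is that the hypothesis here---that at least one $g_i$ has infinite order---implies that $<\mc{S}>$ (equivalently $\overline{<\mc{S}>}$) is infinite, and that $\mc{C}(\mr{Ad}_\mc{S})=\mc{C}(\mr{Ad}_G)$ in the relevant case. The first is immediate: if $g_i$ has infinite order then the cyclic group generated by $g_i$ is already infinite, hence so is the larger group $<\mc{S}>$. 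This is exactly the observation flagged in the sentence preceding the statement.

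Next I would argue the equivalence. One direction is trivial and was noted before Lemma \ref{Ad2}: if $\mc{S}$ generates $G$ then certainly any endomorphism commuting with all $\mr{Ad}_{g_i}$ commutes with all $\mr{Ad}_g$, $g\in G$ (because the $g_i$ generate a dense subgroup and $\mr{Ad}$ together with matrix multiplication is continuous), so $\mc{C}(\mr{Ad}_\mc{S})\subseteq\mc{C}(\mr{Ad}_G)$; the reverse inclusion $\mc{C}(\mr{Ad}_G)\subseteq\mc{C}(\mr{Ad}_\mc{S})$ is obvious since $\{g_1,\dots,g_n\}\subset G$, and by Schur's lemma (the Fact following Fact \ref{liesub}, together with the computation that the adjoint representation of a compact simple Lie group is of real type) $\mc{C}(\mr{Ad}_G)=\{\lambda I:\lambda\in\RR\}$. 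Hence $\mc{C}(\mr{Ad}_\mc{S})=\{\lambda I:\lambda\in\RR\}$. For the converse, suppose $\mc{C}(\mr{Ad}_\mc{S})=\{\lambda I:\lambda\in\RR\}$. Since always $\mc{C}(\mr{Ad}_G)\subseteq\mc{C}(\mr{Ad}_\mc{S})$ and $\mc{C}(\mr{Ad}_G)=\{\lambda I:\lambda\in\RR\}$ by Schur, we get $\mc{C}(\mr{Ad}_G)=\mc{C}(\mr{Ad}_\mc{S})$; combined with the infinitude of $<\mc{S}>$ established above, Corollary \ref{Ad1} yields that $\mc{S}$ generates $G$.

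I do not expect any real obstacle here: the statement is essentially a packaging of Corollary \ref{Ad1} together with the elementary remark on orders, specialised to the simple case so that Schur's lemma pins down $\mc{C}(\mr{Ad}_G)$ and the condition can be phrased purely in terms of $\mc{S}$. The one point deserving a sentence of care is making sure the two formulations of the commutant condition ($\mc{C}(\mr{Ad}_G)=\mc{C}(\mr{Ad}_\mc{S})$ versus $\mc{C}(\mr{Ad}_\mc{S})=\{\lambda I\}$) are genuinely equivalent in the simple-group setting; this is exactly where simplicity (via Schur) is used, and it is worth noting explicitly that for a non-simple $G$ the cleaner condition $\mc{C}(\mr{Ad}_\mc{S})=\{\lambda I\}$ would be too strong, which is why Lemma \ref{Ad2} is stated differently. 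With these remarks in place the proof is a two-line deduction.
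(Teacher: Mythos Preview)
Your argument is correct and follows exactly the paper's approach: the paper simply notes in the sentence preceding the corollary that an element of infinite order forces $<\mc{S}>$ to be infinite, and then invokes Corollary~\ref{Ad1}. Your additional care in matching the two formulations of the commutant condition via Schur's lemma is accurate and makes explicit what the paper leaves implicit.
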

In the next section we characterise when $<\mc{S}>$ is infinite and when $\mc{C}(\mr{Ad}_\mc{S})$ can be different form $\mc{C}(\mr{ad}_\mc{X})$ for semisimple groups of our interest, i.e. for $G=SU(d)$ and $G=SO(d)$.

\section{Groups $SU(d)$ and $SO(d)$}
In this section we focus on two groups $G$ that are particularly important from the perspective of quantum computation and linear quantum optics, i.e. $G=SO(d)$ or $G=SU(d)$. 
\begin{gather}
SO(d)=\{O\in \mr{Gl}_d(\mathbb{R}):O^tO=I,\,\mr{det}O=1\},\\
SU(d)=\{U\in \mr{Gl}_d(\mathbb{C}):U^\dagger U=I,\,\mr{det}X=1\}.
\end{gather}
Their Lie algebras $\mk{g}$ are:
\begin{gather}
\mk{so}(d)=\{X\in \mr{Mat}_d(\mathbb{R}):X^t=-X,\,\mr{tr}X=0\},\\
\mk{su}(d)=\{X\in \mr{Mat}_d(\mathbb{C}):X^\dagger=-X,\,\mr{tr}X=0\}.
\end{gather}
The centres of $G$ are finite and given by $Z(SU\left(d\right))=\left\{\alpha I:\alpha\in\mathbb{C},\,\alpha^d=1\right\}$, $Z(SO\left(2d\right))=\{\pm I\}$ and  $Z(SO\left(2d+1\right))=I$. Groups $SU(d)$ for $d\geq 2$ and groups $SO(d)$ for $d\geq 3$ and $d\neq 4$ are compact connected simple Lie groups. On the other hand $SO(4)$ is still compact and connected but it is not simple as its Lie algebra is a direct sum of Lie algebras $\mk{so}(4)=\mk{so}(3)\oplus\mk{so}(3)$, hence $SO(4)$ is semisimple. The Killing form on both $\mk{su}(d)$ and $\mk{so}(d)$, up to a constant positive factor, is given by $B(X,Y)=\mr{tr}XY$. We next introduce an orthonormal basis in $\mk{su}(d)$ and $\mk{so}(d)$. Let $E_{kl}=\kb kl$ be a $d\times d$ matrix whose only nonzero (and  equal to $1$) entry is $(k,l)$. The commutation relations are $\left[E_{ij},E_{kl}\right]=\delta_{jk}E_{il}-\delta_{li}E_{k,j}$. Let
\begin{gather}\label{Xij}
X_{ij}=E_{ij}-E_{ji},\,\, Y_{ij}=i\left(E_{ij}+E_{ji}\right),\,\, Z_{ij}=i(E_{ii}-E_{jj}).
\end{gather}
One easily checks that for  $i,j\in\{1,\ldots,d\}$, $i<j$ matrices $\left\{X_{ij}, Y_{ij}, Z_{i,i+1}\right\}$ form an orthogonal basis of $\mk{su}(d)$ and matrices $\left\{X_{ij}\right\}$ of $\mk{so}(d)$. We will call these two bases the standard basis of $\mk{su}(d)$ and $\mk{so}(d)$ respectively. 

\subsection{Gates and their Lie algebra elements}\label{log-construct}
In this section we explain how to any set of gates $\mc{S}$ we assign the set of Lie algebra elements $\mc{X}$. 

Let us recall that for a unitary matrix $U\in SU(d)$ there is a unitary matrix $V\in SU(d)$ such that $D=V^\dagger UV=\mr{diag}\{e^{i\phi_1},\ldots, e^{i \phi_d}\}$. The nonzero entries of $D$ constitute the spectrum of $U$. In order to find $X\in \mathfrak{su}(d)$ such that $U=e^X$ one should calculate a logarithm of $U$. This can be done using the decomposition $U=VDV^\dagger$ and it boils down to calculating logarithms of diagonal matrix $D$. Since the logarithm of $z\in\mathbb{C}$ is not uniquely defined we will use the convention that $\log z=\mathrm{arg}(z)$, where $\mathrm{arg}(z)$ is the argument of $z$ and we assume $\mathrm{arg}(z)\in [0,2\pi)$. Thus we choose $X\in \mathfrak{su}(d)$ that satisfies $U=e^X$ as $X=V\tilde{D}V^\dagger$, where $\tilde{D}=\mr{diag}\{i\phi_1,\ldots, i \phi_d\}$, every $\phi_i\in [0,2\pi)$. This way to any set of gates $\mc{S}=\{U_1,\ldots,U_n\}\subset SU(d)$ we assign the set of Lie algebra elements $\mathcal{X}=\{X_1,\ldots, X_n\}\subset \mathfrak{su}(d)$.

Matrices in $SO(d)$ typically cannot be diagonalised by the orthogonal group. Nevertheless for a matrix $O\in SO(d)$ there is an orthogonal matrix $V$ such that $R=V^tOV$ is block diagonal with two types of blocks: (1) one identity matrix $I_k$ of dimension $0\leq k\leq d$, (2) $2\times 2$ rotations by angles $\phi_i\in (0,2\pi)$, i.e. matrices $O(\phi_i)$ from $SO(2)$. We again want to find $X\in \mk{so}(d)$ such that $O=e^X$. In our paper we choose $X=V\tilde{R}V^t$, where $\tilde{R}$ has the same block diagonal structure as $R$ and (1) the block of $\tilde{R}$ corresponding to the identity block of $R$ is the zero matrix $0_k$ of dimension $0\leq k\leq d$, (2) the blocks corresponding to $2\times 2$ $\phi_i$-rotation blocks of $R$ are matrices $\left(\begin{array}{cc}0&\phi_i\\-\phi_i&0\end{array}\right)\in\mk{so}(2)$, where every $\phi_i\in(0,2\pi)$. We will call $R$ and $\tilde{R}$ normal forms of $O\in SO(d)$ and $X\in \mk{so}(d)$ respectively and angles $\phi_i$'s the spectral angles. Summing up, using the above procedure, to any set of gates $\mc{S}=\{O_1,\ldots,O_n\}\subset SO(d)$ we assign the set of Lie algebra elements $\mathcal{X}=\{X_1,\ldots, X_n\}\subset \mathfrak{so}(d)$.

Throughout the paper, whenever we speak about the Lie algebra elements associated to gates (or the logarithms of the gates) we mean matrices constructed according to the above two procedures.

\subsection{The difference between $\mc{C}(\mr{Ad}_\mc{S})$ and $\mc{C}(\mr{ad}_\mc{X})$}
\subsubsection{The case of $SU(d)$}\label{sec:adAd__sud}
Let $\mc{S}=\{U_1,\ldots,U_n\}\subset SU(d)$ and let  $\mc{X}=\{X_1,\ldots,X_n\}$ be the corresponding set of Lie algebra elements (constructed as described in Section \ref{log-construct}). In this section we study when the spaces  $\mc{C}(\mr{Ad}_\mc{S})$ and $\mc{C}(\mr{ad}_\mc{X})$ are different. Note first that using $\mr{Ad}_{e^{X_i}}=e^{\mr{ad}_{X_i}}$ we have $\mc{C}(\mr{ad}_\mc{X})\subseteq\mc{C}(\mr{Ad}_\mc{S})$.  Hence we are particularly interested in the situation when $\mc{C}(\mr{Ad}_\mc{S})$ is strictly larger then $\mc{C}(\mr{ad}_\mc{X})$. Matrices $U_i$ can be put into diagonal form $U_i=V_iD_iV_i^\dagger$, where $V_i\in SU(d)$ and $D_i=\{e^{\phi^i_{1}},\ldots ,e^{\phi^i_{d}}\}$, $\phi^i_{j}\in [0,2\pi)$. Note now that $\mr{Ad}_{U_i}=\mr{Ad}_{V_iD_iV_i^\dagger}=O_i\mr{Ad}_{D_i}O_i^t$, where $O=\mr{Ad}_{V_i}\in SO(d^2-1)$. Let us order the standard basis of $\mk{su}(d)$ as follows $\{X_{12},Y_{12},\ldots,X_{d-1,d},Y_{d-1,d},Z_{1,2},\ldots Z_{d-1,d}\}$. The matrix $\mr{Ad}_{D_i}$ in this basis has a block diagonal form:
\begin{gather}
\mathrm{Ad}_{D_{i}}=\left(\begin{array}{ccccccccc}
O(\phi^i_{1,2})\\
 & \ddots\\
 &  & O(\phi^i_{1,d})\\
 &  &  &   \ddots\\
 &  &  &  &   O(\phi^i_{2,d})\\
 &  &  &  &  &   \ddots\\
 &  &  &  &  &  &   O(\phi^i_{d-1,d})\\
 &  &  &  &  &  &  &   I_{d-1}
\end{array}\right),
\end{gather}
where

\begin{gather}
O(\phi^i_{k,l})=\left(\begin{array}{cc}
\cos(\phi^i_{k,l})&\sin(\phi^i_{k,l})\\
-\sin(\phi^i_{k,l} )& \cos(\phi^i_{k,l})
\end{array}\right),\, \mr{where},\, \phi^i_{k,l}:=\phi^i_k-\phi^i_l,
\end{gather}
and $I_{d-1}$ is $(d-1)\times (d-1)$ identity matrix. Matrices from $\mc{X}$ are given by $X_i=V_i\tilde{D}_iV_i^\dagger$ and $\tilde{D}_i=i\{\phi^i_1,\phi^i_2,\ldots,\phi^i_d\}$. Hence $\mr{ad}_{X_i}=\mr{ad}_{V_i\tilde{D}_iV_i^\dagger}=O\mr{ad}_{\tilde{D}_i}O^t$, and we have (in the standard basis of $\mk{su}(d)$ ordered as previously):
\begin{gather}
\mathrm{ad}_{\tilde{D}_{i}}=\left(\begin{array}{ccccccccc}
X(\phi^i_{1,2})\\
 & \ddots\\
 &  & X(\phi^i_{1,d})\\
 &  &  &   \ddots\\
 &  &  &  &   X(\phi^i_{2,d})\\
 &  &  &  &  &   \ddots\\
 &  &  &  &  &  &  X( \phi^i_{d-1,d})\\
 &  &  &  &  &  &  &   0_{d-1}
\end{array}\right),
\end{gather}
where
\begin{gather}
X(\phi^i_{k,l})=\left(\begin{array}{cc}
0& \phi^i_{k,l}\\
-\phi^i_{k,l} & 0
\end{array}\right),\, \mr{where},\, \phi^i_{k,l}=\phi^i_k-\phi^i_l,
\end{gather}
and $0_{d-1}$ is $(d-1)\times (d-1)$ zero matrix. Note that $\phi^i_{k,l}\in(-2\pi,2\pi)$. Comparing structures of matrices $\mathrm{Ad}_{D_i}$ and $\mathrm{ad}_{\tilde{D}_{i}}$ we deduce that if all $\phi^i_{i,j}\neq\pm\pi$ then $\mc{C}(\mr{Ad}_\mc{S})=\mc{C}(\mr{ad}_\mc{X})$. The situation is different when $\phi^i_{k,l}=\pm\pi$. In this case  $\mathrm{Ad}_{D_i}$ has additional degeneracies compared to $\mathrm{ad}_{\tilde{D}_{i}}$ as $O(\phi^i_{k,l})=O(\pm\pi)=-I_2$. Let $P$ be the rotation plane corresponding to the angle $\phi^i_{k,l}=\pm\pi$. One can then construct a rotation $O^\prime\in SO(d^2-1)$ whose elementary rotation planes are exactly as in $\mathrm{ad}_{\tilde{D}_{i}}$ except $P$ which is replaced by a plane $P^\prime$, $P\perp P^\prime$. This can be achieved using available $d-1$ directions corresponding to $I_{d-1}$. If the rotation angle along $P^\prime$ is also $\pi$ then $[\mr{Ad}_{U_i},O^\prime]=0$ and $[\mr{ad}_{X_i},O^\prime]\neq 0$. Hence the space $\mc{C}(\mr{Ad}_{U_i})$ is larger than $\mc{C}(\mr{ad}_{X_i})$ and there is possibility that it might be true also for sets  $\mc{C}(\mr{Ad}_\mc{S})$ and $\mc{C}(\mr{ad}_\mc{X})$. As a conclusion we get
\begin{fact}\label{diff-su}
Let $S=\{U_1,\ldots,U_n\}\subset SU(d)$ and $\mc{X}=\{X_1,\ldots, X_n\}$ be the corresponding set of Lie algebra elements (constructed as described in Section \ref{log-construct}). The space $\mc{C}(\mr{Ad}_\mc{S})$ can be larger than $\mc{C}(\mr{ad}_\mc{X})$ if and only if the difference between spectral angles for at least one of the matrices $U_i\in\mc{S}$ is equal to $\pm \pi$.
\end{fact}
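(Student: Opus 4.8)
The plan is to turn the statement into a finite block-matrix comparison, following the line of the discussion preceding it. I begin with three reductions. Since $\mr{Ad}_{e^{X_i}}=e^{\mr{ad}_{X_i}}$, every endomorphism commuting with all the $\mr{ad}_{X_i}$ commutes with all the $\mr{Ad}_{U_i}$, so $\mc{C}(\mr{ad}_\mc{X})\subseteq\mc{C}(\mr{Ad}_\mc{S})$ always holds and only possible strictness is in question. Next, $\mc{C}(\mr{Ad}_\mc{S})=\bigcap_i\mc{C}(\mr{Ad}_{U_i})$ and $\mc{C}(\mr{ad}_\mc{X})=\bigcap_i\mc{C}(\mr{ad}_{X_i})$ with $\mc{C}(\mr{ad}_{X_i})\subseteq\mc{C}(\mr{Ad}_{U_i})$ for each $i$, so equality for every single index forces equality of the intersections, while any genuine enlargement must already occur for one $U_i$. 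Finally, writing $U_i=V_iD_iV_i^\dagger$ and $O_i=\mr{Ad}_{V_i}\in SO(d^2-1)$ we have $\mr{Ad}_{U_i}=O_i\mr{Ad}_{D_i}O_i^t$ and $\mr{ad}_{X_i}=O_i\mr{ad}_{\tilde D_i}O_i^t$; conjugation by the \emph{same} orthogonal matrix $O_i$ carries $\mc{C}(\mr{Ad}_{D_i})$ onto $\mc{C}(\mr{Ad}_{U_i})$ and $\mc{C}(\mr{ad}_{\tilde D_i})$ onto $\mc{C}(\mr{ad}_{X_i})$, so $\mc{C}(\mr{Ad}_{U_i})=\mc{C}(\mr{ad}_{X_i})$ if and only if $\mc{C}(\mr{Ad}_{D_i})=\mc{C}(\mr{ad}_{\tilde D_i})$. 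Hence everything reduces to comparing the two explicitly block-diagonal matrices $\mr{Ad}_{D_i}$ and $\mr{ad}_{\tilde D_i}$ written out above.

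For the ``only if'' direction, assume that for every $i$ no difference $\phi^i_{k,l}$ equals $\pm\pi$. A linear map $L$ in block form commutes with $\oplus_\alpha M_\alpha$ exactly when $M_\alpha L_{\alpha\beta}=L_{\alpha\beta}M_\beta$ for all blocks $\alpha,\beta$, so I would compare the block-intertwiner relations of $\mr{Ad}_{D_i}$ with those of $\mr{ad}_{\tilde D_i}$. Writing $J$ for the antisymmetric $2\times2$ matrix with $X(\theta)=\theta J$, so $O(\theta)=e^{\theta J}$: for two rotation blocks at angles $\theta_1,\theta_2$ with neither equal to $\pm\pi$, the equations $O(\theta_1)L=LO(\theta_2)$ and $X(\theta_1)L=LX(\theta_2)$ have the same solution space (nonzero only when $\theta_1=\theta_2$, giving $JL=LJ$, or $\theta_1=-\theta_2$, giving $JL=-LJ$); a rotation block with $\theta\neq0,\pm\pi$ intertwines neither with a $\theta'=0$ block nor with the large $I_{d-1}$ block (both conditions force $L=0$); and a block $O(0)=I_2$ of $\mr{Ad}_{D_i}$ imposes the same constraint as the block $X(0)=0_2$ of $\mr{ad}_{\tilde D_i}$, just as $I_{d-1}$ imposes the same constraint as $0_{d-1}$. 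Assembling these cases gives $\mc{C}(\mr{Ad}_{D_i})=\mc{C}(\mr{ad}_{\tilde D_i})$ for every $i$, hence $\mc{C}(\mr{Ad}_\mc{S})=\mc{C}(\mr{ad}_\mc{X})$, so the former is not larger.

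For the ``if'' direction, suppose $\phi^{i_0}_{k_0,l_0}=\pm\pi$ for some $i_0$. Then the $(k_0,l_0)$-block of $\mr{Ad}_{D_{i_0}}$ equals $O(\pm\pi)=-I_2$, a scalar that commutes with every $2\times2$ matrix, while the corresponding block of $\mr{ad}_{\tilde D_{i_0}}$ is the nonzero matrix $X(\pm\pi)=\pm\pi J$, whose commutant among $2\times2$ matrices is only $\{aI+bJ:\,a,b\in\RR\}$. Therefore the endomorphism of $\mk{su}(d)$ acting as $\mr{diag}(1,-1)$ on the $(k_0,l_0)$-plane and as $0$ on every other block lies in $\mc{C}(\mr{Ad}_{D_{i_0}})$ but not in $\mc{C}(\mr{ad}_{\tilde D_{i_0}})$; if an orthogonal witness is wanted, one takes instead the element of $SO(d^2-1)$ that is the reflection $\mr{diag}(1,-1)$ on that plane, a compensating sign change on one of the $d-1$ coordinates spanning $I_{d-1}$, and the identity elsewhere---this is the operator $O'$ of the discussion above. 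Conjugating back by $O_{i_0}$ gives $\mc{C}(\mr{Ad}_{U_{i_0}})\supsetneq\mc{C}(\mr{ad}_{X_{i_0}})$, so the enlargement does occur; choosing the remaining gates of $\mc{S}$ so that this extra commuting operator survives the intersection (for instance $\mc{S}=\{U_{i_0}\}$) exhibits a set for which $\mc{C}(\mr{Ad}_\mc{S})$ is genuinely larger than $\mc{C}(\mr{ad}_\mc{X})$.

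The part I expect to be the real work is the case analysis in the second paragraph: one must verify that the correspondences $O(\theta)\leftrightarrow X(\theta)$ and $I_{d-1}\leftrightarrow0_{d-1}$ preserve \emph{every} block-intertwiner relation---going carefully through the degenerate sub-cases (a vanishing angle, a repeated angle, angles equal only up to sign, and the interaction with the large identity block), not just the generic one. The single feature that can break the correspondence is that $O(\pm\pi)=-I_2$ is a scalar with maximal commutant whereas $X(\pm\pi)$ is a nonzero rotation generator, which is exactly why the criterion is phrased in terms of a $\pm\pi$ spectral-angle difference.
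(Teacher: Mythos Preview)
Your approach---reduce to diagonal $D_i$ by conjugation and compare the block-intertwiner conditions for $\mr{Ad}_{D_i}$ versus $\mr{ad}_{\tilde D_i}$---is exactly the paper's argument, and your ``if'' direction (constructing the witness $O'$) is a more explicit version of what the paper sketches.

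There is, however, a genuine gap in your ``only if'' direction. You assert that for two $2\times2$ rotation blocks at angles $\theta_1,\theta_2\in(-2\pi,2\pi)$ with neither equal to $\pm\pi$, the equations $O(\theta_1)L=LO(\theta_2)$ and $X(\theta_1)L=LX(\theta_2)$ have the same solution space. This is not true: the $O$-equation sees the angles only modulo $2\pi$ (it has nonzero solutions iff $\theta_1\equiv\pm\theta_2\pmod{2\pi}$), while the $X$-equation requires the exact equality $\theta_1=\pm\theta_2$. Because the $\phi^i_{k,l}$ range over $(-2\pi,2\pi)$ and not just $(-\pi,\pi]$, one can have $\phi^i_{k,l}\pm\phi^i_{k',l'}=\pm2\pi$ with neither difference equal to $\pm\pi$. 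A concrete counterexample in $SU(3)$ is $D=\mr{diag}(1,e^{2\pi i/3},e^{4\pi i/3})$: here $\phi_{1,2}=-2\pi/3$, $\phi_{1,3}=-4\pi/3$, $\phi_{2,3}=-2\pi/3$, none equal to $\pm\pi$, yet $\phi_{1,2}+\phi_{1,3}=-2\pi$. The off-diagonal block equation $O(-2\pi/3)L=LO(-4\pi/3)$ has a two-dimensional solution space (since $-2\pi/3\equiv-(-4\pi/3)\pmod{2\pi}$), while $X(-2\pi/3)L=LX(-4\pi/3)$ forces $L=0$. A direct count gives $\dim_\RR\mc{C}(\mr{Ad}_D)=22>14=\dim_\RR\mc{C}(\mr{ad}_{\tilde D})$.

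To be fair, the paper's one-sentence justification (``Comparing structures of matrices $\mr{Ad}_{D_i}$ and $\mr{ad}_{\tilde D_i}$ we deduce\ldots'') glosses over precisely this point, so the issue is inherited from the source rather than introduced by you; your more careful case analysis simply makes it visible. A correct version of the Fact would need to also exclude the possibility that two of the differences $\phi^i_{k,l}$ agree modulo $2\pi$ (up to sign) without being equal, or---what suffices for the application in Theorem~\ref{thm-alg}---restrict to the regime where all $\phi^i_{k,l}$ lie in $(-\pi,\pi)$, where your argument does go through.
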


\subsubsection{The case of $SO(d)$}\label{sec:adAd__sod}

We consider $\mc{S}=\{O_1,\ldots,O_n\}\subset SO(d)$ and  $\mc{X}=\{X_1,\ldots,X_n\}$ be the corresponding Lie algebra elements (constructed as described in Section \ref{log-construct}). We have $\mc{C}(\mr{ad}_\mc{X})\subseteq \mc{C}(\mr{Ad}_\mc{S})$ and our goal is to characterise the cases when the space $\mc{C}(\mr{Ad}_\mc{S})$ can be strictly larger than $\mc{C}(\mr{ad}_\mc{X})$.  Matrices $O_i$ can be put into a standard form $O_i=V_iR_iV_i^\dagger$, where $V_i\in SO(d)$ and $R_i$ is a block diagonal matrix consisting of $k\leq \floor{\frac{d}{2}}$ two dimensional blocks  representing rotations by angles $\{\phi^i_1,\ldots,\phi^i_{k}\}$, $\phi^i_{j}\in (0,2\pi)$ and one $(d-2k)$-dimensional block that is the identity matrix. Note next that $\mr{Ad}_{O_i}=\mr{Ad}_{V_iR_iV_i^\dagger}=\mr{Ad}_{V_i}\mr{Ad}_{R_i}\mr{Ad}_{V_i}^t$. Each matrix $\mr{Ad}_{R_i}$ can be brought to the standard block diagonal form containing the following blocks 
\begin{enumerate}
\item $O(\phi^i_{a,b})$ and $O(\psi^i_{a,b})$, where $\phi^i_{a,b}=\phi^i_{a}-\phi^i_{b}$, $\psi^i_{a,b}=\phi^i_{a}+\phi^i_{b}$, $a<b$. The number of these blocks is $k(k-1)$.
\item The identity block of dimension $k+\frac{(d-2k)(d-2k-1)}{2}$.
\item Blocks $O(\phi^i_j)$, where $j\in\{1,\ldots,k\}$. Each block $O(\phi^i_j)$ appears $(d-2k)$ times. Hence we have $k(d-2k)$ blocks like this.
\end{enumerate}
Matrices $\mr{ad}_{X_i}$ have the same structure as matrices $\mr{Ad}_{O_i}$ albeit the identity block is replaced by the $0$-block of the same dimension and the rotational blocks $O(\phi^i_{a,b})$, $O(\psi^i_{a,b})$ and $O(\phi^i_j)$ are replaced by $\left(\begin{array}{cc}0&\phi^i_j\\-\phi^i_j&0\end{array}\right)\in\mk{so}(2)$, where every $\phi^i_j\in(0,2\pi)$. Repeating the reasoning for $SU(d)$ we get:

\begin{fact}\label{diff-so}
Let $S=\{U_1,\ldots,U_n\}\subset SO(d)$ and $\mc{X}=\{X_1,\ldots, X_n\}\subset \mathfrak{so}(d)$ be the corresponding set of Lie algebra elements (constructed as described in Section \ref{log-construct}). The space $\mc{C}(\mr{Ad}_\mc{S})$ can be bigger than $\mc{C}(\mr{ad}_\mc{X})$ if and only if the difference or the sum of spectral angles $\phi^i_a$ and $\phi^i_b$ for at least one of the matrices $O_i\in\mc{S}$ is an odd multiple of $\pi$.
\end{fact}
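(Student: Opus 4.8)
The plan is to mimic the argument already carried out for $SU(d)$ in Fact~\ref{diff-su}, transplanting it to the block structure of $\mr{Ad}_{R_i}$ that was just tabulated. The decisive observation is that $\mc{C}(\mr{ad}_\mc{X})\subsetneq\mc{C}(\mr{Ad}_\mc{S})$ can happen only because of a discrepancy between the blocks of $\mr{Ad}_{O_i}$ and $\mr{ad}_{X_i}$: these matrices have identical elementary rotation planes, and the only way a plane can acquire ``extra'' centralizing freedom at the group level is when the corresponding rotation angle is $\pm\pi$, since then $O(\pm\pi)=-I_2$ coincides with minus the identity and the pair of coordinates it acts on becomes fully degenerate, whereas the matching block $\left(\begin{array}{cc}0&\phi\\-\phi&0\end{array}\right)$ of $\mr{ad}_{X_i}$ is never $\pm\pi\cdot$(anything) that degenerates -- it only has a one-parameter commutant inside $SO(2)$. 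So first I would note $\mc{C}(\mr{ad}_\mc{X})\subseteq\mc{C}(\mr{Ad}_\mc{S})$ (which was already recorded), and then analyze when the inclusion is strict.

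Next I would read off which angles appear in $\mr{Ad}_{R_i}$: from the tabulation these are exactly $\phi^i_{a,b}=\phi^i_a-\phi^i_b$, $\psi^i_{a,b}=\phi^i_a+\phi^i_b$ (with $a<b$), and the $\phi^i_j$ themselves, each copied a prescribed number of times, together with a genuine identity block of dimension $k+\frac{(d-2k)(d-2k-1)}{2}$. Since every $\phi^i_j\in(0,2\pi)$, a single angle $\phi^i_j$ equals $\pi$ exactly when $\phi^i_j=\pi$, i.e. when $\phi^i_j$ is an odd multiple of $\pi$ in its range; and a difference or sum $\phi^i_a\mp\phi^i_b$ equals $\pm\pi$ precisely when it is an odd multiple of $\pi$. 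I would check the ``only if'' direction: if no such angle is an odd multiple of $\pi$ then $O(\phi^i_{a,b})$, $O(\psi^i_{a,b})$, $O(\phi^i_j)$ are never $-I_2$, so the commutant of each $\mr{Ad}_{O_i}$ respects the same plane decomposition as $\mr{ad}_{X_i}$ (on each nontrivial rotation plane the centralizer of $O(\theta)$ with $\theta\not\in\{0,\pm\pi\}$ is the same copy of $SO(2)$ that centralizes $X(\theta)$, and the identity/zero blocks coincide), forcing $\mc{C}(\mr{Ad}_{\mc{S}})=\mc{C}(\mr{ad}_\mc{X})$. For the ``if'' direction I would exhibit, exactly as in the $SU(d)$ case, a rotation $O'\in SO(\dim\mk{so}(d))$ agreeing with $\mr{ad}_{X_i}$'s plane structure except that a plane $P$ carrying angle $\pm\pi$ is swapped for a perpendicular plane $P'$ (using the available directions of the identity block), with rotation angle $\pi$ along $P'$ as well; then $[\mr{Ad}_{O_i},O']=0$ but $[\mr{ad}_{X_i},O']\neq0$, so the single-generator commutants already differ, which is the mechanism by which $\mc{C}(\mr{Ad}_\mc{S})$ can be strictly larger.

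The step I expect to be the main obstacle -- or at least the one demanding the most care -- is making the ``if'' direction really produce a difference at the level of the full sets $\mc{C}(\mr{Ad}_\mc{S})$ rather than just for one generator $\mc{C}(\mr{Ad}_{O_i})$ versus $\mc{C}(\mr{ad}_{X_i})$; this is why the statement is phrased as ``\emph{can} be bigger'' rather than an unconditional equality, and I would keep that hedge, pointing out (as the $SU(d)$ discussion does) that whether the strict inclusion survives intersection over all $i$ depends on the mutual position of the $V_i$'s, so the angle condition is necessary but the genuine strictness is only possible, not guaranteed. A secondary bookkeeping point is checking that the identity block of $\mr{Ad}_{R_i}$ has enough directions ($\geq 2$) to host the perpendicular plane $P'$; since $\phi^i_{a,b}=\pm\pi$ or $\psi^i_{a,b}=\pm\pi$ already forces $k\geq2$, and a single-angle $\phi^i_j=\pi$ forces $d-2k\geq1$ together with other copies of the same plane, one verifies in each case that the identity block dimension $k+\frac{(d-2k)(d-2k-1)}{2}$ (or the pool of parallel copies of $O(\pi)$) is large enough; I would dispatch this by the same ``available $d-1$ directions'' remark used for $SU(d)$, adapted to the $SO(d)$ count, without grinding through every subcase.
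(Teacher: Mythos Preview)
Your proposal is correct and takes essentially the same approach as the paper: the paper's entire proof of Fact~\ref{diff-so} is the single sentence ``Repeating the reasoning for $SU(d)$ we get,'' and you carry out exactly that transplantation of the $SU(d)$ argument to the $SO(d)$ block list, with the same $O(\pm\pi)=-I_2$ mechanism and the same $O'$ construction. Your additional remarks on the ``can be'' hedge and on the size of the identity block go beyond what the paper writes explicitly but are in the same spirit; they do not constitute a different route.
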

\subsection{Pairs generating infinite subgroups of $G$}\label{sec:pair_generating_inf_subgroups}
In this section we show that elements that are close enough to $Z(G)$ generate $G$ if the corresponding Lie algebra elements generate $\mathfrak{g}$ (see Theorem \ref{thm-alg}). We begin with recalling the elementary properties of the matrix exponential and the matrix logarithm. To this end we define the norm of $A\in\mr{Mat}_d(\mathbb{C})$ by $\|A\|=\sqrt{\mathrm{tr}(AA^{\dagger})}$.

 Next we  recall that the group commutator of two invertible matrices (with respect to matrix multiplication) is defined as $[A,B]_{\bullet }=ABA^{-1}B^{-1}$. Naturally, if matrices commute in a usual sense then $[A,B]_\bullet =I$. The following lemma relates the distance between $[A,B]_\bullet$ and $I$ with the distances of $A$ and $B$ from the identity.
\begin{lem}
\label{lem:infinite}Let $A,B\in G$ where $G=SU(d)$ or $G=SO(d)$ and let $C=[A,B]_{\bullet }$.
We have the following: 
\begin{gather}
\|C-I\|\leq\sqrt{2}\|A-I\|\|B-I\|,\label{ineq1}\\
\mathrm{If}\,[A,C]_{\bullet }=I\,\mathrm{and}\,\|B-I\|<2,\,\mathrm{then}\,[A,B]_{\bullet}=I.
\end{gather}
\end{lem}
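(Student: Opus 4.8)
The plan is to handle the two assertions separately.

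\emph{The inequality \eqref{ineq1}.} I would begin from a purely algebraic identity. Multiplying $C-I=ABA^{-1}B^{-1}-I$ on the right by $BA$ gives $(C-I)BA=AB-BA$, hence $C-I=[A,B]\,A^{-1}B^{-1}$ with $[A,B]=AB-BA$ the ordinary matrix commutator. Since $A^{-1}B^{-1}$ is unitary (for $G=SU(d)$) or orthogonal (for $G=SO(d)$) and $\|\cdot\|$ is the Hilbert--Schmidt norm, which is unchanged under right multiplication by such matrices, this already yields $\|C-I\|=\|[A,B]\|$. Now use the trivial identity $[A,B]=[A-I,B-I]$ (the linear terms cancel) to rewrite the commutator, and then invoke the B\"ottcher--Wenzel inequality \cite{bottcher}, $\|[X,Y]\|\le\sqrt{2}\,\|X\|\,\|Y\|$, with $X=A-I$, $Y=B-I$. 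This gives $\|C-I\|\le\sqrt2\,\|A-I\|\,\|B-I\|$; the only external ingredient is the B\"ottcher--Wenzel bound.

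\emph{The implication.} Put $C=[A,B]_\bullet$ and assume $AC=CA$ and $\|B-I\|<2$; the goal is $C=I$. The hypothesis $AC=CA$ is equivalent to $ABA^{-1}=CB$. Diagonalise $A$ over $\mathbb{C}$ (legitimate as $A$ is unitary; for $SO(d)$ regard it inside $SU(d)$) and write $\mathbb{C}^d=\bigoplus_r V_r$ with $A|_{V_r}=\lambda_r$ for pairwise distinct $\lambda_r$. Since $C$ commutes with $A$ it preserves each $V_r$, and in the induced block decomposition $B=(B_{rs})$, $B_{rs}\colon V_s\to V_r$, the identity $ABA^{-1}=CB$ reads $\bigl(C|_{V_r}-(\lambda_r/\lambda_s)\,\mathrm{id}_{V_r}\bigr)B_{rs}=0$ for all $r,s$. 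Consequently the image of each diagonal block $B_{rr}$ lies in $W_r:=\ker\bigl(C|_{V_r}-\mathrm{id}_{V_r}\bigr)$, while for $r\ne s$ (so $\lambda_r/\lambda_s\ne1$) the image of $B_{rs}$ lies in the orthogonal complement $W_r^\perp$ inside $V_r$.

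Let $m:=\operatorname{rank}(C-I)=\sum_r\dim W_r^\perp$. From this block picture, together with the unitarity of $B$, I would extract two lower bounds. First: for $v\in W_r^\perp$ the vectors $B_{rr}v\in W_r$ and $v$ are orthogonal, so $\|(B_{rr}-\mathrm{id}_{V_r})v\|^2\ge\|v\|^2$; summing over an orthonormal basis of $W_r^\perp$ and then over $r$ gives $\sum_r\|B_{rr}-\mathrm{id}_{V_r}\|^2\ge m$. Second: $\mathrm{im}\,B_{rr}\subseteq W_r$ forces $B_{rr}^\dagger$ to vanish on $W_r^\perp$, so for $w\in W_r^\perp$ the block identity $\sum_s B_{rs}B_{rs}^\dagger=\mathrm{id}_{V_r}$ (coming from $BB^\dagger=I$) gives $\sum_{s\ne r}\|B_{rs}^\dagger w\|^2=\|w\|^2$; summing again over orthonormal bases and over $r$ gives $\sum_{r\ne s}\|B_{rs}\|^2\ge m$. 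Adding the two, $\|B-I\|^2\ge 2m$. The final ingredient is that for $C\in SU(d)$ or $SO(d)$ one always has $\operatorname{rank}(C-I)\ne1$: a rank-one deviation from the identity would mean $C$ acts as the identity on a hyperplane and as a scalar $\mu$ of modulus one with $\mu\ne1$ on the complementary line, contradicting $\det C=1$. Hence, if $C\ne I$ then $m\ge2$, so $\|B-I\|^2\ge4$, contradicting $\|B-I\|<2$; therefore $C=I$, i.e. $[A,B]_\bullet=I$.

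I expect the delicate step to be obtaining the constant in $\|B-I\|^2\ge 2m$: working with the diagonal blocks alone only yields $\|B-I\|^2\ge m$, which falls short of the threshold $2$, so the extra factor must be squeezed out of the off-diagonal blocks using unitarity. Even then one is exactly at $\|B-I\|^2\ge2m=2$ when $m=1$, and it is precisely there that the determinant remark excluding $\operatorname{rank}(C-I)=1$ for $C\in SU(d)$ or $SO(d)$ is indispensable, lifting $m$ to $\ge2$ and the bound to $\ge4$.
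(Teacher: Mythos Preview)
Your argument is correct. The paper does not actually prove this lemma: it simply refers to Lemmas~36.15 and~36.16 of Curtis--Reiner \cite{curtis}, so there is no in-paper proof to compare against line by line. Your route to \eqref{ineq1} --- the identity $(C-I)BA=[A,B]$, unitary invariance of the Hilbert--Schmidt norm, the cancellation $[A,B]=[A-I,B-I]$, and then the B\"ottcher--Wenzel bound \cite{bottcher} --- is exactly the modern way to obtain the sharp constant $\sqrt{2}$; the classical argument in \cite{curtis} predates \cite{bottcher} and would yield only the constant $2$ via the triangle inequality on $(A-I)(B-I)-(B-I)(A-I)$. So for the first assertion your proof is at least as good as what the cited reference supplies.

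For the second assertion your block analysis is sound, but there is one expository slip worth fixing: you write that ``the hypothesis $AC=CA$ is equivalent to $ABA^{-1}=CB$'', yet $CB=ABA^{-1}B^{-1}B=ABA^{-1}$ is an identity that holds unconditionally. What you actually use --- and use correctly --- is the \emph{combination} of the always-true relation $ABA^{-1}=CB$ with the hypothesis $AC=CA$, the latter being what makes $C$ block-diagonal in the eigenspace decomposition of $A$ and hence gives the equations $(C|_{V_r}-\lambda_r\lambda_s^{-1})B_{rs}=0$. The two lower bounds (diagonal blocks contributing $\ge m$ via orthogonality of $B_{rr}v$ and $v$ for $v\in W_r^\perp$; off-diagonal blocks contributing $\ge m$ via $BB^\dagger=I$ and $B_{rr}^\dagger|_{W_r^\perp}=0$) are derived correctly and sum to $\|B-I\|^2\ge 2m$. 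The determinant observation excluding $\operatorname{rank}(C-I)=1$ for $C\in SU(d)$ or $SO(d)$ is precisely what lifts $m$ to $\ge 2$ and closes the argument.
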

\begin{proof}
Can be found in Lemmas 36.15 and 36.16 of \cite{curtis}.
\end{proof}
We next define open balls in $G=SO(d)$ or $SU(d)$ centred around elements from $Z(G)$
and of radius $1/\sqrt{2}$, $B_{\alpha}=\{g\in G:\|g-\alpha I\|<1/\sqrt{2}\}$. Let $\mathcal{B}=\bigcup_{\alpha I\in Z(G)}B_{\alpha }$.
\begin{lem}\label{lem:inf}
Let $g,h\in B_{1}$ and assume $[g,h]_\bullet\neq I$. The group $<g,h>$
generated by $g,h$ is infinite.
\end{lem}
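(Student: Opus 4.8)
The plan is to iterate the group commutator and use the contraction inequality (\ref{ineq1}) to produce an infinite sequence of distinct elements inside $B_1$ that converges to $I$, which forces $<g,h>$ to be infinite. First I would set $C_1 = [g,h]_\bullet$ and, more generally, $C_{k+1} = [g, C_k]_\bullet$. Since $g, h \in B_1$ means $\|g - I\| < 1/\sqrt{2}$ and $\|h-I\| < 1/\sqrt{2}$, inequality (\ref{ineq1}) gives $\|C_1 - I\| \le \sqrt{2}\,\|g-I\|\,\|h-I\| < \sqrt{2}\cdot \tfrac{1}{2} = \tfrac{1}{\sqrt 2}$, so $C_1 \in B_1$ as well. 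Inductively, if $C_k \in B_1$ then $\|C_{k+1} - I\| \le \sqrt 2\,\|g - I\|\,\|C_k - I\| < \sqrt 2 \cdot \tfrac{1}{\sqrt 2}\cdot \|C_k - I\| = \|C_k - I\|$; in fact, writing $r = \sqrt 2\,\|g-I\| < 1$, we get $\|C_{k+1}-I\| \le r\,\|C_k - I\|$, hence $\|C_k - I\| \to 0$ geometrically. So all the $C_k$ lie in $<g,h>$ and accumulate at $I$.

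Next I would rule out the sequence being eventually trivial. Suppose $C_{k_0} = I$ for some minimal $k_0 \ge 1$; note $k_0 \ge 2$ since $C_1 = [g,h]_\bullet \neq I$ by hypothesis. Then $C_{k_0} = [g, C_{k_0-1}]_\bullet = I$ says $g$ commutes with $C_{k_0-1}$, i.e. $[g, C_{k_0-1}]_\bullet = I$. But also $C_{k_0 - 1} = [g, C_{k_0-2}]_\bullet$ (interpreting $C_0 := h$), so applying the second part of Lemma \ref{lem:infinite} with $A = g$, $B = C_{k_0-2}$, $C = C_{k_0-1}$: we have $[g, C_{k_0-1}]_\bullet = I$ and $\|C_{k_0-2} - I\| < 1/\sqrt 2 < 2$, which forces $C_{k_0-1} = [g, C_{k_0-2}]_\bullet = I$, contradicting minimality of $k_0$ (or, if $k_0 = 2$, contradicting $C_1 \neq I$). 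Hence $C_k \neq I$ for all $k$.

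Finally I would extract an infinite set. The elements $\{C_k\}_{k\ge 1}$ are all in $<g,h>$, all distinct from $I$, and $C_k \to I$. If the set $\{C_k : k \ge 1\}$ were finite, then since $C_k \to I$ and $I$ would have to be an accumulation point of this finite set, infinitely many $C_k$ would equal $I$ — contradiction. Therefore $\{C_k\}$ is an infinite subset of $<g,h>$, so $<g,h>$ is infinite.

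The main obstacle is the bookkeeping in the second paragraph: one must be careful that the "base case" $C_0 := h$ is handled correctly by the hypothesis $\|h - I\| < 1/\sqrt 2 < 2$, and that the descent argument via the second part of Lemma \ref{lem:infinite} terminates at the genuine obstruction $C_1 = [g,h]_\bullet \neq I$ rather than running past it. Everything else is a routine geometric-contraction estimate using only (\ref{ineq1}) and the definition of $B_1$.
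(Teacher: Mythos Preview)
Your proof is correct and follows essentially the same strategy as the paper: iterate the group commutator to produce a sequence in $\langle g,h\rangle$ that converges geometrically to $I$, then use the second clause of Lemma~\ref{lem:infinite} to rule out the sequence ever hitting $I$. The only cosmetic difference is that the paper fixes $h$ and iterates $g_n=[g_{n-1},h]_\bullet$ starting from $g_0=g$, whereas you fix $g$ and iterate $C_{k+1}=[g,C_k]_\bullet$ starting from $C_0=h$; since the hypotheses on $g$ and $h$ are symmetric, this is immaterial.
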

\begin{proof}
Define the sequence $g_{0}=g$, $g_{1}=[g_{0},h]_{\bullet }$, $g_{n}=[g_{n-1},h]_{\bullet }$. By our assumptions $\|h-I\|=d\leq1/\sqrt{2}$ . Therefore using Lemma \ref{lem:infinite}
\begin{gather*}
\|g_{n}-I\|\leq\sqrt{2}d\|g_{n-1}-I\|.
\end{gather*}
Thus $\|g_{n}-I\|\leq(\sqrt{2}d)^{n}\|g-I\|$ and $g_{n}\rightarrow I$, when $n\rightarrow \infty$.
Assume that the sequence is finite, i.e. for some $N$ we have $g_{N}=I$.
That means $[g_{N-1},h]_{\bullet }=I$. But $g_{N-1}=[g_{N-2},h]_{\bullet }$ and clearly
$\|g_{k}-I\|<2$ and by Lemma \ref{lem:infinite}, $[g_{N-2},h]_{\bullet }=I$. Repeating this argument
we get $[g,h]_{\bullet }=I$ which is a contradiction. Therefore $<g,h>$ is infinite. 
\end{proof}
\begin{coll}\label{coll:balls}
Let $g\in B_{\alpha_1}$ and $h\in B_{\alpha_2}$, where $\alpha_1$ and $\alpha_2$ are such that $\alpha_1I,\alpha_2 I\in Z(G)$ and assume $[g,h]_\bullet\notin Z(G)$. Then the  group $<g,h>$ is infinite.\label{coll:infinite_with_exponents}
\end{coll}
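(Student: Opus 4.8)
The plan is to reduce the statement to Lemma \ref{lem:inf} by absorbing the central factors. Write $z_1=\alpha_1 I$, $z_2=\alpha_2 I\in Z(G)$ and set $g'=z_1^{-1}g$, $h'=z_2^{-1}h$ (still elements of $G$). Since $z_1,z_2$ are unitary, multiplication by $z_i^{-1}$ preserves the norm $\|\cdot\|$ (cyclicity of the trace together with $z_i^\dagger=z_i^{-1}$), so
\[
\|g'-I\|=\|g-\alpha_1 I\|<1/\sqrt2,\qquad \|h'-I\|=\|h-\alpha_2 I\|<1/\sqrt2;
\]
that is, $g',h'\in B_1$. Moreover, since $z_1,z_2$ are central they cancel out of the group commutator:
\[
[g,h]_\bullet=(z_1g')(z_2h')(z_1g')^{-1}(z_2h')^{-1}=g'h'(g')^{-1}(h')^{-1}=[g',h']_\bullet .
\]
Thus $[g',h']_\bullet=[g,h]_\bullet\notin Z(G)$, in particular $[g',h']_\bullet\neq I$, so Lemma \ref{lem:inf} applies to the pair $g',h'$ and $<g',h'>$ is infinite.

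The one real point is that this does not yet prove the claim, because $<g',h'>$ need not be contained in $<g,h>$. A quick way to close the gap is to pass to the quotient $\pi\colon G\to G/Z(G)$: since $g,h$ and $g',h'$ have the same images, $\pi(<g,h>)=\pi(<g',h'>)$; because $Z(G)$ is finite the restriction of $\pi$ to $<g',h'>$ has finite kernel, so $\pi(<g',h'>)$ — and hence $\pi(<g,h>)$, and a fortiori $<g,h>$ — is infinite.

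Alternatively, and more in the spirit of the preceding lemmas, I would rerun the iterated-commutator construction directly inside $<g,h>$, being careful to start it one step in: put $w_0:=[g,h]_\bullet=[g',h']_\bullet$ and $w_n:=[w_{n-1},h]_\bullet$. Every $w_n$ lies in $<g,h>$, and since $z_2$ is central one has $[w_{n-1},h]_\bullet=[w_{n-1},h']_\bullet$, so $(w_n)_{n\geq0}$ is precisely the sequence $g_1,g_2,\dots$ produced in the proof of Lemma \ref{lem:inf} for the pair $(g',h')$. Here it matters that $w_0$ itself is small, $\|w_0-I\|\leq\sqrt2\,\|g'-I\|\,\|h'-I\|<1/\sqrt2$ by (\ref{ineq1}): this is what lets the contraction estimate and the ``$\|w_k-I\|<2$'' hypothesis of Lemma \ref{lem:infinite} apply uniformly, whereas starting the sequence from $g$ itself would fail, since $\|g-\alpha_1 I\|$ being small does not make $\|g-I\|$ small when $\alpha_1\neq1$. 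That proof then shows the set $\{w_n\}$ is infinite: it converges to $I$ but can never equal $I$, as $w_N=I$ would force $[g',h']_\bullet=I$, contradicting $[g,h]_\bullet\notin Z(G)$. Since $\{w_n\}\subset<g,h>$, the group $<g,h>$ is infinite, which is the claim. The only genuine obstacle is this transfer step; the reduction itself is just the invariance of $\|\cdot\|$ and of $[\cdot,\cdot]_\bullet$ under the central scalars $\alpha_1,\alpha_2$.
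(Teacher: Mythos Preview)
Your argument is correct and follows the same route as the paper: shift $g,h$ by the central scalars to land in $B_1$, observe that the group commutator is unchanged, invoke Lemma \ref{lem:inf}, and then transfer infiniteness back to $<g,h>$ via the finite quotient $G\to G/Z(G)$ (the paper phrases this last step as ``$<g,h>$ is up to the finite covering equal to $<g',h'>$''). Your second alternative, rerunning the iterated-commutator sequence starting from $w_0=[g,h]_\bullet$ directly inside $<g,h>$, is a nice variant that avoids the quotient entirely, but the first route is already what the paper does.
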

\begin{proof}
If $\alpha_1=\alpha_2=1$ the result follows from Lemma \ref{lem:inf}. For all other $\alpha_i$'s let $g^\prime=\alpha_1 ^{-1}g$ and $h^\prime=\alpha_2^{-1}h$. Then $h^\prime,g^\prime\in B_1$ and  $[g^\prime,h^\prime]_{\bullet }\neq I$. Thus by Lemma \ref{lem:inf}, $<g^\prime,h^\prime>$ is infinite. Note that $<g,h>$ is up to the finite covering equal to $<g^\prime,h^\prime> $ and therefore is infinite too. 
\end{proof}
\begin{figure}[ht!]
\centering\includegraphics*[scale=0.4]{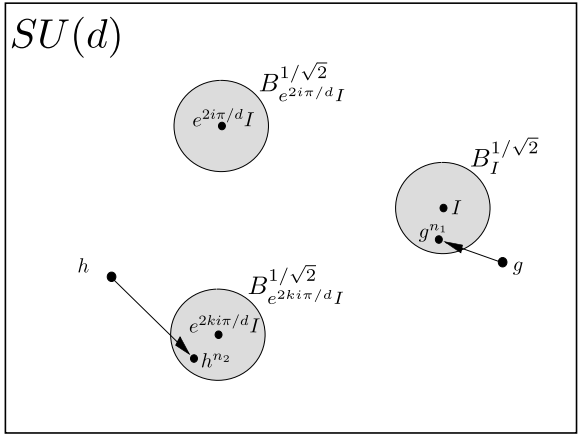}
\caption{The group $SU(d)$ with the exemplary open balls $B_\alpha$ centred at elements form $Z(SU(d))$.}
\end{figure}

We next provide explicit conditions for elements of $G$ to belong to balls $ B_{\alpha }$. To this end let $\alpha_m I$ be the elements of $Z(G)$. We have the following
\begin{gather}
\|g-\alpha_mI\|^2=\mathrm{tr}(g-\alpha_mI)(g^\ast-\alpha^\ast_mI)=2\mathrm{tr}I-\alpha^\ast_m\mathrm{tr}g-\alpha_m\mathrm{tr}g^\ast.
\label{def:distance_with_alpha}
\end{gather}
For $SU(d)$ we have $\alpha_m^d=1$ and hence $\alpha_m=\cos\theta_m+i\sin\theta_m$, where $\theta_m=\frac{2m\pi}{d}$ and $m\in\{1,\ldots,d\}$. Let $\{e^{i\phi_1},e^{i\phi_2},\ldots,e^{i\phi_d}\}$ be the spectrum of $U_d\in SU(d)$. The conditions for $U_d\in SU(d)$ to belong to the ball $B_{\alpha_m}$ read:

\begin{gather}\label{condB1}
U_d\in B_{\alpha_m} \Leftrightarrow\sum_{i=1}^{d}\sin^{2}\frac{\phi_{i}-\theta_m}{2}<\frac{1}{8},\,\,\,\,\sum_{i=1}^{d}\phi_{i}=0\,\mathrm{mod}\,2\pi.
\end{gather}
For $SO(2k+1)$ the centre is trivial and we have only one ball $B_1$. Let $\{1,e^{i\phi_1},e^{-i\phi_1},\ldots,e^{i\phi_k},e^{-i\phi_k}\}$ be the spectrum of  $O_{2k+1}\in SO(2k+1)$. We have 
\begin{gather}\label{condB2}
O_{2k+1}\in B_1\Leftrightarrow \sum_{i=1}^{k}\sin^{2}\frac{\phi_{i}}{2}<\frac{1}{16}.
\end{gather}
Finally $Z(SO(2k))=\{I,-I\}$ and we have two balls $B_1$, $B_{-1}$. Let $$\{e^{i\phi_1},e^{-i\phi_1},\ldots,e^{i\phi_k},e^{-i\phi_k}\},$$ be the spectrum of $O_{2k}$. The conditions for the spectral angles are as follows 
\begin{gather}\label{condB3}
O_{2k}\in B_{1} \Leftrightarrow \sum_{i=1}^{k}\sin^{2}\frac{\phi_{i}}{2}<\frac{1}{16},\\\label{condB4}
O_{2k}\in B_{-1} \Leftrightarrow \sum_{i=1}^{k}\sin^{2}\frac{\phi_{i}-\pi}{2}<\frac{1}{16}.
\end{gather}
\begin{thm}\label{thm-alg}
Let  $G=SO(d)$ or $G=SU(d)$. Let $\mathcal{S}=\{g_1,\ldots,g_n\}\subset G$ be such that $g_i\in B_{\alpha}$, where $\alpha I\in Z(G)$ and let $\mc{X}=\{X_1,\ldots,X_n\}\subset \mk{g}$ be the Lie algebra elements assigned to $\mc{S}$ (constructed as described in Section \ref{log-construct}). $\mc{S}$ generates $G$ if and only if $\mc{X}$ generates $\mk{g}$.
\end{thm}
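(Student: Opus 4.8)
The idea is to bridge the group-level generation criterion of Corollary~\ref{Ad1} and Lemma~\ref{Ad2} with the algebra-level one of Corollary~\ref{ad1} and Lemma~\ref{ad2}, by showing that the ball hypothesis $g_i\in B_\alpha$ forces the two commutants to coincide, $\mc{C}(\mr{Ad}_\mc{S})=\mc{C}(\mr{ad}_\mc{X})$. Indeed, by Facts~\ref{diff-su} and~\ref{diff-so} these commutants can differ only if, for some $g_i$, a difference (for $SU(d)$) or a difference or a sum (for $SO(d)$) of spectral angles is an odd multiple of $\pi$; but membership in $B_\alpha$, via the inequalities (\ref{condB1})--(\ref{condB4}), says that $\sum_j\sin^2\frac{\phi_j-\theta}{2}$ is small, with $\theta=\theta_m$ for $SU(d)$ and $\theta\in\{0,\pi\}$ for $SO(d)$, so all spectral angles of $g_i$ lie in a narrow window around the single value $\theta$ and a one-line estimate rules out $\phi_a\pm\phi_b\equiv\pi\ (\mathrm{mod}\ 2\pi)$. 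I will also use the standing inclusions $\mc{C}(\mr{Ad}_G)=\mc{C}(\mr{ad}_\mk{g})\subseteq\mc{C}(\mr{ad}_\mc{X})\subseteq\mc{C}(\mr{Ad}_\mc{S})$.

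For the ``only if'' direction, suppose $\mc{S}$ generates $G$. Then $\overline{<\mc{S}>}=G$ and, by continuity of $\mr{Ad}$, $\mc{C}(\mr{Ad}_\mc{S})=\mc{C}(\mr{Ad}_G)$; combined with the inclusions above this forces $\mc{C}(\mr{ad}_\mc{X})=\mc{C}(\mr{ad}_\mk{g})$, so $\mc{X}$ generates $\mk{g}$ by Lemma~\ref{ad2}. (This implication does not even use the ball hypothesis.)

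For the ``if'' direction, suppose $\mc{X}$ generates $\mk{g}$. By Lemma~\ref{ad2}, $\mc{C}(\mr{ad}_\mc{X})=\mc{C}(\mr{ad}_\mk{g})$, hence $\mc{C}(\mr{Ad}_\mc{S})=\mc{C}(\mr{Ad}_G)$ by the first paragraph. When $G$ is simple it then remains, by Corollary~\ref{Ad1}, to show that $<\mc{S}>$ is infinite. If all the matrices $\mr{Ad}_{g_i}$ commuted, they would be simultaneously orthogonally block-diagonalizable into blocks of size at most $2$; since $\dim\mk{g}\geq 3$, the projection onto any such block lies in $\mc{C}(\mr{Ad}_\mc{S})$ but not in $\mc{C}(\mr{Ad}_G)=\mc{C}(\mr{ad}_\mk{g})$ (whose elements are scalar on each simple ideal, every ideal having dimension $\geq 3$), contradicting the equality just obtained. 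Hence $\mr{Ad}_{g_i}$ and $\mr{Ad}_{g_j}$ fail to commute for some $i,j$, i.e.\ $[g_i,g_j]_\bullet\notin\ker\mr{Ad}=Z(G)$; as $g_i$ and $g_j$ lie in the \emph{same} ball $B_\alpha$, Corollary~\ref{coll:balls} shows that $<g_i,g_j>$, and therefore $<\mc{S}>$, is infinite. Corollary~\ref{Ad1} now yields $\overline{<\mc{S}>}=G$.

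The remaining case $G=SO(4)=(SU(2)\times SU(2))/\mathbb{Z}_2$ is, I expect, the most delicate point, since Lemma~\ref{Ad2} additionally requires the projection of $<\mc{S}>$ onto each $SO(3)$-factor to be infinite. I would obtain this by repeating the argument above inside each factor: the projected elements still lie in a ball $B_1(SO(3))$ (a short computation, using $\mr{Ad}_{\pm I_4}=I$, shows that the $SO(4)$-ball inequality descends), the projected logarithms still generate $\mk{so}(3)$ because $\mc{X}$ generates $\mk{so}(4)=\mk{so}(3)\oplus\mk{so}(3)$, membership in $B_1(SO(3))$ again excludes spectral angle $\pi$ so Fact~\ref{diff-so} equates the relevant commutants there, and the dimension count $\dim\mk{so}(3)=3$ then forces a projected pair with non-commuting adjoints, hence an infinite projected subgroup by Lemma~\ref{lem:inf}. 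With $<\mc{S}>$ and both its simple-factor projections infinite, Lemma~\ref{Ad2} applies and finishes the proof.
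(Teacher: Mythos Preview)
Your proof is correct and follows essentially the same approach as the paper: use Facts~\ref{diff-su} and~\ref{diff-so} together with the ball inequalities (\ref{condB1})--(\ref{condB4}) to show that a spectral-angle difference (or sum) equal to an odd multiple of $\pi$ is incompatible with $g_i\in B_\alpha$, hence $\mc{C}(\mr{Ad}_\mc{S})=\mc{C}(\mr{ad}_\mc{X})$, and then invoke Lemma~\ref{Ad2}/Corollary~\ref{Ad1}. Your write-up is in fact more complete than the paper's terse argument: you explicitly dispatch the ``only if'' direction via the commutant inclusions, you supply the missing verification that $<\mc{S}>$ is infinite (producing a pair with $[g_i,g_j]_\bullet\notin Z(G)$ and applying Corollary~\ref{coll:balls}), and you address the semisimple case $SO(4)$ separately --- all points the paper leaves implicit.
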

\begin{proof}
By Lemma \ref{Ad2}, matrices $\mc{S}$ generate $G$ if they generate an infinite subgroup and $\mc{C}(\mr{Ad}_{\mc{S}})=\mc{C}(\mr{Ad}_{G})$. The cases when spaces $\mc{C}(\mr{Ad}_{\mc{S}})$ and  $\mc{C}(\mr{ad}_{\mc{X}})$ can differ are characterised by Facts \ref{diff-su} and \ref{diff-so}. Assume that $\mc{S}\subset SU(d)$. The spaces $\mc{C}(\mr{Ad}_{\mc{S}})$ and  $\mc{C}(\mr{ad}_{\mc{X}})$ can differ if and only if for one of the matrices $g_i\in\mc{S}$ we have $\phi^i_{a,b}=k\pi$, where $k$ is odd. But then $\phi^i_a=\phi^i_b\pm\pi$ and for some $\theta_m=\frac{2\pi m}{d}$
\[
\sin^{2}\frac{\phi^i_{b}\pm\pi-\theta_m}{2}+\sin^{2}\frac{\phi^i_{b}-\theta_m}{2} =1,
\]
which means $g_i$ does not satisfy (\ref{condB1}). Assume next that $\mc{S}\subset SO(d)$. The spaces $\mc{C}(\mr{Ad}_{\mc{S}})$ and  $\mc{C}(\mr{ad}_{\mc{X}})$ can differ iff the difference or the sum of spectral angles $\phi^i_a$ and $\phi^i_b$ is equal to an odd multiple of $\pi$. For odd $d$ we arrive at
\[
\sin^{2}\frac{\pm\phi^i_{b}\pm\pi}{2}+\sin^{2}\frac{\phi^i_{b}}{2} =1,
\]
and for even $d$ we additionally have
\[
\sin^{2}\frac{\pm\phi^i_{b}\pm\pi-\pi}{2}+\sin^{2}\frac{\phi^i_{b}-\pi}{2} =1,
\]
which means $g_i$ does not satisfy (\ref{condB2}), (\ref{condB3}) or (\ref{condB4}). 
\end{proof}
\subsection{Universal sets for  $G$}
In this section we consider situation when not all the matrices belonging to $\mc{S}$  are contained in $\mc{B}$. We already know that if there are two elements $g,h\in \overline{<S>}\cap\mc{B}$ such that $[g,h]_\bullet\notin Z(G)$ than the group $\overline{<S>}$ is infinite. It turns out that for $\mc{S}$ that satisfies the necessary universality condition, i.e. $\mc{C}(\mr{Ad}_{g_1},\ldots,\mr{Ad}_{g_k})=\{\lambda I\}$ this is actually an equivalence relation. 

\begin{lem}\label{lem19}
Let $\mc{S}=\{g_1,\ldots,g_k\}\subset G$ be such that $\mc{C}(\mr{Ad}_{g_1},\ldots,\mr{Ad}_{g_k})=\{\lambda I\}$. The group $\overline{<S>}$ is infinite \textbf{if and only if} there are at least two elements $g,h\in \overline{<S>}\cap\mathcal{B}$ satisfying $[g,h]_\bullet\notin Z(G)$.
\end{lem}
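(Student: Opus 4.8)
The statement splits into an easy implication and a substantive one, and I would organise the proof accordingly. For the ``if'' direction there is essentially nothing to do: if $g,h\in\overline{<\mc S>}\cap\mc B$ with $[g,h]_\bullet\notin Z(G)$, then by definition of $\mc B$ there are $\alpha_1I,\alpha_2I\in Z(G)$ with $g\in B_{\alpha_1}$ and $h\in B_{\alpha_2}$, so Corollary~\ref{coll:balls} gives that $<g,h>$ is infinite; since $g,h\in\overline{<\mc S>}$ and $\overline{<\mc S>}$ is a group (Fact~\ref{liesub}), we have $<g,h>\subseteq\overline{<\mc S>}$ and hence $\overline{<\mc S>}$ is infinite.

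For the ``only if'' direction, write $H=\overline{<\mc S>}$ and assume $H$ is infinite; the point is to upgrade this to $H=G$. Being a compact Lie group with finitely many components, $H$ is infinite only if $\dim H\geq1$, so the Lie algebra $\mk h$ of the identity component $H_e$ is nonzero. Commuting with all $\mr{Ad}_{g_i}$ is the same as commuting with $\mr{Ad}_{<\mc S>}$ (since $\mr{Ad}$ is a homomorphism) and, the set of endomorphisms commuting with a fixed matrix being closed, the same as commuting with $\mr{Ad}_H$; thus $\mc C(\mr{Ad}_H)=\mc C(\mr{Ad}_{g_1},\dots,\mr{Ad}_{g_k})=\{\lambda I\}$. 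I would then invoke Lemma~\ref{Ad2} (or its simple-group form, Corollary~\ref{Ad1}): for simple $G$ one has $\mc C(\mr{Ad}_G)=\{\lambda I\}$, so the equality $\mc C(\mr{Ad}_{\mc S})=\mc C(\mr{Ad}_G)$ together with infiniteness of $<\mc S>$ forces $\overline{<\mc S>}=G$. (Directly, if one prefers a self-contained argument: normality of $H_e$ in $H$ makes $\mk h$ an $\mr{Ad}_H$-invariant subspace; a scalar commutant together with $\mr{Ad}_H\subseteq SO(\mk g)$ forces the $\mr{Ad}_H$-action on $\mk g$ to be irreducible, since the orthogonal projection onto any proper nonzero invariant subspace would be a non-scalar element of the commutant; hence $\mk h=\mk g$, so $H_e=G$ and $H=G$.)

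It then remains to produce the two elements inside $G=\overline{<\mc S>}$. Since $\mk{su}(d)$ and $\mk{so}(d)$ are nonabelian in the relevant range, choose $X,Y\in\mk g$ with $[X,Y]\neq0$ and set $g_t=e^{tX}$, $h_t=e^{tY}$. For small $t$ both $\|g_t-I\|$ and $\|h_t-I\|$ fall below $1/\sqrt2$, so $g_t,h_t\in B_1\subseteq\mc B$, and a Baker--Campbell--Hausdorff expansion gives $[g_t,h_t]_\bullet=\exp(t^2[X,Y]+O(t^3))$, whence $[g_t,h_t]_\bullet\neq I$ for all sufficiently small $t\neq0$; shrinking $t$ further so that $[g_t,h_t]_\bullet$ lies in a neighbourhood of $I$ meeting the finite set $Z(G)$ only at $I$, we get $[g_t,h_t]_\bullet\notin Z(G)$, and since $g_t,h_t\in G=\overline{<\mc S>}$ this is the desired pair. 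The only genuinely substantive step is the reduction ``$H$ infinite $\Rightarrow H=G$'' in the second paragraph; the rest is bookkeeping, and the subtlety there is simply to remember that $\mk h$ is invariant under the whole group $H$, not merely under its identity component, so that the scalar-commutant hypothesis leaves no room for a proper nonzero $\mk h$.
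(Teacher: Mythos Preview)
Your proof is correct and follows essentially the same route as the paper's: both directions match, with the substantive ``only if'' step being to use the commutant hypothesis together with infiniteness (via Lemma~\ref{Ad2}/Corollary~\ref{Ad1}) to conclude $\overline{<\mc S>}=G$, after which the existence of a suitable pair in $\mc B$ is immediate. Your version is more explicit than the paper's in two places---you spell out why $\mc C(\mr{Ad}_H)=\mc C(\mr{Ad}_{\mc S})$ and you give a concrete BCH construction of the pair $g_t,h_t$---whereas the paper simply asserts that ``balls $B_\alpha$ must contain elements of $<S>$ commuting to a noncentral element''; but these are elaborations of the same argument, not a different one.
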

\begin{proof}
Assume $\overline{<S>}$ is infinite. Then under the assumption $\mc{C}(\mr{Ad}_{\mc{S}})=\{\lambda I\}$ we have 
$\overline{<S>}=G$. Thus balls  $B_{\alpha}$ must contain elements of $<S>$ commuting to a noncentral elements and the result follows. On the other hand if  there are at least two elements $g,h\in \overline{<S>}$ such that they belong to some balls $B_{\alpha }$, where $\alpha I\in Z(G)$, and $[g,h]_\bullet\notin Z(G)$ then by Corollary \ref{coll:infinite_with_exponents} $\overline{<S>}$ is infinite. 
\end{proof}
We already know that the necessary universality condition places significant constraints on the structure of the infinite $\overline{<\mc{S}>}$. It turns out that this is the case also when  $\overline{<\mc{S}>}$ is finite. The constrains regard the structure of $<\mc{S}>\cap \mc{B}$. 
\begin{lem}\label{lem20}
Let $\mc{S}=\{g_1,\ldots,g_k\}\subset G$ be such that $\mc{C}(\mr{Ad}_{g_1},\ldots,\mr{Ad}_{g_k})=\{\lambda I\}$. Then either the intersection of $<S>$ with $\mc{B}$ is dense in $\mc{B}$ or is a subgroup of  $Z(G)$. In the first  case $\overline{<S>}=G$ and in the second one $\overline{<S>}$ is finite. 
\end{lem}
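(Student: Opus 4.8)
The plan is to analyze the set $H\cap\mathcal{B}$, where $H=\overline{<S>}$, by exploiting the dichotomy from Lemma \ref{lem19} together with the necessary universality condition $\mc{C}(\mr{Ad}_{\mc{S}})=\{\lambda I\}$. First I would split into the two mutually exclusive cases: either there exist $g,h\in H\cap\mathcal{B}$ with $[g,h]_\bullet\notin Z(G)$, or no such pair exists. In the first case, Corollary \ref{coll:infinite_with_exponents} gives that $H$ is infinite, and then Corollary \ref{Ad1} (applied componentwise via Lemma \ref{Ad2}, noting that $\mc{C}(\mr{Ad}_{\mc{S}})=\{\lambda I\}$ forces the projection onto each simple component to be infinite as well) yields $H=G$; in that case $<S>$ is dense in $G$, hence $<S>\cap\mathcal{B}$ is dense in $\mathcal{B}$, and we are done.

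The substance is the second case: assume every pair $g,h\in H\cap\mathcal{B}$ satisfies $[g,h]_\bullet\in Z(G)$. The goal is to show $<S>\cap\mathcal{B}$ (equivalently $H\cap\mathcal{B}$) is a subgroup of $Z(G)$ and that $H$ is finite. First I would check that $H\cap\mathcal{B}$ is closed under multiplication and inverses: it is symmetric because $\|g-\alpha I\|=\|g^{-1}-\alpha^{-1}I\|$ for unitary/orthogonal $g$, so $g\in B_\alpha\Rightarrow g^{-1}\in B_{\alpha^{-1}}$, and $\alpha^{-1}I\in Z(G)$; for products one uses the ball estimates from \eqref{condB1}--\eqref{condB4} together with the fact that, modulo the centre, elements of $\mathcal{B}$ are close to $I$, so that $\|gh-\beta I\|$ is controlled — here the radius $1/\sqrt{2}$ is exactly what makes the triangle-type inequality close (this is where I expect the main technical friction). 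Thus $H\cap\mathcal{B}$ is a subgroup of $H$ all of whose pairs of elements commute up to $Z(G)$; I would then argue it is in fact abelian modulo $Z(G)$ and, being a connected-or-finite subgroup of a ball of radius $1/\sqrt{2}$ around the centre, cannot contain a one-parameter subgroup that leaves $\mathcal{B}$ — but more cleanly: if $H\cap\mathcal{B}\not\subset Z(G)$, pick $g\in (H\cap\mathcal{B})\setminus Z(G)$; its powers stay in $\mathcal{B}$ (again by the ball conditions, since being near a central element is preserved under taking powers when the spectral angles are small), so $\overline{<g>}$ is an abelian subgroup contained in $\mathcal{B}$, hence a torus or finite; if it is a nontrivial torus it already contradicts $\mathcal{B}$ being a union of small balls around $Z(G)$ unless that torus is trivial, and in any case $\mathrm{Ad}_g$ then commutes with a nontrivial subspace of $\mk g$ coming from the centralizer of this torus, contradicting $\mc{C}(\mr{Ad}_{\mc{S}})=\{\lambda I\}$ once we observe $g\in\overline{<S>}$ forces $\mc{C}(\mr{Ad}_{\mc{S}})\subseteq\mc{C}(\mr{Ad}_{\overline{<S>}})\subseteq\mc{C}(\mr{Ad}_g)$.

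So the contradiction route is: $H\cap\mathcal{B}\not\subset Z(G)$ produces an element $g$ whose adjoint image $\mr{Ad}_g$ is nontrivial (as $g\notin Z(G)=\ker\mr{Ad}$) yet close to the identity in $SO(\mk g)$, so by Lemma \ref{lem:inf}-type reasoning applied to $\mr{Ad}_g$ and any other $\mr{Ad}_h$ with $h\in H\cap\mathcal{B}$ one would build a noncentral commutator unless all such pairs genuinely commute in $G/Z(G)$, i.e. $\mr{Ad}_{H\cap\mathcal{B}}$ is abelian; its commutant is then strictly larger than $\{\lambda I\}$ (a proper abelian subgroup of $SO(\mk g)$ acting on the irreducible $\mk g$ has a nontrivial commutant — e.g. it fixes a nonzero vector or has an invariant proper subspace), contradicting the hypothesis unless $H\cap\mathcal{B}\subset Z(G)$. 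Once $H\cap\mathcal{B}\subset Z(G)$, finiteness of $H$ follows because an infinite compact $H$ would have positive-dimensional identity component $H_e$, and $H_e$ necessarily meets every neighbourhood of $I$, in particular $B_1\subset\mathcal{B}$, in non-central elements — contradiction. The step I expect to be the main obstacle is showing $H\cap\mathcal{B}$ is closed under multiplication (i.e. that the product of two elements near the centre lands near the centre), since the radius $1/\sqrt{2}$ is tight and one must use the explicit spectral conditions \eqref{condB1}--\eqref{condB4} rather than a crude triangle inequality; I would handle it by reducing to the case $\alpha_1=\alpha_2=1$ via Corollary \ref{coll:infinite_with_exponents}'s trick of multiplying by central elements, and then estimating $\sum\sin^2(\cdot/2)$ for the product using that each factor has all spectral angles bounded by the ball condition.
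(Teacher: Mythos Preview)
Your first case (exists a pair with noncentral commutator) is handled correctly. The problems are all in the second case, where you try to force $H\cap\mathcal{B}\subset Z(G)$.

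First, the step you yourself flag as ``the main obstacle'' --- that $H\cap\mathcal{B}$ is closed under multiplication --- is simply false in general and is not used in the paper. If $g\in B_1$ has small but nonzero spectral angles, $g^2,g^3,\ldots$ eventually leave $B_1$; and if $g\in B_\alpha$ with $\alpha\neq 1$, already $g^2\in B_{\alpha^2}$ need not lie in $\mathcal{B}$ at all. So neither the subgroup claim nor the ``powers stay in $\mathcal{B}$'' claim can be made to work, and no amount of tightening the spectral inequalities \eqref{condB1}--\eqref{condB4} will fix this.

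Second, and more seriously, your commutant contradiction runs in the wrong direction. You correctly observe that $\mathcal{C}(\mathrm{Ad}_{\mathcal{S}})=\mathcal{C}(\mathrm{Ad}_{\overline{<S>}})\subseteq\mathcal{C}(\mathrm{Ad}_{H\cap\mathcal{B}})$, since $H\cap\mathcal{B}\subset\overline{<S>}$ and commutants are contravariant in the generating set. But then $\mathcal{C}(\mathrm{Ad}_{H\cap\mathcal{B}})$ being \emph{large} is perfectly compatible with $\mathcal{C}(\mathrm{Ad}_{\mathcal{S}})=\{\lambda I\}$; there is no contradiction. The statement ``an abelian subgroup of $SO(\mathfrak{g})$ has nontrivial commutant'' is true but irrelevant, because nothing forces that commutant to be contained in $\mathcal{C}(\mathrm{Ad}_{\mathcal{S}})$.

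The idea you are missing, and which the paper uses, is to exploit irreducibility of $\mathrm{Ad}_{\mathcal{S}}$ on an \emph{invariant subspace of $\mathfrak{g}$} rather than on a commutant. In the finite case, each $B_\alpha\cap<S>$ is a finite set stable under conjugation by every $g_i\in\mathcal{S}$ (conjugation preserves the distance to $\alpha I$). Writing each element as $\alpha e^{X_j}$ with $X_j\in\mathfrak{g}$ small, one checks that $\mathrm{Ad}_{g_i}$ permutes the $X_j$'s, so $\mathrm{Span}\{X_j\}$ is an $\mathrm{Ad}_{\mathcal{S}}$-invariant subspace of $\mathfrak{g}$. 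The hypothesis $\mathcal{C}(\mathrm{Ad}_{\mathcal{S}})=\{\lambda I\}$ is equivalent to irreducibility of this action, forcing the span to be $0$ or all of $\mathfrak{g}$. But the $X_j$ mutually commute (one first upgrades $[h,h']_\bullet\in Z(G)$ to $[h,h']_\bullet=I$ using the estimate \eqref{ineq1} and disjointness of the balls), so they cannot span $\mathfrak{g}$. Hence the span is $0$, i.e.\ $B_\alpha\cap<S>\subset\{\alpha I\}$. Your finiteness argument at the very end (via $H_e$ meeting $B_1$) is fine once this is in place.
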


\begin{proof}
The group $\overline{<S>}$ can be either infinite or finite. When it is infinite, then by the necessary universality condition, i.e.  $\mc{C}(\mr{Ad}_{g_1},\ldots,\mr{Ad}_{g_k})=\{\lambda I\}$, we have $\overline{<S>}=G$ and it is obvious that $\mc{B}\cap\mc{<S>}$ is dense in $\mc{B}$. Assume next that $<S>$ is finite. By Corollary \ref{coll:infinite_with_exponents} the group commutators of elements from  $\mc{B}\cap\mc{<S>}$ belong to $Z(G)$. We first show that in fact they are equal to the identity, i.e. elements from $\mc{B}\cap\mc{<S>}$ commute.  To see this let $h_1\in B_{\alpha_1}$ and $h_2\in B_{\alpha_2}$. Assume $[h_1,h_2]_\bullet\in Z(G)$. One can always find $\tilde{h}_1,\tilde{h}_2\in B_1$ such that $h_1=\alpha_1 \tilde{h}_1$ and $h_2=\alpha_2\tilde{h}_2$. We have:
\begin{gather}
[h_1,h_2]_\bullet=[\alpha_1 \tilde{h}_1,\alpha_2\tilde{h}_2]_\bullet=\alpha_1 \tilde{h}_1\alpha_2\tilde{h}_2\alpha_1^{-1} \tilde{h}_1^{-1}\alpha_2^{-1}\tilde{h}_2=[\tilde{h}_1,\tilde{h}_2]_\bullet.
\end{gather}
But by inequality (\ref{ineq1}) we have $[\tilde{h}_1,\tilde{h}_2]_\bullet\in B_1$ and it is also easy to see that $B_{\alpha_i}$'s are disjoint. Thus $[h_1,h_2]_\bullet=I$. Next we note that each $B_{\alpha }\cap\mc{<S>}$ is invariant under the conjugation by elements form $G$. Let $\{h_1,\ldots,h_m\}$ be all elements from $B_\alpha\cap <S>$. Once again we can find elements $\{\tilde{h}_1,\ldots, \tilde{h}_m\}\subset B_1$ satisfying $h_i=\alpha\tilde{h}_i$. Let $\mathfrak{g}\ni X_i=\log{\tilde{h}_i}$ (constructed as described in Section \ref{log-construct}). Thus elements of $B_\alpha\cap <S>$ are of the form $\{\alpha e^{X_1},\ldots, \alpha e^{X_m}\}$. We also know that $B_\alpha\cap <S>$ is $\mr{Ad}_{\mc{S}}$ invariant, i.e. 
\begin{gather}
g_i\alpha e^{X_j}g_i^{-1}=\alpha \mr{Ad}_{g_i}e^{X_j}=\alpha e^{X_r},\, g_i\in\mc{S},
\end{gather}
where $i\in\{1,\ldots,k\}$ and $j,r\in\{1,\ldots,m\}$. Thus we have $\mr{Ad}_{g_i}e^{X_j}=e^{X_r}$.  As the distance from the identity of the left and right side is smaller than $1$ we have $\log{\mr{Ad}_{g_i}e^{X_j}}=\log{e^{X_r}}$. By the construction, $\log{e^{X_r}}=X_r$ and from our  definition of logarithm: $\log{\mr{Ad}_{g_i}e^{X_j}}=\mr{Ad}_{g_i} \log e^{X_j}=\mr{Ad}_{g_i}X_j$. Hence $\mr{Ad}_{g_i}X_j=X_r$  and the subspace $\{X_1,\ldots,X_m\}\subset \mathfrak{g}$ is an invariant subspace for all matrices $\{\mathrm{Ad}_{g_1},\ldots\mathrm{Ad}_{g_k}\}$. By the condition $\mc{C}(\mr{Ad}_{g_1},\ldots,\mr{Ad}_{g_k})=\{\lambda I\}$ this subspace must be either $0$ or $\mathfrak{g}$. Assume it is $\mk{g}$. Recall that we have:
\[
[\alpha e^{X_j},\alpha e^{X_j}]=0,\,\, i,j\in\{1,\ldots,k\}.
\]
Thus there is $U$ such that $\alpha e^{X_i}=\alpha e^{UD_iU^{-1}}$, where $D_i$ is diagonal. Hence $X_i=UD_iU^{-1}$. Thus matrices $\{X_1,\ldots, X_m\}$ commute and we get a contradiction. Hence $<S>\cap B_{ \alpha}$ is either empty or $\alpha I$.  The result follows.
\end{proof}
Lemma \ref{lem20} leads to the following conclusion:
\begin{coll}\label{finite}
Let $\mc{S}=\{g_1,\ldots,g_k\}\subset G$ be such that $\mc{C}(\mr{Ad}_{g_1},\ldots,\mr{Ad}_{g_k})=\{\lambda I\}$. Then $<\mc{S}>$ is infinite \textbf{if and only if} there is an element in $<\mc{S}>$ that belongs to $\mc{B}$ and does not belong to $Z(G)$.
\end{coll}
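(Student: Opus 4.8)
The plan is to deduce this directly from Lemma \ref{lem20}, which gives a dichotomy under the hypothesis $\mc{C}(\mr{Ad}_{g_1},\ldots,\mr{Ad}_{g_k})=\{\lambda I\}$: either $<\mc{S}>\cap\mc{B}$ is dense in $\mc{B}$, in which case $\overline{<\mc{S}>}=G$ (so in particular $<\mc{S}>$ is infinite), or $<\mc{S}>\cap\mc{B}$ is a subgroup of $Z(G)$, in which case $\overline{<\mc{S}>}$ is finite. So the two alternatives of Lemma \ref{lem20} correspond exactly to "$<\mc{S}>$ infinite" versus "$<\mc{S}>$ finite," and it remains only to translate the descriptions of $<\mc{S}>\cap\mc{B}$ into the statement about membership in $\mc{B}\setminus Z(G)$.

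First I would handle the forward direction. Suppose $<\mc{S}>$ is infinite; then by Lemma \ref{lem20} we are in the first case, $\overline{<\mc{S}>}=G$ and $<\mc{S}>\cap\mc{B}$ is dense in the open set $\mc{B}$. Since $\mc{B}$ is a nonempty open set while $Z(G)$ is finite, $\mc{B}\setminus Z(G)$ is nonempty and open, hence contains a point of the dense set $<\mc{S}>\cap\mc{B}$. That point is an element of $<\mc{S}>$ lying in $\mc{B}$ but not in $Z(G)$, as required. (Strictly, one should note that density of $<\mc{S}>\cap\mc{B}$ in $\mc{B}$ already forces infinitely many such elements, so in particular at least one outside the finite set $Z(G)$.)

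For the converse, suppose there is $g\in <\mc{S}>$ with $g\in\mc{B}$ and $g\notin Z(G)$. If $<\mc{S}>$ were finite, Lemma \ref{lem20} would put us in its second case, forcing $<\mc{S}>\cap\mc{B}$ to be a subgroup of $Z(G)$; but then $g\in <\mc{S}>\cap\mc{B}\subset Z(G)$, contradicting $g\notin Z(G)$. Hence $<\mc{S}>$ must be infinite. This completes the equivalence.

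I do not expect a genuine obstacle here: the corollary is essentially a repackaging of Lemma \ref{lem20}, and the only points requiring a word of care are that $\mc{B}$ is open and nonempty (so that "dense in $\mc{B}$" really does produce non-central elements) and that $Z(G)$ is finite (true for compact semisimple $G$, as recalled in the preliminaries). Everything else is immediate from the stated lemma.
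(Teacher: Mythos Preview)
Your proposal is correct and matches the paper's intent: the corollary is stated there as an immediate consequence of Lemma~\ref{lem20}, with no separate proof given, and your argument simply unpacks that dichotomy exactly as required. The only points needing care---that $\mc{B}$ is open and nonempty and that $Z(G)$ is finite---are already noted in your write-up and are established in the preliminaries.
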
 
Of course $\mc{S}$ can be such that its elements do not belong to $\mc{B}$. In the following we show that by taking powers we can move every element of $G$ into $B_{\alpha}$ for some $\alpha I\in Z(G)$. Moreover there is a global upper bound for the required power. 
\begin{fact}\label{fact:max_exponent}
For groups $G=SU(d)$ and $G=SO(d)$ there is $N_G\in\mathbb{N}$ such that for
every $g\in G$, $g^{n}\in B_{\alpha_m}$ for some $\alpha_mI\in Z(G)$ and $1\leq n\leq N_G$.
\end{fact}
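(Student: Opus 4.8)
The plan is to reduce the claim to a statement about the spectral angles of $g \in G$ and then invoke a Dirichlet-type simultaneous approximation theorem. Recall from \eqref{condB1}--\eqref{condB4} that membership of $g$ in some ball $B_{\alpha_m}$ is governed entirely by the spectral angles $\phi_1,\ldots,\phi_k$ of $g$: for $SU(d)$ one needs $\sum_i \sin^2\frac{\phi_i-\theta_m}{2}<\frac18$ for some $m$ (together with the determinant constraint), and similarly for $SO(d)$ with the bound $\frac{1}{16}$. Raising $g$ to the $n$-th power replaces each $\phi_i$ by $n\phi_i \bmod 2\pi$. So the task is: \emph{find $N_G$ such that for every real vector $(\phi_1,\ldots,\phi_k)$ there exists $1\le n\le N_G$ with all $n\phi_i$ simultaneously within a fixed small angular distance of a common point of the appropriate lattice} ($\frac{2\pi m}{d}\ZZ$ for $SU(d)$, or $0$ and $\pi$ for $SO(d)$).

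First I would treat $SU(d)$. Working on the torus $\mathbb{T}^{d}$ (coordinates the $e^{i\phi_j}$), consider the cyclic orbit $\{g, g^2, \ldots, g^{N}\}$, i.e.\ the points $n\cdot(\phi_1,\ldots,\phi_d)$ for $1\le n\le N$. By the pigeonhole / Dirichlet argument: partition $\mathbb{T}^d$ into finitely many boxes of side $\varepsilon$; if $N$ is large enough ($N > (2\pi/\varepsilon)^d$, say) two of the $N$ orbit points lie in the same box, so their difference $g^{n_2-n_1}$ has all spectral angles within $\varepsilon$ of $0$ — hence, choosing $\varepsilon$ small enough that $d\sin^2(\varepsilon/2)<\tfrac18$, we get $g^{n_2-n_1}\in B_1$. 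This already gives a bound with $\alpha_m = 1$, and in particular proves the fact. I would then record the resulting explicit $N_{SU(d)} = \lceil (2\pi/\varepsilon_d)^d\rceil$ with $\varepsilon_d$ chosen from $d\sin^2(\varepsilon_d/2)=\tfrac18$; the other centre elements $\alpha_m$ are only needed to get sharper bounds later and do not affect existence.

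For $SO(d)$ the argument is the same, with $d$ replaced by $k=\lfloor d/2\rfloor$ (the number of independent spectral angles) and the threshold $\frac18$ replaced by $\frac{1}{16}$: the difference $g^{n_2-n_1}$ has all spectral angles close to $0$, hence lies in $B_1$, which exists in every $SO(d)$. Thus one may take $N_{SO(d)}=\lceil (2\pi/\varepsilon'_k)^{k}\rceil$ with $k\sin^2(\varepsilon'_k/2)=\tfrac{1}{16}$. The main (and only real) obstacle is getting a \emph{clean, uniform} bound rather than just existence: the crude pigeonhole count above is exponential in $d$ and one would like the sharper constants that genuine Dirichlet approximation (and its inhomogeneous refinements, alluded to in the introduction) provide, exploiting also the non-identity centre elements $\alpha_m$ for $SU(d)$; but for the bare statement of Fact~\ref{fact:max_exponent} the pigeonhole version suffices and the refinement can be deferred.
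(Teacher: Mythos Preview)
Your argument is correct, but it takes a different route from the paper. The paper's proof of this fact is purely \emph{qualitative}: it first invokes Dirichlet's theorem only to guarantee that for each fixed $g$ there exists \emph{some} $n_g$ with $g^{n_g}\in B_1$; it then observes that the map $g\mapsto g^{n_g}$ is continuous, so each $g$ has an open neighbourhood $\mathcal{O}_g$ on which the same exponent $n_g$ works, and finally extracts a finite subcover of $\{\mathcal{O}_g\}_{g\in G}$ by compactness of $G$ to obtain the uniform bound $N_G=\max_i n_{g_i}$. No explicit estimate is produced at this stage.

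You, by contrast, run the pigeonhole argument quantitatively on the torus of spectral angles and read off an explicit $N_G$ directly, bypassing the compactness step entirely. This is exactly the mechanism the paper deploys \emph{later}, in Section~5 (Theorems~\ref{dn} and~\ref{mod} and Facts~\ref{sodN},~\ref{sudN}), to compute concrete upper bounds. So your proof is in effect a compressed preview of Section~5, while the paper separates ``existence of $N_G$'' (soft, via compactness) from ``estimation of $N_G$'' (hard, via Dirichlet with explicit constants). Your approach buys directness and an immediate bound; the paper's separation buys clarity about what is needed for the algorithm versus what is needed for sharp constants. One small remark: for $SU(d)$ you may as well work in $\mathbb{T}^{d-1}$ rather than $\mathbb{T}^d$, since the determinant constraint fixes one angle in terms of the others; this is what the paper exploits in Fact~\ref{sudN}.
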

\begin{proof}
Let us first recall that by the Dirichlet theorem (see Theorem 201 in \cite{hardy}),  for given real numbers $x_1,\,x_2,\ldots,x_k$ we can find $n\in\mathbb{N}$ so that $nx_1,\ldots,nx_k$ all differ from integers by as little as we want. Let $\{\phi_1,\ldots,\phi_k\}$ be the spectral angles of $g\in G$ and let  $\phi_i=2\pi x_i$, where $x_i\in [0,1)$. By Dirichlet theorem we can always find $n$ such that $nx_i$'s are close enough to integers to make $g^n$ to belong to $B_1$. For  $g\in G$ let $n_g$ be the smallest positive integer such that $g^{n_g}\in B_{\alpha }$ for some $\alpha I\in  Z(G)$ (by Dirichlet theorem we know that $n_g< \infty$). Let $\mathcal{O}^{n_g}_g$ be an open neighbourhood\footnote{This kind of a neighbourhood exists as taking powers is a continuous operation.} of $g$ such that for any $h\in\mathcal{O}^{n_g}_g$ we have $h^{n_g}\in B_ {\alpha }$. Note that there might be some $h\in \mathcal{O}^{n_g}_g$ for which $n_g$ is not optimal but this will not play any role. Let $\{\mathcal{O}^{n_g}_g\}_{g\in G}$ be the resulting open cover of $G$. As $G$ is compact there is a finite subcover $\{\mathcal{O}^{n_{g_i}}_{g_i}\}$ and hence $N_G=\mathrm{sup}_in_{g_i}$ is well defined and finite.   
\end{proof}

For $g\in G$ let $1\leq n_g\leq N_G$ denote the smallest integer such that $g^{n_g}\in \mathcal{B}$. Using Corollary \ref{finite} we deduce that $<\mc{S}>$ is finite if and only if for every $g\in <S>$ we have $g^{n_g}\in Z(G)$. This in turn places certain constrains on the spectra of elements belonging to $<\mc{S}>$. 
\begin{definition}\label{exangle}
Assume $g\notin \mc{B}$.  The spectrum of $g$ is exceptional if for some $1\leq n\leq N_G$ we have $g^n\in Z(G)$.
\end{definition}

In other words the spectrum of $g$ is exceptional iff (1) $g\in SU(d)$ and all spectral elements of $g$ are $n^\mr{th}$ roots of $\alpha\in \mathbb{C}$, where $\alpha^d=1$, for some fixed $1\leq n\leq N_{SU(d)}$, (2) $g\in SO(2k+1)$ and all spectral elements of $g$ are $n^\mr{th}$ roots of unity for some fixed $1\leq n\leq N_{SO(2k+1)}$, (3) $g\in SO(2k)$ and all spectral elements of $g$ are $n^\mr{th}$ roots of $\alpha$, where $\alpha^2=1$, for some fixed $1\leq n\leq N_{SO(2k+1)}$.
Note that the set of exceptional spectra is a finite set. As a direct consequence we get the following result:
\begin{thm}\label{main}
Let $\mc{S}=\{g_1,g_2,\ldots,g_k\}\subset G$, where $G=SO(d)$ and $d\neq 4$ or $G=SU(d)$. Assume $\mc{C}(\mr{Ad}_{g_1},\ldots,\mr{Ad}_{g_k})=\{\lambda I\}$ and that there is at least one element in $\mc{S}$ for which the spectrum is not exceptional. Then $\overline{<\mc{S}>}=G$.
\end{thm}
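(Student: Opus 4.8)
The strategy is to reduce Theorem \ref{main} to Corollary \ref{finite} by showing that under the hypotheses there is always an element of $<\mc{S}>$ lying in $\mc{B}\setminus Z(G)$. First I would pick the distinguished gate $g\in\mc{S}$ whose spectrum is not exceptional. By Fact \ref{fact:max_exponent} there is an integer $1\leq n_g\leq N_G$ with $g^{n_g}\in B_\alpha$ for some $\alpha I\in Z(G)$, and of course $g^{n_g}\in<\mc{S}>$. The whole point is that $g^{n_g}\notin Z(G)$: this is exactly the contrapositive of Definition \ref{exangle}, since if $g^{n_g}\in Z(G)$ with $1\leq n_g\leq N_G$ then the spectrum of $g$ would be exceptional, contrary to our choice. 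Hence $g^{n_g}$ is an element of $<\mc{S}>$ that belongs to $\mc{B}$ and does not belong to $Z(G)$.

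Next I would invoke Corollary \ref{finite}: since $\mc{C}(\mr{Ad}_{g_1},\ldots,\mr{Ad}_{g_k})=\{\lambda I\}$ and we have just produced an element of $<\mc{S}>$ inside $\mc{B}\setminus Z(G)$, the group $<\mc{S}>$ is infinite, equivalently $\overline{<\mc{S}>}$ is an infinite compact subgroup of $G$. Finally, to upgrade ``infinite'' to ``equal to $G$'' I would appeal to the commutant criterion. For $G=SU(d)$ ($d\geq 2$) and $G=SO(d)$ with $d\geq 3$, $d\neq 4$, the group $G$ is compact connected and \emph{simple}, so Corollary \ref{Ad1} applies directly: an infinite $<\mc{S}>$ together with $\mc{C}(\mr{Ad}_G)=\{\lambda I\}=\mc{C}(\mr{Ad}_\mc{S})$ forces $\overline{<\mc{S}>}=G$. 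One should note that $\mc{C}(\mr{Ad}_G)=\{\lambda I\}$ holds because the adjoint representation of a compact simple Lie group is irreducible of real type, as established in the discussion following the Schur Lemma in this section; combined with the hypothesis $\mc{C}(\mr{Ad}_\mc{S})=\{\lambda I\}$ this gives $\mc{C}(\mr{Ad}_G)=\mc{C}(\mr{Ad}_\mc{S})$. This completes the argument.

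The only genuinely delicate point is the interface between Definition \ref{exangle} and the conclusion: one must make sure that the ``non-exceptional spectrum'' hypothesis is used with the \emph{same} bound $N_G$ furnished by Fact \ref{fact:max_exponent}, so that the minimal $n_g$ with $g^{n_g}\in\mc{B}$ genuinely satisfies $1\leq n_g\leq N_G$ and the exceptionality dichotomy can be triggered. Everything else is a direct chaining of the results already proved: Fact \ref{fact:max_exponent} to enter $\mc{B}$, Definition \ref{exangle} to stay outside $Z(G)$, Corollary \ref{finite} to get infiniteness, and Corollary \ref{Ad1} to get all of $G$. I do not expect any calculation here; the work was front-loaded into Lemma \ref{lem20} and Fact \ref{fact:max_exponent}.
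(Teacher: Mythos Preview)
Your proposal is correct and follows exactly the approach the paper intends: the theorem is stated there as ``a direct consequence'' of Corollary \ref{finite}, Fact \ref{fact:max_exponent}, and Definition \ref{exangle}, and you have simply unfolded that chain (with the final appeal to Corollary \ref{Ad1} to pass from infinite to all of $G$, which is also how Lemma \ref{lem20} proceeds). The only cosmetic point is that Definition \ref{exangle} is literally stated under the hypothesis $g\notin\mc{B}$, so if your distinguished $g$ already lies in $\mc{B}$ you should invoke the spectral reformulation given immediately after the definition (non-exceptional spectrum forces $g\notin Z(G)$, and then $n_g=1$ already works); your argument is robust to this and no change of strategy is needed.
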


\subsection{The algorithm for checking universality}\label{alg}
In this section we present a simple algorithm that allows to decide universality of any given set of gates $\mc{S}\subset G$ in a finite number of steps. It works for $G=SU(d)$ and $G=SO(k)$, besides $k=4$.

\textbf{The Algorithm for checking universality of $\mc{S}=\{g_1,\ldots,g_n\}$}
\begin{description}
\item[Step 1] Check if $\mathcal{C}(\mathrm{Ad}_\mathcal{S})=\{\lambda I\}$. If the answer is NO stop as the set $\mathcal{S}$ is not universal. If YES, set $l=1$ and go to step 2.
\item[Step 2] Check if there is a matrix $g\in \mc{S}$ for which $g^{n_g}$ belongs to $\mc{B}$ but not to $Z(G)$, where $1\leq n_g\leq N_G$. If so $\mathcal{S}$ is universal. If NO,  set $l=l+1$.
\item[Step 3] Define the new set $\mc{S}$ by adding to $\mathcal{S}$ words of length $l$, i.e products of elements from $\mathcal{S}$ of length $l$. If the new $\mc{S}$ is equal to the old, the group  $<\mc{S}>$ is finite. Otherwise go to step 2.  
\end{description}
If the group generated by $\mc{S}$ is finite the algorithm terminates in step 3 for some $l<\infty$. Otherwise it terminates in step 2 for $l<\infty$. In the following we discuss the bounds for $l$. In case when the group generated by $\mc{S}$ is finite the upper bound for $l$ is the order of largest finite subgroup of $SU(d)$. When the set $\mc{S}$ is symmetric, i.e. $\mc{S}=\{U_1,\ldots, U_k,U_1^{-1},\ldots,U_k^{-1}\}$ and the group generated by $\mc{S}$ is infinite the bound for $l$ can be determined by looking at the averaging operator $T_\mc{S}:L^2(SU(d))\rightarrow L^2(SU(d))$:
\begin{gather}
\left(T_\mc{S}f\right)(g)=\frac{1}{2k}\sum_{i=1}^k\left(f(U_ig)+f(U_i^{-1}g\right).
\end{gather}
Let $\|T\|_\mr{op}:=\mr{sup}_{f\in L^2(SU(d))}\frac{\|Tf\|_2}{\|f\|_2}$, where $\|\cdot\|_2$ is the usual $L^2$ norm. One easily checks that shifting operators $(\tilde{U}f)(g)=f(U^{-1}g)$ are unitary and hence their operator norm is $1$. Thus, using triangle inequality, we see that $\|T_\mc{S}\|_{\mr{op}}\leq 1$. In fact the constant function $f=1$ is the eigenvector of $T_\mc{S}$ with the eigenvalue $1$ and $\|T_\mc{S}\|_{\mr{op}}=1$. Let $L_0^2(SU(d))$ be the subspace of $L^2(SU(d))$ containing functions with the vanishing mean. Consider operator $T_\mc{S}|_{L_0^2(SU(d))}$. The norm of this operator is $1$ if and only if $1$ is an accumulation point of the spectrum of $T_\mc{S}$. Otherwise it is strictly less than $1$ and we will denote it by $\lambda_1$.  If this is the case we say that $T_\mc{S}$ has a spectral gap. The recent results \cite{BG1, BG2} ensure that $T_\mc{S}$ has a gap at least when matrices from $\mc{S}$ have algebraic entries. For transcendental entries the problem of the spectral gap existence is open. In fact, Sarnak conjectures the spectral gap is present for any universal set. The existence of spectral gap has interesting implications. As was shown in \cite{harrow} (our formulas are slightly different than in \cite{harrow} as we use Hilbert-Schmidt norm):
\begin{fact} \label{Harrow}
Let $\mathcal{S}$ be an universal, symmetric set of gates and assume $T_\mc{S}$ has a spectral gap. Let $\lambda_1=\|T_\mc{S}|_{L_0^2(SU(d))}\|_{\mr{op}}$. For every $U\in SU(d)$, $\epsilon>0$ and 
$$n>A\log\left(\frac{1}{\epsilon}\right)+B$$
 there is $U_n\in W_n(\mathcal{S})$ such that $\|U-U_n\|<\epsilon$, where
\begin{gather*}
A=\frac{d^2-1}{\log\left(1/\lambda_1\right)},\,\,\,\,\,\,B=\frac{\log\left(2^{d^2-1}/a_1\right)+\frac{1}{2}(d^2-1)\log(d^2-1)}{\log\left(1/\lambda_1\right)}
\end{gather*}
and $a_1$ is such that for any ball of radius $\epsilon$ in $SU(d)$ its volume (with respect to normalised Haar measure) $V_{B_\epsilon}$, satisfies
\[
 V(B_\epsilon)\geq a_1\epsilon^{d^2-1}.
\]
\end{fact}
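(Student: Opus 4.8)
The plan is to recognise $T_\mc{S}^{n}$ as convolution against the $n$-fold self-convolution of the probability measure $\mu=\frac{1}{2k}\sum_{i=1}^{k}(\delta_{U_i}+\delta_{U_i^{-1}})$, which is supported on words of length at most $n$, i.e.\ on $W_n(\mc{S})$, and then to squeeze a covering estimate out of the spectral bound $\|T_\mc{S}|_{L_0^2(SU(d))}\|_{\mr{op}}=\lambda_1$. Since $\mc{S}$ is symmetric, $T_\mc{S}$ is self-adjoint, fixes the constant function $\mathbf 1$, and leaves $L_0^2(SU(d))=\mathbf 1^{\perp}$ invariant; hence for every $f\in L^2(SU(d))$, decomposing $f=\bar f\,\mathbf 1+f_0$ into its Haar mean $\bar f=\int f\,dg$ and its mean-zero part $f_0$, one has $T_\mc{S}^{n}f=\bar f\,\mathbf 1+T_\mc{S}^{n}f_0$ with $\|T_\mc{S}^{n}f_0\|_2\le\lambda_1^{n}\|f_0\|_2\le\lambda_1^{n}\|f\|_2$. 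This is the only point at which the spectral gap is used.

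Next I would argue by contradiction. Suppose some $U\in SU(d)$ has no element of $W_n(\mc{S})$ inside its Hilbert--Schmidt ball of radius $\epsilon$. Put $f=\mathbf 1_{\{\|V-U\|<\epsilon/2\}}$ and $h=\mathbf 1_{\{\|V-I\|<\epsilon/2\}}$ and consider
\[
\langle T_\mc{S}^{n}f,h\rangle=\frac{1}{(2k)^{n}}\sum_{w\in W_n(\mc{S})}\int f(wg)\,h(g)\,dg .
\]
The summand attached to a word $w$ can be nonzero only if there is $g$ with $\|wg-U\|<\epsilon/2$ and $\|g-I\|<\epsilon/2$, and then $\|w-U\|\le\|wg-U\|+\|w-wg\|=\|wg-U\|+\|g-I\|<\epsilon$ by unitary invariance of the norm; so the whole pairing vanishes. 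On the other hand, expanding $f$ and $h$ through the decomposition above and using $\mathbf 1\perp L_0^2(SU(d))$ together with $T_\mc{S}^{n}f_0\in L_0^2(SU(d))$ to kill the cross terms, one gets
\[
0=\langle T_\mc{S}^{n}f,h\rangle=\bar f\,\bar h+\langle T_\mc{S}^{n}f_0,h_0\rangle ,
\]
so that $\bar f\,\bar h=|\langle T_\mc{S}^{n}f_0,h_0\rangle|\le\lambda_1^{n}\|f\|_2\|h\|_2$.

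Since $f$ and $h$ are indicators of balls of radius $\epsilon/2$ we have $\bar f=\bar h=V(B_{\epsilon/2})$ and $\|f\|_2=\|h\|_2=\sqrt{V(B_{\epsilon/2})}$, so the previous inequality collapses to $V(B_{\epsilon/2})\le\lambda_1^{n}$. Feeding in the volume lower bound $V(B_{\epsilon/2})\ge a_1(\epsilon/2)^{d^2-1}$ and taking logarithms forces $n\le A\log(1/\epsilon)+\frac{\log(2^{d^2-1}/a_1)}{\log(1/\lambda_1)}$ with $A=\frac{d^2-1}{\log(1/\lambda_1)}$ exactly as stated; contrapositively, as soon as $n$ exceeds the threshold $A\log(1/\epsilon)+B$ every Hilbert--Schmidt $\epsilon$-ball must contain an element of $W_n(\mc{S})$, which is the assertion. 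The extra summand $\tfrac12(d^2-1)\log(d^2-1)$ in the stated $B$ is the price of carrying out this last step with the smooth bump functions used in \cite{harrow} rather than sharp indicators, and of passing between the Hilbert--Schmidt distance on $SU(d)\subset\mr{Mat}_d(\mb C)$ and the intrinsic volume on the $(d^2-1)$-dimensional manifold $SU(d)$: such a conversion costs a factor of order $(d^2-1)^{(d^2-1)/2}=\exp\big(\tfrac12(d^2-1)\log(d^2-1)\big)$.

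The step I expect to be the main obstacle is precisely this transfer from a purely $L^2$ (spectral) decay statement, $\|T_\mc{S}^{n}|_{L_0^2(SU(d))}\|_{\mr{op}}\le\lambda_1^{n}$, to a \emph{uniform, pointwise} density statement about the finite set $W_n(\mc{S})$, with the correct dependence of $A$ and $B$ on $d$, $\lambda_1$ and the volume constant $a_1$. The device that bridges the gap is testing $T_\mc{S}^{n}f$ against a second bump $h$ localised at the identity, converting the $L^2$ estimate into geometric information; everything else is bookkeeping around the spectral gap that can be imported from \cite{harrow} after replacing the operator norm used there by the Hilbert--Schmidt norm.
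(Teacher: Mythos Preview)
The paper does not prove this fact at all: it is simply quoted from \cite{harrow} with the remark that the formulas differ slightly from the original because the Hilbert--Schmidt norm is used here. So there is no ``paper's own proof'' to compare against; what you have written is, in essence, a reconstruction of the Harrow--Recht--Chuang argument, and it is correct.

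Your contradiction argument with the two indicator bumps $f,h$ is exactly the standard device: the key identity $\langle T_\mc{S}^{n}f,h\rangle=\bar f\,\bar h+\langle T_\mc{S}^{n}f_0,h_0\rangle$ together with the vanishing forced by the geometric hypothesis gives $V(B_{\epsilon/2})\le\lambda_1^{n}$, and the logarithm does the rest. Note, however, that your computation actually yields the constant
\[
B'=\frac{\log\!\big(2^{d^2-1}/a_1\big)}{\log(1/\lambda_1)},
\]
which is \emph{smaller} than the $B$ in the statement. This is not an error on your part---a smaller $B$ gives a stronger conclusion---but your final paragraph attributing the missing $\tfrac12(d^2-1)\log(d^2-1)$ to ``smooth bumps'' or a metric conversion is speculative. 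With sharp indicators and the volume hypothesis exactly as stated, no such term appears; the extra summand in the paper's $B$ is inherited from the particular normalisations in \cite{harrow} and is simply slack here. It would be cleaner to state that your argument proves the fact with the sharper constant $B'$, hence a fortiori with the stated $B$, rather than to invent a mechanism that produces the weaker bound.
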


The upper bound for $l$ in our algorithm in case when $<\mc{S}>$ is infinite  is given by the minimal number of gates that are needed to approximate an element whose distance from $\mc{B}$ is equal $\frac{1}{2\sqrt{2}}$ with the precision $\epsilon=\frac{1}{2\sqrt{2}+\delta}$, where $\delta$ is arbitrarily small positive number. Using Fact \ref{Harrow} this number is bounded by:
\begin{gather}\label{ub}
l\leq\frac{d^2-1}{\log\left(1/\lambda_1\right)}\log(2\sqrt{2}+\delta)+\frac{\log\left(2^{d^2-1}/a_1\right)+\frac{1}{2}(d^2-1)\log(d^2-1)}{\log\left(1/\lambda_1\right)}
\end{gather}
\noindent Moreover, by explicit calculation the volume 
\[
V(B_\epsilon)=\frac{1}{\pi}\left(2\arcsin\frac{\epsilon}{2\sqrt{2}}-\frac{1}{2}\sin 4\arcsin\frac{\epsilon}{2\sqrt{2}} \right).
\]
One easily checks that $a_1\epsilon^3$, where 
\[
a_1=\frac{16\sqrt{2}}{\pi}\left(2\arcsin\frac{1}{8}-\frac{1}{2}\sin \left(4\arcsin\frac{1}{8}\right ) \right),
\]
satisfies $V(B_\epsilon)\geq a_1\epsilon^3$ for $\epsilon\in[0,\frac{1}{2\sqrt{2}}]$.

Finally, we note that when spectral gap is small, i.e. $\lambda_1$ is close to $1$ the upper bound given by \ref{ub} can be in fact very big. This is the case, for example, when matrices $\mc{S}$ are very close to some matrix $U\in SU(d)$. But then they can be simultaneously introduced to a ball $B_\alpha$ and deciding their universality requires  actually $l=1$. Thus it seems that the bound given in \ref{ub} is useful only if $\lambda_1$ is well separated from $1$. 

\section{Computing $N_G$}

In this section we find upper bounds for $N_{SU(d)}$ and $N_{SO(d)}$ using Dirichlet's approximation theorem \cite{dirichlet,hardy}. These bounds are used in the algorithm presented in Section \ref{alg}.

\begin{thm}\label{d1}
For a given real number $a$ and a positive integer $N$ there exist integers $1\leq n\leq N$ and $p$ such, that $n\phi$ differs from $p$ by at  most $\frac{1}{N+1}$, i.e.
\begin{gather}\label{dirichlet_one}
|na-p|\leq\frac{1}{N+1}.
\end{gather}
\end{thm}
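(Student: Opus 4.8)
The plan is to prove this by the pigeonhole principle, exactly in the spirit of Dirichlet's original argument. First I would write $\{x\}=x-\lfloor x\rfloor\in[0,1)$ for the fractional part of a real number $x$, and consider the $N+1$ numbers $\{0\cdot a\},\{1\cdot a\},\ldots,\{N\cdot a\}$, all lying in $[0,1)$. I would then partition $[0,1)$ into the $N+1$ consecutive subintervals $I_j=\bigl[\tfrac{j}{N+1},\tfrac{j+1}{N+1}\bigr)$, $j=0,\ldots,N$, each of length $\tfrac{1}{N+1}$.

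Next I would split into two cases. In the first case two of the numbers above lie in the same subinterval $I_j$, say $\{ia\}$ and $\{ja\}$ with $0\le i<j\le N$; then $|\{ja\}-\{ia\}|<\tfrac{1}{N+1}$, and setting $n=j-i$, $p=\lfloor ja\rfloor-\lfloor ia\rfloor$ one has $na-p=\{ja\}-\{ia\}$, so $|na-p|<\tfrac{1}{N+1}$ with $1\le n\le N$, as required. In the second case no subinterval contains two of the numbers, so (there being $N+1$ numbers and $N+1$ subintervals) each subinterval contains exactly one; in particular $I_N=\bigl[\tfrac{N}{N+1},1\bigr)$ contains exactly one of them, and since $\{0\cdot a\}=0\notin I_N$ this one is $\{na\}$ for some $1\le n\le N$. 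Taking $p=\lfloor na\rfloor+1$ gives $p-na=1-\{na\}\le 1-\tfrac{N}{N+1}=\tfrac{1}{N+1}$, which is again the asserted inequality.

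There is essentially no real obstacle here, this being the classical one-dimensional Dirichlet approximation theorem; the only point needing a little care is guaranteeing $n\ge 1$, which is precisely why the second case must be isolated (so that the near-integer multiple is not the trivial multiple $0\cdot a$) rather than relying on pigeonhole alone among $N+1$ values in $N+1$ boxes. One could alternatively run the argument with the $N$ numbers $\{1\cdot a\},\ldots,\{Na\}$ and the $N-1$ ``inner'' boxes, treating membership in $I_0$ or $I_N$ as an immediate success; that bypasses the exactly-one-per-box bookkeeping but only yields $n\le N-1$, so I would keep the version above.
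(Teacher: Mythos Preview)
Your proof is correct and is the classical pigeonhole argument due to Dirichlet. Note, however, that the paper does not actually supply a proof of this theorem: it is simply quoted as a known result with a reference to Hardy--Wright, so there is nothing to compare your argument against beyond saying that what you wrote is exactly the standard proof one finds in that reference.

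One small quibble with your closing parenthetical remark: the alternative you sketch (using the $N$ numbers $\{1\cdot a\},\ldots,\{N\cdot a\}$ and treating membership in $I_0$ or $I_N$ as an immediate success, otherwise pigeonholing among the $N-1$ inner boxes) does in fact yield $1\le n\le N$, not merely $n\le N-1$, since the ``immediate success'' case already allows $n$ up to $N$. This does not affect your main argument, which stands as written.
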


We will use Theorem \ref{d1} in calculation of $N_G$ for $G=SO(3)$ and $G=SU(2)$ - these are two cases when $g\in G$ has a one spectral angle. The  simultaneous version of Dirichlet's theorem  gives a similar approximation for a collection of real numbers  $\phi_1,\ldots,\phi_k$. We will use it for $SO(2k+1)$.

\begin{thm}\label{dn}
For given real numbers $a_1,\ldots,a_d$ and a positive integer $N$ there exist integer $1\leq n\leq N$ and integers $p_1,\ldots,p_k$ such that\begin{gather}\label{dirichlet_many}
|na_i-p_i|\leq\frac{1}{(N+1)^{1/d}}.
\end{gather}
\end{thm}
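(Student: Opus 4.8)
The statement is the simultaneous (multi-dimensional) Dirichlet approximation theorem, and I would prove it by a pigeonhole/volume argument on the torus $\mathbb{T}^d=\mathbb{R}^d/\mathbb{Z}^d$, followed by a short compactness remark to pass from a strict to the claimed non-strict bound. First I would fix $\epsilon>0$ and introduce, for $n=0,1,\ldots,N$, the $N+1$ points $P_n=(\{na_1\},\ldots,\{na_d\})$, where $\{x\}$ denotes the fractional part, together with the half-open cubes $B_n=P_n+[0,\epsilon)^d$ viewed as subsets of $\mathbb{T}^d$. Each $B_n$ has volume $\epsilon^d$, so as soon as $\epsilon>(N+1)^{-1/d}$ the total volume $(N+1)\epsilon^d$ exceeds $1=\mathrm{vol}(\mathbb{T}^d)$ and the $B_n$ cannot be pairwise disjoint; hence two of them, say $B_m$ and $B_n$ with $0\le m<n\le N$, overlap.

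The next step is to read off the approximation from such an overlap. It provides $x,y\in[0,\epsilon)^d$ with $P_n-P_m\equiv x-y\pmod{\mathbb{Z}^d}$, and since $\{na_i\}-\{ma_i\}\equiv(n-m)a_i\pmod 1$ this says exactly that, for $q:=n-m$ (which lies in $\{1,\ldots,N\}$ because both indices are between $0$ and $N$), every $qa_i$ is at distance less than $\epsilon$ from $\mathbb{Z}$. So for each $\epsilon>(N+1)^{-1/d}$ there is an admissible $q$ with $\max_i\mathrm{dist}(qa_i,\mathbb{Z})<\epsilon$. The finite set $\{\max_i\mathrm{dist}(qa_i,\mathbb{Z}):1\le q\le N\}$ therefore has minimum $\mu$ satisfying $\mu<\epsilon$ for every such $\epsilon$, whence $\mu\le(N+1)^{-1/d}$; taking $n$ to be a minimiser and $p_i$ a nearest integer to $na_i$ finishes the proof. (If two of the $P_n$ coincide outright, the same conclusion holds with distance $0$, so that degenerate case is covered automatically.)

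The one genuinely delicate point --- and the reason not to just partition $[0,1)^d$ into $\lfloor N^{1/d}\rfloor^d$ subcubes and apply ordinary pigeonhole --- is that the crude argument yields only the weaker constant $\lfloor N^{1/d}\rfloor^{-1}$ together with a value of $q$ that may exceed $N$. Obtaining the sharp bound $(N+1)^{-1/d}$ with $1\le n\le N$ is exactly what the volume comparison on $\mathbb{T}^d$ buys us, since the objects being compared are the $N+1$ translated cubes, one for each admissible shift, including $n=0$. I expect the main thing to get right to be this bookkeeping --- verifying that an overlap really produces an index $q$ with $1\le q\le N$ and $q\ne 0$ --- together with the passage from the strict inequality at all scales $\epsilon>(N+1)^{-1/d}$ to the non-strict bound at $\epsilon=(N+1)^{-1/d}$, for which the finiteness of the index set $\{1,\ldots,N\}$ makes the compactness step above entirely routine.
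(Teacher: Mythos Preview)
Your argument is correct. The paper, however, does not supply its own proof of this statement: Theorem~\ref{dn} is simply quoted as the classical simultaneous Dirichlet theorem, with a reference to Hardy--Wright, so there is no in-paper proof to compare against directly.

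That said, it is worth noting that the one related result the paper \emph{does} prove, Theorem~\ref{mod}, uses the discrete pigeonhole approach you explicitly set aside: it partitions $[0,1)^m$ into $dQ^m$ congruent boxes and places $dQ^m+1$ points into them. Your continuous volume argument on $\mathbb{T}^d$ is a genuine alternative. What it buys you is exactly what you identify: the sharp constant $(N+1)^{-1/d}$ with $1\le n\le N$, without any rounding loss, at the modest price of the extra compactness step to pass from strict to non-strict inequality. The discrete-box method is more elementary (no measure, no limiting argument) but, as you observe, naturally produces bounds of the form $1/Q$ with $n\le Q^d$ for integer $Q$, which matches the paper's stated inequality only after one accepts that formulation or tolerates the $\lfloor\cdot\rfloor$ loss. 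Both routes are standard; yours is the cleaner one for the precise inequality written here.
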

For groups $SO(2k)$ and $SU(d)$ we need to prove a modified version of Dirichlet's theorem. To this end for any  $x\in\mb{R}$ and  $d\in\mb{Z}_+$ we define $\{x\}_k$ to be the difference between $x$ and the largest $p+\frac{k}{d}$ that is smaller or equal  to $x$, where $p\in \mathbb{Z}$, $k\in\{0,1,\ldots,d-1\}$. Clearly $\{x\}_k\in[0,1)$. For $x=(x_1,\ldots,x_m)\in \mathbb{R}^m$ we define $\{x\}_k=(\{x_1\}_k,\ldots,\{x_m\}_k)$. Let $\mathcal{L}_{m,d}$ be the lattice in $\mathbb{R}^m$ given by points 
\[
(q_1,\ldots,q_m),\,(q_1+\frac{1}{d},\ldots,q_m+\frac{1}{d}),\,\ldots ,(q_1+\frac{d-1}{d},\ldots,q_m+\frac{d-1}{d}),
\] 
where $q_1,\ldots q_m\in \mathbb{Z}$. An important property of the lattice $\mathcal{L}_{m,d}$ is that for any $p,q\in\mathcal{L}_{m,d}$ we have $p\pm q\in\mathcal{L}_{m,d}$. As a direct consequence of this property we get the following theorem.

\begin{thm}\label{mod}
For $a=(a_1,\ldots,a_m)$ and positive $\epsilon<\frac{1}{2d}$ there exist: integer $1\leq n\leq \ceil*{\frac{1}{d\epsilon^m}}$ and a point $p=(p_1,\ldots,p_m)\in \mathcal{L}_{m,d}$ such that $\forall i\in\{1,\ldots,m\}$:
\begin{gather}\label{dirichlet_many}
|na_i-p_i|<\epsilon. 
\end{gather}
\end{thm}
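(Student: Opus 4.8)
The plan is to mimic the classical pigeonhole proof of Dirichlet's theorem, but with the lattice $\mathcal{L}_{m,d}$ playing the role of $\mathbb{Z}^m$. First I would set $M=\ceil*{\frac{1}{d\epsilon^m}}$ and consider the $M+1$ points
\[
\{0\cdot a\}_0,\ \{a\}_0,\ \{2a\}_0,\ \ldots,\ \{Ma\}_0
\]
obtained by reducing $0,a,2a,\ldots,Ma$ modulo $\mathcal{L}_{m,d}$; here $\{x\}_0$ means reduction into the fundamental cell $[0,1)^m$ using only the sublattice shifts, so that each $\{na\}_0\in[0,1)^m$. The key point is that a fundamental domain of $\mathcal{L}_{m,d}$ inside the unit cube $[0,1)^m$ has volume $\frac{1}{d}$, so it can be tiled by fewer than $M$ axis-aligned boxes of side $\epsilon$ (indeed $\frac{1/d}{\epsilon^m}\le M$). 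By the pigeonhole principle two of the $M+1$ points, say those coming from $n_1a$ and $n_2a$ with $0\le n_1<n_2\le M$, land in the same little box, so $\{n_2a\}_0-\{n_1a\}_0$ has each coordinate of absolute value $<\epsilon$. Writing $n=n_2-n_1$, we have $1\le n\le M$ and $na-( \{n_2a\}_0-\{n_1a\}_0 )\in\mathcal{L}_{m,d}$, because $\{x\}_0$ differs from $x$ by a lattice vector and because $\mathcal{L}_{m,d}$ is closed under subtraction (the stated property $p\pm q\in\mathcal{L}_{m,d}$). Hence there is $p\in\mathcal{L}_{m,d}$ with $|na_i-p_i|<\epsilon$ for all $i$.

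The one subtlety — and the step I expect to require a little care — is the counting of boxes needed to cover a fundamental domain of $\mathcal{L}_{m,d}$ in $[0,1)^m$. The lattice $\mathcal{L}_{m,d}$ is $\mathbb{Z}^m$ together with its translates by $(j/d,\ldots,j/d)$, $j=0,\ldots,d-1$, so $\mathbb{Z}^m$ is an index-$d$ sublattice and a fundamental cell of $\mathcal{L}_{m,d}$ is exactly $1/d$ of the unit cube. I would make the pigeonhole explicit by partitioning $[0,1)^m$ into the $\ceil*{1/\epsilon}^m$ small boxes $\prod_i[k_i\epsilon,(k_i+1)\epsilon)$ and then identifying boxes that are $\mathcal{L}_{m,d}$-translates of one another; the number of resulting classes is at most $\frac{1}{d}\ceil*{1/\epsilon}^m$, but to get the clean bound $M=\ceil*{1/(d\epsilon^m)}$ one instead argues directly that among $M+1$ reduced points two must be within $\epsilon$ in every coordinate modulo the lattice, using that the lattice has covolume $1/d$ and $M\epsilon^m\ge 1/d$. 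The hypothesis $\epsilon<\frac{1}{2d}$ guarantees $M\ge 2$ so that there is at least one nonzero difference $n=n_2-n_1\ge 1$, and it also ensures the little boxes are genuinely finer than the lattice spacing $1/d$ so that distinct classes cannot be conflated. Once these bookkeeping points are in place the theorem follows immediately from the closure property of $\mathcal{L}_{m,d}$ stated just before the theorem.
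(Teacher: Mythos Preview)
Your pigeonhole plan is the right framework and is what the paper does, but the box-counting step you yourself flag as ``the one subtlety'' is exactly where your argument remains incomplete, and neither of your two suggested fixes actually delivers the bound $M=\ceil*{1/(d\epsilon^m)}$. Identifying $\mathcal{L}_{m,d}$-translates among the $\ceil*{1/\epsilon}^m$ small boxes does not in general produce at most $M$ classes (the ceilings do not cooperate, and two $\epsilon$-boxes need not be lattice translates of one another unless $1/\epsilon$ happens to be a multiple of $d$); and the covolume heuristic ``$M\epsilon^m\ge 1/d$, so pigeonhole'' is not a proof without an explicit partition into at most $M$ pieces.

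The paper resolves this with a small but decisive change of viewpoint: it does \emph{not} reduce each $na$ to a single representative modulo $\mathcal{L}_{m,d}$. Instead, for every $n\in\{0,\ldots,Q^m\}$ it places into $[0,1)^m$ \emph{all $d$} shifted representatives $\{na\}_0,\{na\}_1,\ldots,\{na\}_{d-1}$, obtaining more than $dQ^m$ points. The pigeonhole is then carried out in the full unit cube, partitioned into $dQ^m$ boxes of side $\frac{1}{d^{1/m}Q}$, and there is no fundamental-domain geometry to worry about. Two points $\{q_1a\}_i$ and $\{q_2a\}_j$ land in the same box; the hypothesis $\epsilon<\frac{1}{2d}$ enters precisely here to force $q_1\neq q_2$, since for equal $q$ the two representatives differ in every coordinate by a nonzero multiple of $1/d$ and so cannot share a box of side $<1/d$. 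Subtraction and the closure of $\mathcal{L}_{m,d}$ under differences --- which you correctly invoke --- then finish the argument with $n=|q_2-q_1|\le Q^m$.

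So the missing idea is to \emph{lift rather than quotient}: instead of collapsing to a fundamental domain of volume $1/d$ and fighting with its shape, replicate each multiple $d$ times inside $[0,1)^m$ and run the ordinary cube pigeonhole there.
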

\begin{proof}
For a given point $a=(a_1,\ldots,a_m)\in \mathbb{R}^{m}$ consider $dQ^m+1$ points:
\begin{gather}\label{points}
\{na\}_0, \,\{na\}_1,\ldots,\{na\}_{d-1},\,\, n\in\{0,\ldots,Q^m\}
\end{gather}
Next take an $m$-dimensional cube $[0,1)^m$ and divide it into $dQ^m$ boxes by drawing planes parallel to its faces at distances $\frac{1}{\sqrt[m]{d}Q}$. By Dirichlet's pigeon hole principle, at least two points from (\ref{points}) fall to the same box. Let these points be $\{q_1a\}_i$  and  $\{q_2a\}_j$, where $i,j\in\{1,\ldots,d-1\}$ and $q_1<q_2$.  Note that $q_1$ cannot be equal to $q_2$ as in this case $\epsilon>\frac{1}{2d}$. As the lattice $\mathcal{L}_{m,d}$ is invariant with respect to addition and subtraction of its points we have  $\max_l|\{(q_2-q_1)a_l\}_{k}|<\frac{1}{\sqrt[m]{d}Q}$, where $k=j-i$ if $i<j$ or $k=d+j-i$ when $i>j$. The result follows.
\end{proof}
We begin with finding the exact values of $N_{SU(2)}$ and $N_{SO(3)}$.
\begin{fact}
$N_{SO(3)}=12$ and $N_{SU(2)}=6$. \label{fact:max_exp_su2s3}
\end{fact}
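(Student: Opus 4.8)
The plan is to compute $N_{SU(2)}$ and $N_{SO(3)}$ directly from the definition of $N_G$ together with the one-dimensional Dirichlet estimate (Theorem \ref{d1}), using the explicit ball conditions (\ref{condB1}) and (\ref{condB2}). Since every $g\in SU(2)$ has spectrum $\{e^{i\phi},e^{-i\phi}\}$ for some $\phi$, and every $g\in SO(3)$ is a rotation with a single spectral angle $\phi\in[0,2\pi)$, both cases reduce to a single real number, which is exactly the regime where Theorem \ref{d1} applies. First I would translate the ball membership into an inequality on $\phi$: for $SU(2)$, writing $\phi=2\pi x$, condition (\ref{condB1}) becomes (after accounting for the two centre elements $\pm I$, i.e. $\theta_m\in\{0,\pi\}$) a requirement that $x$ be within a fixed distance of a point of $\tfrac12\ZZ$; for $SO(3)$ the centre is trivial, so (\ref{condB2}) asks that $x$ be within a fixed distance of an integer. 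One then reads off the sharp threshold distance from $2\sin^2(\tfrac{\phi}{2})<\tfrac18$ (for $SU(2)$, noting $\sum$ has one term doubled — actually $\dim$ bookkeeping: the spectrum of $g\in SU(2)$ contributes $\sin^2\tfrac{\phi_1-\theta}{2}+\sin^2\tfrac{\phi_2-\theta}{2}=2\sin^2\tfrac{\phi-\theta}{2}<\tfrac18$) and from $\sin^2(\tfrac{\phi}{2})<\tfrac1{16}$ (for $SO(3)$), converting each into a bound of the form $\|x-c\|<r$ with $c$ in the appropriate lattice.

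Next I would invoke Theorem \ref{d1} with the computed radius to get an explicit $N$: Dirichlet gives $|nx-p|\le\frac1{N+1}$, so choosing $N$ with $\frac1{N+1}\le r$ (for $SO(3)$) resp.\ $\frac1{N+1}\le \tfrac12 r$ after rescaling by the $\tfrac12$-lattice spacing (for $SU(2)$) guarantees $g^n\in\mc B$ for some $1\le n\le N$; this yields $N_{SO(3)}\le 12$ and $N_{SU(2)}\le 6$. For the matching lower bounds I would exhibit, in each group, an element $g$ whose smallest power lying in $\mc B$ is exactly $N_G$: concretely a rotation by an angle $\phi=2\pi/N_G$ times a suitable integer coprime to $N_G$ chosen so that none of $g,g^2,\dots,g^{N_G-1}$ satisfies the ball inequality while $g^{N_G}=I\in\mc B$ does — for instance $\phi$ an appropriate $N_G$-th root-of-unity angle near the worst case $\phi\approx\pi$ (for $SU(2)$) resp.\ the angle realizing the tightest approximation obstruction for $SO(3)$.

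The main obstacle I expect is \emph{sharpness}: Dirichlet's theorem only gives the upper bound, and the bound $\frac1{N+1}\le r$ is not a priori tight, so I must check by hand that the radius $r$ actually forces $N+1$ down to exactly $13$ (so $N=12$) for $SO(3)$ and exactly $7$ (so $N=6$) for $SU(2)$ — this requires computing $r$ precisely (from $\arcsin$ of $\tfrac14$ resp.\ $\tfrac14\sqrt{\tfrac12}$, roughly) and confirming the integer it pins down — and then matching it with an explicit extremal rotation. The lower-bound construction is the delicate part: I need a single angle $\phi$ whose multiples $n\phi\ (1\le n<N_G)$ all stay outside every ball $B_{\alpha}$ while $N_G\phi\equiv 0$; this is a finite check over the finitely many $n$, but it must be done carefully because the balls around $+I$ and $-I$ in $SU(2)$ together cover a comparatively large portion of the angle, so the extremal element likely has $\phi$ of the form $2\pi k/N_G$ with $k$ chosen to maximize the running distance to $\tfrac12\ZZ$. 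Once the extremal elements are found, the two equalities $N_{SO(3)}=12$ and $N_{SU(2)}=6$ follow by combining the Dirichlet upper bound with these witnesses.
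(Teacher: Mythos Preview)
Your overall strategy matches the paper's: Dirichlet for the upper bound, an explicit extremal element for the lower bound. The upper-bound half is correct in outline (and the threshold radius is $\arcsin\tfrac14$ in \emph{both} groups, not two different values---for $SU(2)$ the sum $2\sin^2\tfrac{\phi}{2}<\tfrac18$ gives the same $|\sin\tfrac{\phi}{2}|<\tfrac14$ as $SO(3)$).

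The gap is in your lower-bound witness for $SU(2)$. You propose $\phi=2\pi k/N_G$ with $\gcd(k,N_G)=1$, i.e.\ a primitive $6$th root of unity. But then $g^3$ has spectral angle $3\phi=\pi k$, so $g^3=\pm I\in Z(SU(2))\subset\mc B$, and $n_g\le 3$ for \emph{every} such $k$. More generally, writing $x=\phi/(2\pi)=k/6$, one has $3x=k/2\in\tfrac12\ZZ$ regardless of $k$, so the third power always lands in the centre. Your formula does happen to work for $SO(3)$ (e.g.\ $\phi=\pi/6$ gives $n_g=12$), but for $SU(2)$ the presence of $-I$ kills it. A root-of-unity witness does exist for $SU(2)$---namely $\phi=\pi/6$, a primitive $12$th (not $6$th) root---but this is not what your formula produces.

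The paper sidesteps this by taking as witness the element exactly on the boundary of the open ball: $\tfrac{\phi}{2}=\arcsin\tfrac14$. Since the ball is open, $g\notin\mc B$; one then checks directly that the multiples $n\cdot\arcsin\tfrac14$ for $n=1,\dots,5$ all miss every interval $(\tfrac{k\pi}{2}-\arcsin\tfrac14,\,\tfrac{k\pi}{2}+\arcsin\tfrac14)$, while $6\arcsin\tfrac14$ lands inside the one around $\tfrac{\pi}{2}$ (and analogously $12\arcsin\tfrac14$ lands near $\pi$ for $SO(3)$). This boundary element is the natural extremiser for the Dirichlet inequality and avoids the arithmetic coincidences that trip up root-of-unity candidates.
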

\begin{proof}
Let  $O\in SO(3)$ and let $[0,2\pi)\ni\phi=2a\pi$ be its spectral angle. By Theorem \ref{d1} for a given $N$ there are integers $p$ and $1\leq n \leq N$ such that $|na-p|\leq\frac{1}{N+1}$. Multiplying this inequality by $\pi$ yields $|n\frac{\phi}{2}-p\pi|\leq\frac{\pi}{N+1}$. Note that (\ref{condB2}) simplifies to $|\sin\frac{\psi}{2}|<\frac{1}{4}$, i.e. for a given $\phi$ we look for $n$ such that  $|n\frac{\phi}{2}-p\pi|<\arcsin\frac{1}{4}$. Combining these two observations we need to find the smallest $N$ such that $\frac{\pi}{N+1}<\arcsin\frac{1}{4}$. It is 
\begin{gather}
N=\ceil*{\frac{\pi-\arcsin\frac{1}{4}}{\arcsin\frac{1}{4}}}=12.
\label{nmaxSo31}
\end{gather} 
Formula (\ref{nmaxSo31}) gives an upper bound for $N_{SO(3)}$. Note however that for $\frac{\phi}{2} =\arcsin\frac{1}{4}$ the smallest $n$ such that $|n\arcsin\frac{1}{4}-\pi|<\arcsin\frac{1}{4}$ is exactly $12$ (see figure \ref{wykresy}(a)), hence $N_{SO(3)}=12$.
\begin{figure}[ht!]
\begin{center}\includegraphics[scale=0.4]{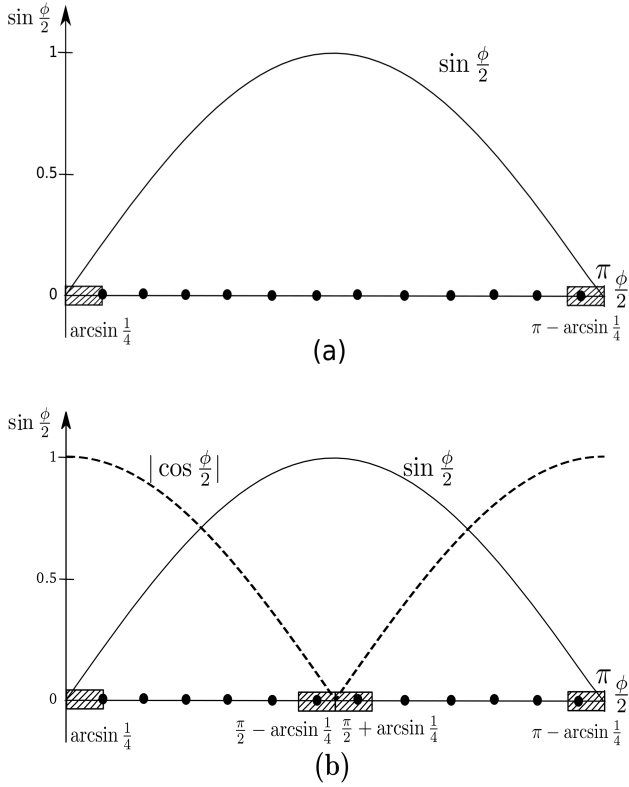}
\end{center}
\caption{\label{wykresy} (a) Condition (\ref{condB2}) for $SO(3)$. Black dots correspond to $n\arcsin\frac{1}{4}$ and dashed segments are determined by $|\sin\frac{\phi}{2}|<\frac{1}{4}$, (b) Conditions (\ref{condB1}) for $U\in SU(2)$. Black dots corresponds to $n\arcsin\frac{1}{4}$ and dashed segments are determined by $|\sin\frac{\phi}{2}|<\frac{1}{4}$ or $|\sin\frac{\phi-\pi}{2}|<\frac{1}{4}$.}
\end{figure}

Assume next $U\in SU(2)$ and $[0,2\pi)\ni\phi=a\pi$  be its spectral angle. By Theorem \ref{d1} for a given $N$ there are integers $p$ and $1\leq n \leq N$ such that  $|na-p|\leq\frac{1}{N+1}$. Multiplying this inequality by $\frac{\pi}{2}$ yields $|n\frac{\phi}{2}-p\frac{\pi}{2}|\leq\frac{\pi}{2(N+1)}$. Note that (\ref{condB2}) simplifies to $|\sin\frac{\psi}{2}|<\frac{1}{4}$ or $|\sin\frac{\psi-\pi}{2}|<\frac{1}{4}$, i.e. for a given $\phi$ we look for $n$ such that  $|n\frac{\phi}{2}-p\frac{\pi}{2}|<\arcsin\frac{1}{4}$. Combining these two observations we need to find the smallest $N$ such that $\frac{\pi}{2(N+1)}<\arcsin\frac{1}{4}$. This is 
\begin{gather}
N=\ceil*{\frac{\frac{\pi}{2}-\arcsin\frac{1}{4}}{\arcsin\frac{1}{4}}}=6.
\label{nmaxSu3}
\end{gather} 
Formula (\ref{nmaxSu3}) gives an upper bound for $N_{SU(2)}$. Note however that for $\frac{\phi}{2} =\arcsin\frac{1}{4}$ the smallest $n$ such that $|n\arcsin\frac{1}{4}-\frac{\pi}{2}|<\arcsin\frac{1}{4}$ is exactly $6$ (see figure \ref{wykresy}(b)). Hence $N_{SU(2)}=6$.
\end{proof}

\begin{fact}\label{sodN}
The values of $N_{SO(2k+1)}$ and $N_{SO(2k)}$ are bounded from the above by:
\begin{gather}
\label{nmax_so2kp1}
N_{SO(2k+1)}<\ceil*{\left(\frac{\pi}{\arcsin\frac{1}{4\sqrt{k}}}\right)^{k}},\\\label{nmax_so2k}
N_{SO(2k)}<  \ceil*{\frac{1}{2}\left(\frac{\pi}{\arcsin\frac{1}{4\sqrt{k}}}\right)^{k}}.
\end{gather}
\end{fact}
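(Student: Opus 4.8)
The plan is to diagonalize and reduce everything to simultaneous Diophantine approximation of the rescaled spectral angles. The starting observation is that if $O\in SO(m)$ has eigenvalues $e^{\pm i\phi_1},\ldots,e^{\pm i\phi_k}$ (together with a single $1$ when $m=2k+1$, and after grouping any $\pm1$ eigenvalues into pairs with $\phi_i\in\{0,\pi\}$, which is always possible since the number of $-1$'s is even), then $O^n$ has eigenvalues $e^{\pm i n\phi_1},\ldots$, so the spectral angles of $O^n$ are $n\phi_i\bmod 2\pi$. Writing $\phi_i=2\pi a_i$ with $a_i\in[0,1)$, I would record the elementary estimate $|\sin\pi t|\le\sin\pi\epsilon$ for $|t|\le\epsilon\le\tfrac12$, and set $\epsilon_0:=\tfrac1\pi\arcsin\tfrac1{4\sqrt k}$; note $\epsilon_0<\tfrac14$ because $\arcsin\tfrac1{4\sqrt k}\le\arcsin\tfrac14<\tfrac\pi4$, and $k\sin^2\pi\epsilon_0=\tfrac1{16}$, which is exactly why the threshold $\tfrac1{4\sqrt k}$ appears.

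For $SO(2k+1)$ the centre is trivial, so I only need to force $O^n$ into $B_1$. By (\ref{condB2}) it suffices that $\sum_{i=1}^k\sin^2\tfrac{n\phi_i}{2}<\tfrac1{16}$, and since $\sin^2\tfrac{n\phi_i}{2}=\sin^2\bigl(\pi(na_i-p_i)\bigr)$ for every integer $p_i$, it is enough to have $|na_i-p_i|<\epsilon_0$ for all $i$ with suitable integers $p_i$ (then each summand is $<\sin^2\pi\epsilon_0=\tfrac1{16k}$). I would apply the simultaneous Dirichlet theorem (Theorem \ref{dn}) to the $k$ reals $a_1,\ldots,a_k$, choosing $N$ to be the least integer with $\tfrac1{(N+1)^{1/k}}<\epsilon_0$, i.e.\ $N+1>\bigl(\pi/\arcsin\tfrac1{4\sqrt k}\bigr)^k$. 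Theorem \ref{dn} then yields $1\le n\le N$ with $|na_i-p_i|\le\tfrac1{(N+1)^{1/k}}<\epsilon_0$, hence $O^n\in B_1$; since every such $N$ is admissible, this gives the bound (\ref{nmax_so2kp1}).

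For $SO(2k)$ the centre is $\{I,-I\}$, so $O^n$ may approach either central element. By (\ref{condB3})--(\ref{condB4}) together with the identity $\sin^2\tfrac{\theta-\pi}{2}=\cos^2\tfrac\theta2$, landing in $B_1$ means that all $na_i$ are simultaneously close to integers, while landing in $B_{-1}$ means that all $na_i$ are simultaneously close to half-integers --- precisely the alternative built into the lattice $\mathcal{L}_{k,2}$, whose points are the all-integer tuples and the all-(integer$+\tfrac12$) tuples. I would therefore invoke the modified Dirichlet theorem (Theorem \ref{mod}) with $m=k$, $d=2$ and $\epsilon=\epsilon_0$ (legitimate since $\epsilon_0<\tfrac14=\tfrac1{2d}$): it produces $1\le n\le\ceil*{\tfrac1{2\epsilon_0^{\,k}}}=\ceil*{\tfrac12(\pi/\arcsin\tfrac1{4\sqrt k})^k}$ and a point $p=(p_1,\ldots,p_k)\in\mathcal{L}_{k,2}$ with $|na_i-p_i|<\epsilon_0$. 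If $p$ is an integer point then $\sum_i\sin^2\tfrac{n\phi_i}{2}<k\sin^2\pi\epsilon_0=\tfrac1{16}$ and $O^n\in B_1$; if $p$ is a half-integer point then the $\cos^2$ rewriting gives $\sum_i\sin^2\tfrac{n\phi_i-\pi}{2}<\tfrac1{16}$ and $O^n\in B_{-1}$. In either case $O^n\in\mathcal{B}$, which yields (\ref{nmax_so2k}).

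The step needing the most care is the even case: one must check that Theorem \ref{mod} genuinely returns a point of $\mathcal{L}_{k,2}$, so that \emph{all} spectral angles of $O^n$ are driven towards $0$ or \emph{all} towards $\pi$ --- a mixed behaviour (some angles near $0$, others near $\pi$) would leave $O^n$ outside $B_1\cup B_{-1}$, and it is exactly the closure of $\mathcal{L}_{k,2}$ under addition that rules this out in the proof of Theorem \ref{mod}. The remaining points are routine: padding the angle list to length $k$ when $O$ has $\pm1$ eigenvalues is harmless, since the corresponding coordinate $a_i\in\{0,\tfrac12\}$ is already consistent with one type of $\mathcal{L}_{k,2}$-point; one checks $\epsilon_0<\tfrac1{2d}$ so Theorem \ref{mod} applies; and the conversion of the Dirichlet ranges into the displayed strict inequalities is a short computation with ceilings.
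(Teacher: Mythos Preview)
Your proposal is correct and follows essentially the same route as the paper: reduce to the spectral angles $\phi_i=2\pi a_i$, apply simultaneous Dirichlet (Theorem~\ref{dn}) for $SO(2k+1)$ and the modified lattice version (Theorem~\ref{mod}) with $d=2$ for $SO(2k)$, using the same threshold $\epsilon_0=\tfrac1\pi\arcsin\tfrac1{4\sqrt k}$. The paper arrives at this $\epsilon_0$ by a Lagrange-multiplier computation of the largest hypercube inscribed in $B_1$; your direct check that $|na_i-p_i|<\epsilon_0$ forces each summand below $\tfrac1{16k}$ is a slight streamlining of the same idea, and your explicit split into the integer/half-integer cases of $\mathcal{L}_{k,2}$ spells out what the paper leaves implicit.
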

\begin{proof}

The spectral angles of $O\in SO(d)$ are $\{\phi_1,-\phi_1,\ldots,\phi_k,-\phi_k \}$ if $d=2k$ or $\{\phi_1,-\phi_1,\ldots,\phi_k,-\phi_k,0 \}$ if $d=2k+1$. We first address the case of $SO(2k)$. Assume that $\phi_i=a_i\pi$ for all $i\in\{1,\ldots,k\}$. The lattice $\pi\cdot\mc{L}_{k,2}$ corresponds exactly to points $\{\frac{\phi_1}{2},\ldots, \frac{\phi_k}{2}\}$ at which balls $B_1$ and $B_{-1}$ given by conditions (\ref{condB3}) and (\ref{condB4}) are centred. Let us next find the smallest hypercube $[-\frac{\beta_k}{2},\frac{\beta_k}{2}]^{\times k}$ contained in the ball $B_1$. By symmetry, its edge length will be the same for $B_{-1}$. To this end one needs to minimise $\sum_i\phi_i^2$ under the condition $\sum_i\sin^2\phi_i=\frac{1}{16}$. Calculations with the use of the Lagrange multipliers show that the coordinates of the minimizing point are all equal and hence  $k\sin^2\frac{\beta_k}{2}=\arcsin\frac{1}{16}$. That means $ \frac{\beta_k}{2}=\arcsin\frac{1}{4\sqrt{k}}$ is the half of the edge length of the largest hypercube contained in a ball $B_{\pm 1}$. We next apply Theorem \ref{mod} to the lattice  $\mc{L}_{k,2}$ and the point $a=(a_1,\ldots,a_k)$ with  $\epsilon=\frac{\arcsin\frac{1}{4\sqrt{k}}}{\pi}< \frac{1}{4}$. As a result we obtain point $p\in \mc{L}_{k,2}$ such that:
\begin{gather}\label{dirichlet_many}
|na_i-p_i|<\frac{\arcsin\frac{1}{4\sqrt{k}}}{\pi},
\end{gather}
where 
\[
n<\ceil*{\frac{\pi^k}{2(\arcsin\frac{1}{4\sqrt{k}})^k}}.
\]
For $SO(2k+1)$ we can directly apply Theorem \ref{dn}. Looking at the hypercube that is contained in one of the balls given by conditions (\ref{condB3}) and (\ref{condB4}) we get the desired result.  
\end{proof}

\begin{fact}\label{sudN}
For $d\geq 3$ the value of $N_{SU(d)}$ is bounded from the above by:
\[
N_{SU(d)}< \ceil*{\frac{1}{d}\left(\frac{2\pi}{\beta_d}\right)^{d-1}}, 
\]
where $\beta_d$ is such that $(d-1)\sin^2\frac{\beta_d}{2}+\sin^2\frac{(d-1)\beta_d}{2}=\frac{1}{8}$.
\end{fact}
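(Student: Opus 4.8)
The plan is to follow the template of Fact~\ref{sodN}, using the determinant constraint of $SU(d)$ to cut the problem down to a $(d-1)$-dimensional simultaneous Diophantine approximation and then feeding it to Theorem~\ref{mod} with modulus $d$, the order of $Z(SU(d))$. Write the spectrum of $U\in SU(d)$ as $\{e^{2\pi i x_1},\ldots,e^{2\pi i x_d}\}$ with $x_i\in[0,1)$; the condition $\det U=1$ reads $\sum_i x_i\in\mathbb{Z}$, so $x_d=K_0-\sum_{j<d}x_j$ for a fixed integer $K_0$ and only $x_1,\ldots,x_{d-1}$ are free. Since $Z(SU(d))=\{\alpha_m I:\alpha_m=e^{i\theta_m},\ \theta_m=2\pi m/d\}$, condition~(\ref{condB1}) says that $U^n\in B_{\alpha_m}$ exactly when $\sum_{i=1}^d\sin^2\!\big(\pi(nx_i-\tfrac md)\big)<\tfrac18$.

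Next I would isolate the admissible box. Suppose that for a common $m\in\{0,\ldots,d-1\}$ and integers $q_1,\ldots,q_d$ one has $|nx_j-(q_j+\tfrac md)|<\tfrac{\beta_d}{2\pi}$ for $j=1,\ldots,d-1$ and $|nx_d-(q_d+\tfrac md)|<\tfrac{(d-1)\beta_d}{2\pi}$. Because $\sin^2$ is $\pi$-periodic, $\sin^2(\pi(nx_i-\tfrac md))=\sin^2(\pi(nx_i-p_i))$ for the corresponding lattice coordinate $p_i=q_i+\tfrac md$; and taking $\beta_d$ to be the smallest positive root of $(d-1)\sin^2\tfrac\beta2+\sin^2\tfrac{(d-1)\beta}2=\tfrac18$ one has $(d-1)\beta_d<\pi$ (the left-hand side already exceeds $1$ at $\beta=\pi/(d-1)$), so all relevant sines are evaluated on $[0,\pi/2]$, where $\sin^2$ is increasing. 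Hence $\sum_{i=1}^d\sin^2(\pi(nx_i-\tfrac md))<(d-1)\sin^2\tfrac{\beta_d}2+\sin^2\tfrac{(d-1)\beta_d}2=\tfrac18$, i.e. $U^n\in B_{\alpha_m}$. This is precisely what the lopsided defining equation of $\beta_d$ encodes.

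Then I would apply Theorem~\ref{mod} to $a=(x_1,\ldots,x_{d-1})$, the lattice $\mathcal{L}_{d-1,d}$, and $\epsilon=\tfrac{\beta_d}{2\pi}$; one first checks $\epsilon<\tfrac1{2d}$ for $d\ge3$ (from $\sin t\ge\tfrac{2t}\pi$ on $[0,\pi/2]$ one gets $\beta_d\le\tfrac{\pi}{2\sqrt{2d(d-1)}}$, which is $<\tfrac\pi d$). This yields $1\le n\le\ceil*{\tfrac{1}{d\epsilon^{d-1}}}=\ceil*{\tfrac1d\big(\tfrac{2\pi}{\beta_d}\big)^{d-1}}$ and a point $p\in\mathcal{L}_{d-1,d}$ with $|nx_j-p_j|<\epsilon$ for $j<d$; by the structure of $\mathcal{L}_{d-1,d}$ all coordinates of $p$ carry the same shift, say $p_j=q_j+\tfrac md$. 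The determinant relation gives $nx_d=nK_0-\sum_{j<d}nx_j$, so $nx_d$ lies within $\sum_{j<d}|nx_j-p_j|<(d-1)\epsilon$ of $\big(nK_0-\sum_{j<d}q_j-m\big)+\tfrac md$, which is again an integer plus $\tfrac md$ with the \emph{same} $m$ (using $(d-1)\tfrac md=m-\tfrac md$). Plugging this into the box criterion of the previous paragraph gives $U^n\in B_{\alpha_m}$; since $U$ was arbitrary, $N_{SU(d)}<\ceil*{\tfrac1d\big(\tfrac{2\pi}{\beta_d}\big)^{d-1}}$, with the strictness obtained exactly as in Fact~\ref{sodN}. (For $d=2$ the same argument works but the exact value $N_{SU(2)}=6$ of Fact~\ref{fact:max_exp_su2s3} is sharper, hence the hypothesis $d\ge3$.)

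The main obstacle is the bookkeeping of the asymmetric approximation error in the last coordinate together with the requirement that all $d$ approximants be centred at the same central element $\alpha_m$: this is what forces the defining equation of $\beta_d$ into the unbalanced form $(d-1)\sin^2\tfrac{\beta_d}2+\sin^2\tfrac{(d-1)\beta_d}2=\tfrac18$ rather than the naive symmetric $d\sin^2\tfrac\beta2=\tfrac18$, and is the source of the exponent $d-1$ and the prefactor $\tfrac1d$. The remaining points---that $\beta_d$ is the least positive root so the relevant sines are monotone, and that $\epsilon<\tfrac1{2d}$ so Theorem~\ref{mod} applies---are routine estimates on $\sin$ and $\arcsin$.
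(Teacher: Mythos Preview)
Your proposal is correct and follows essentially the same route as the paper: reduce to $d-1$ free spectral parameters via the determinant constraint, apply Theorem~\ref{mod} with the lattice $\mathcal{L}_{d-1,d}$ and $\epsilon=\beta_d/(2\pi)$, and read off the bound. The only notable difference is in how the defining equation of $\beta_d$ is justified: the paper phrases it as finding the largest hypercube inscribed in $B_1$ via a Lagrange-multiplier minimisation, whereas you bypass that and verify directly that the accumulated error $(d-1)\epsilon$ in the dependent coordinate, together with the $\epsilon$ errors in the free ones, gives $(d-1)\sin^2\tfrac{\beta_d}{2}+\sin^2\tfrac{(d-1)\beta_d}{2}<\tfrac18$. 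Your explicit tracking of the common shift $m/d$ in the last coordinate and your rigorous check $\epsilon<\tfrac{1}{2d}$ via $\sin t\ge 2t/\pi$ (the paper uses the small-angle approximation $\sin(\beta_d/2)\approx\beta_d/2$) are minor improvements in bookkeeping, but the skeleton of the argument is identical.
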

\begin{proof}
For $U\in SU(d)$ let $\{\phi_1,\ldots, \phi_d\}$  be the spectral angles of $U$. Assume that for every $i\in\{1\,\ldots,d-1\}$ we have $[0,2\pi)\ni\phi_i=a_i\pi$. As $\sum_{i}\phi_i=0\,\mr{mod}\,2\pi$ we can always put $\phi_d=-\sum_{i=1}^{d-1}{\phi_i}$. We need to first find the edge length of the largest hypercube $[-\frac{\beta_d}{2},\frac{\beta_d}{2}]^{\times(d-1)}$ contained in the ball $B_{1}$. By  symmetry of condition (\ref{condB2}), this length will be the same for other balls. We need to minimise $\sum_i\phi_i^2$ under the condition $\sum_{i=1}^{d-1}\sin^2\phi_i+\sin^2(\sum_{i=1}^{d-1}\phi_i)=\frac{1}{8}$. Calculations with the use of the Lagrange multipliers show that the coordinates of the minimizing point are all equal and hence $\beta_d$ satisfies:
\begin{gather}\label{betad}
(d-1)\sin^2\frac{\beta_d}{2}+\sin^2\frac{(d-1)\beta_d}{2}=\frac{1}{8}.
\end{gather}
In order to apply Theorem \ref{mod} we need to check if $\frac{\beta_d}{2\pi}<\frac{1}{2d}$. By equation (\ref{betad}) $\beta_d$ is clearly close to zero and therefore we can assume that $\sin\frac{\beta_d}{2}$ approximately equals to $\frac{\beta_d}{2}$. Then it follows that $\frac{\beta_d}{2\pi}=\frac{1}{2\pi\sqrt{2d(d-1)}}$ which is clearly smaller than $\frac{1}{2d}$. Thus we can apply Theorem \ref{mod} to the lattice  $\mc{L}_{d-1,d}$ and the point $a=(a_1,\ldots,a_{d-1})$ with $\epsilon=\frac{\beta_d}{2\pi}< \frac{1}{2d}$. As a result we obtain point $p\in \mc{L}_{d-1,d}$ such that:
\begin{gather}\label{dirichlet_many}
|na_i-p_i|<\frac{\beta_d}{2\pi},
\end{gather}
where 
\[
n< \ceil*{\frac{1}{d}\left(\frac{2\pi}{\beta_d}\right)^{d-1}}.
\]
The result follows.
\end{proof}
For $d=3$ we obtain $\frac{\beta_3}{2}=\arctan\sqrt{\frac{6 - \sqrt{34}}{2 + \sqrt{34}}}$ and $N_{SU(3)}<154$. On the other hand numerical calculations yield $N_{SU(3)}= 49$. For orthogonal groups we have that numerical calculations yield $N_{SO(5)}=172$ and  $N_{SO(4)}=86$, where the bounds given by (\ref{nmax_so2kp1}) and (\ref{nmax_so2k}) are $N_{SO(5)}< 312$ and $N_{SO(4)}<151$  respectively. The difference between the bounds and values calculated numerically reflects the obvious fact, that the considered hypercubes are rather brutal approximations of the balls $B_{\alpha }$ (see figure \ref{fig4}). However, we stress that the choice of hypercubes we made is the most optimal from the perspective of Dirichlet's theorems. Let us also note that the upper bound for $N_G$ seems to be more accurate for $SO(4)$ than for $SU(3)$. We believe this stems from the fact that the 'square-ball` area ratio is smaller for $SU(3)$ than for $SO(4)$ (see figure \ref{fig4}). The way how these ratios should be incorporated into formulas for the upper bound on $N_G$ is left as an open problem. We suppose this should be done by introducing some additional factor that depends on the  square-ball ratio.
\begin{figure}[ht!]
\begin{center}
\includegraphics[scale=0.45]{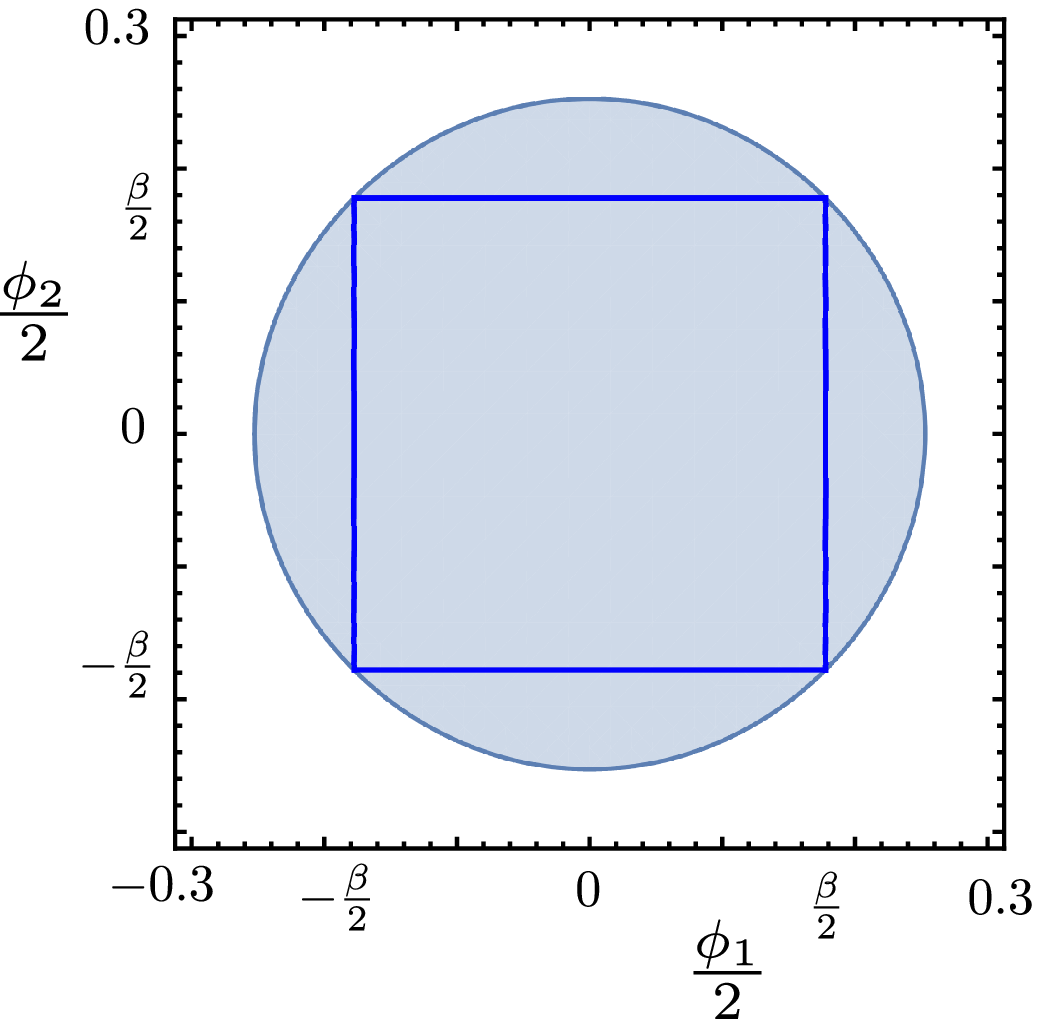}\quad{}\quad{}\quad{}\includegraphics[scale=0.45]{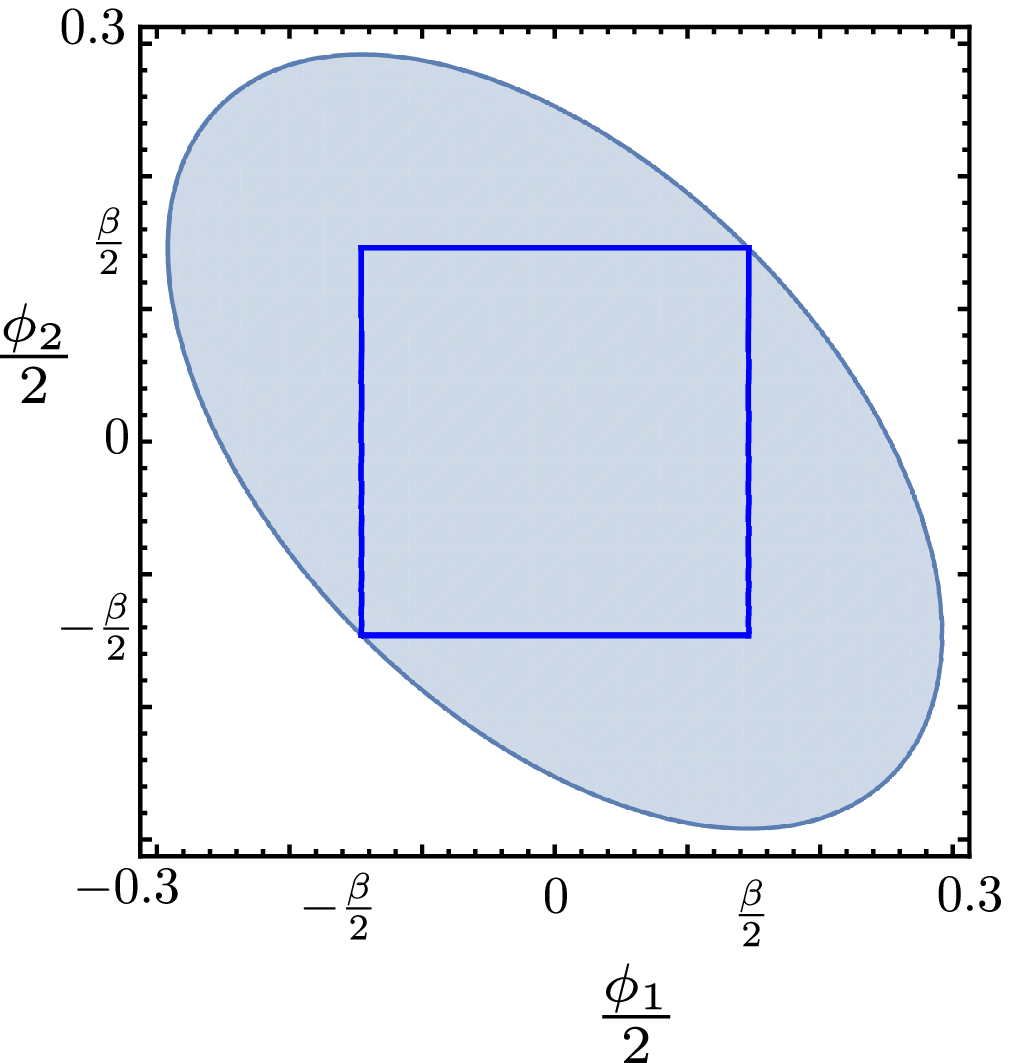}
\end{center}
\caption{\label{fig4}The smallest hypercubes contained in the balls $B_{1}$ for $SO(4)$ and $SU(3)$ respectively.}
\end{figure}

\section{Universality for $SU(2)$ and $SO(3)$}\label{kubity}
In the following we discuss universality of gates in case when $G=SU(2)$ or $G=SO(3)$. In particular we formulate explicit conditions (Fact \ref{solsu2}) for $\mathcal{C}(\mr{Ad}_\mc{S})=\{\lambda I\}$, where $\mc{S}$ is a finite subset of $G$.  For both $SU(2)$ and $SO(3)$ exceptional spectra are determined by one spectral angle and if at least one matrix from $\mc{S}$ has nonexceptional spectrum the algorithm from section \ref{alg} terminates in Step 2 with $l=1$. In section \ref{2exp} we show that for $\mc{S}$ consisting of two matrices that have exceptional spectra one can decide their universality in at most $l=4$ steps. Moreover, our algorithm always terminates for $l\leq 13$.

\subsection{$SU(2)$ and $SO(3)$ - review of useful properties}\label{sec:su2_su3_properties}
In the following we recall useful facts about groups $SO(3)$ and $SU(2)$. In particular we introduce their parameterizations and briefly discuss the covering homomorphism given by the adjoint representation. 

Commutation relations for the Lie algebras of the considered groups are as follows:
\begin{gather}
\mathfrak{su}(2):\,\,\,\left[X,Y\right]=2Z,\,\,\left[X,Z\right]=-2Y,\,\,\left[Y,Z\right]=2X.
\end{gather}where ${X},\;{Y},\;{Z}$ are defined as
\begin{gather*}
{X}=\left(\begin{array}{cc}
0&1\\-1&0\end{array} \right),\:{Y}=\left(\begin{array}{cc}
0&i\\i&0\end{array} \right),\:{Z}=\left(\begin{array}{cc}
i&0\\0&-i\end{array} \right),
\end{gather*}and
\begin{gather}
\mathfrak{so}(3):\,\,\,\left[X_{23},X_{13}\right]=-X_{12},\,\,\left[X_{23},X_{12}\right]=X_{13},\,\,\left[X_{13},X_{12}\right]=X_{23},
\label{eq:commutators}
\end{gather}
where $X_{ij}$ are defined as in (\ref{Xij}). The Lie algebras $\mathfrak{su}(2)$ and $\mathfrak{so}(3)$ are isomorphic through the adjoint representation $\mathrm{ad}:\mathfrak{su}(2)\rightarrow \mathfrak{so}(3)$. The isomorphism is established by ${X}\mapsto \mathrm{ad}_{{X}}=-2X_{23}$, ${Y}\mapsto \mathrm{ad}_{{Y}}=2X_{13}$, ${Z}\mapsto \mathrm{ad}_{{Z}}=-2X_{12}$.

Elements of groups $SU(2)$ and $SO(3)$ can be expressed using exponential map. By Cayley-Hamilton theorem we have:
\begin{gather}
SU(2):\,\,U(\phi,\vec{k})=e^{ \phi\cdot u(\vec{k})}=e^{\phi\left(k_xX+k_yY+k_zZ\right)}=\cos\phi I+\sin\phi u(\vec{k}),\label{su2}\end{gather}
\begin{equation}
\begin{array}{c}
SO(3):\,\,O(\phi,\vec{k})=e^{\phi\cdot o(\vec{k})}=e^{\phi\left(-k_xX_{23}+k_yX_{13}-k_zX_{12}\right)}=\\I+\sin\phi o(\vec{k})-2\sin^2\frac{\phi}{2}o(\vec{k})^2,\label{so3}
\end{array}
\end{equation}
where $\vec{k}=[k_x,k_y,k_z]\in\mathbb{R}^3$ is a rotation axis, $k_x^2+k_y^2+k_z^2=1$ and $\phi\in [0,2\phi)$. Groups $SU(2)$ and $SO(3)$ are related by the covering homomorphism $\mathrm{Ad}:SU(2)\rightarrow SO(3)$ given by $\mathrm{Ad}_{e^A}=e^{\mathrm{ad}_A}$, where $A\in \mathfrak{su}(2)$ and $\mr{Ad}:U(\phi,\vec{k})\mapsto O(2\phi,\vec{k})$. $\mathrm{Ad}$ is in this case double covering. Using (\ref{su2}) we can easily calculate the product $U(\gamma,\vec{k}_{12})=U(\phi_1,\vec{k}_1)U(\phi_2,\vec{k}_2)$, where:
\begin{gather}\label{product}
\cos\gamma = \cos\phi_1\cos\phi_2-\sin\phi_1\sin\phi_2 \vec{k}_1\cdot\vec{k}_2,\\
\vec{k}_{12}=\frac{1}{\sin\gamma}\left(\vec{k}_1\sin\phi_1\cos\phi_2+\vec{k}_2\sin\phi_2\cos\phi_1 + \vec{k}_1\times \vec{k}_2\sin\phi_1\sin\phi_2 \right).
\end{gather}
Making use of (\ref{product}) one checks that two $SU(2)$ matrices $U(\phi_1,\vec{k}_1)$, $U(\phi_2,\vec{k}_2)$ that do not belong to $\{I,-I \}$ commute iff the axes $\vec{k}_1$ and $\vec{k}_2$ are parallel, that is  $[u(\vec{k}_1),u(\vec{k}_2)]= 0$.  Similarly, they anticommute iff the axes $\vec{k}_1$ and $\vec{k}_2$ are orthogonal and  rotation angles are $\phi_1, \phi_2\in\{\frac{\pi}{2},\frac{3\pi}{2}\}$. As for matrices from $SO(3)$, recall that they cannot anticommute. In order to check when they commute we note, that commuting and anticommuting $SU(2)$ matrices satisfy the identity $U_1U_2U_1^{-1}U_2^{-1}=\pm I$. But $\mathrm{Ad}_{\pm I}=I$ and therefore $O(\phi_1,\vec{k}_1)$ commutes with $O(\phi_2,\vec{k}_2)$ iff either axes $\vec{k}_1$ and $\vec{k}_2$ are parallel or $\vec{k}_1\perp\vec{k}_2$ and $\phi_1, \phi_2\in\{\frac{\pi}{2},\frac{3\pi}{2}\}$.

Finally it is known that all automorphisms of $SU(2)$ are inner authomorphisms, thus they are in one to one correspondence  with elements of $SO(3)$. Using our notation $O\in SO(3)$ determines the automorphism $\Phi_O:SU(2)\rightarrow SU(2)$ given by
\begin{gather}\label{auto}
\Phi_O(U(\phi,\vec{k}))=U(\phi,O\vec{k}).
\end{gather}

\subsection{Exceptional spectra and spaces $\mc{C}(\mr{Ad}_{\mc{S}})$ for $SU(2)$ and $SO(3)$}\label{sec:exceptionalSpectraSu2So3}

For any matrix $U(\phi, \vec{k})\in SU(2)$ the spectrum is given by $\{e^{i\phi},e^{-i\phi}\}$, $\phi\in [0,2\pi)$. By Definition \ref{exangle} the spectrum of $U(\phi,\vec{k})$ is exceptional iff $e^{i\phi}$ is a root of $1$ or $-1$ of order $n=\{1,\ldots,N_{SU(2)}\}$. The corresponding $\phi\in[0,2\pi)$ will be called exceptional angle. Similarly for $O(\phi,\vec{k})\in SO(3)$ the spectrum is given by $\{e^{i\phi},e^{-i\phi},1\}$ and thus is  exceptional iff $e^{i\phi}$ is a root of unity of order $1\leq n\leq N_{SO(3)}$. The corresponding $\phi\in[0,2\pi)$ will be called an exceptional angle. We can easily compute the number of exceptional spectra for $SU(2)$ and $SO(3)$ using the Euler totient function $\varphi(n)$ by noting that the roots of $-1$ of order $n$ are the roots of unity of order $2n$. 

Let us denote the sets of exceptional angles for $SU(2)$ and $SO(3)$  by $\mathcal{L}_{SU(2)}$ and $\mathcal{L}_{SO(3)}$ respectively. We have:   
\begin{gather}
|\mathcal{L}_{SU(2)}|=\sum_{n=1}^{6}\varphi(n)+\sum_{n=4}^{6}\varphi(2n)=24,\\
|\mathcal{L}_{SO(3)}|=\sum_{n=1}^{12}\varphi(n) = 46.
\end{gather}
The elements of sets $\mc{L}_G$ are of the form $\mathcal{L}_{G}=\left\lbrace a\pi: a\in\mathcal{L}_G^\prime\right\rbrace$, where 
\begin{gather}
\mathcal{L}_{SU(2)}^\prime=\{0,\frac{1}{2},1,\frac{3}{2},\frac{1}{3},\frac{2}{3},\frac{4}{3},\frac{5}{3},\frac{1}{4},\frac{3}{4},
\frac{5}{4},\frac{7}{4},\frac{1}{5},\frac{2}{5},\frac{3}{5},
\frac{4}{5},\frac{6}{5},\frac{7}{5},\frac{8}{5},\frac{9}{5},\frac{1}{6},\frac{5}{6},\frac{7}{6},\frac{11}{6}\},\nonumber\\
\mathcal{L}_{SO(3)}^\prime = \mathcal{L}_{SU(2)}^\prime
\cup\{\frac{2}{7},\frac{4}{7},\frac{6}{7},\frac{8}{7},\frac{10}{7},\frac{12}{7},\frac{2}{9},\frac{4}{9},\frac{8}{9},\frac{10}{9},\frac{14}{9},\frac{16}{9},\frac{2}{11},\frac{4}{11},\frac{6}{11},\frac{8}{11},\nonumber\\
\frac{10}{11},\frac{12}{11},\frac{14}{11},\frac{16}{11},\frac{18}{11},\frac{20}{11}\},
\label{bad_angles}
\end{gather}

We next discuss the conditions when the space $\mc{C}(\mathrm{Ad}_{U(\phi_1,\vec{k}_1)},\mathrm{Ad}_{U(\phi_2,\vec{k}_2)})$ is different than $\mc{C}(\mathrm{ad}_{u(\vec{k}_1)},\mathrm{ad}_{u(\vec{k}_2)})$. First, we note that elements $u(\vec{k}_1)$, $u(\vec{k}_2)$ generate Lie algebra $\mk{su}(2)$ iff $[u(\vec{k}_1),u(\vec{k}_2)]\neq 0$. In this case by Lemma \ref{ad1}, the solution set $\mc{C}(\mathrm{ad}_{u(\vec{k}_1)},\mathrm{ad}_{u(\vec{k}_2)})=\{\lambda I\}$. Using Fact \ref{diff-so} we note that the space $\mc{C}(\mathrm{Ad}_{U(\phi_1,\vec{k}_1)},\mathrm{Ad}_{U(\phi_1,\vec{k}_2)})$ can be different than $\mc{C}(\mathrm{ad}_{u(\vec{k}_1)},\mathrm{ad}_{u(\vec{k}_2)})$ if  at least  one $\phi_i$ is equal to $\frac{k\pi}{2}$. In the following we give exact conditions when it happens.
\begin{fact}\label{solsu2}
Assume that $[u(\vec{k}_1),u(\vec{k}_2)]\neq 0$. The space $\mc{C}(\mathrm{Ad}_{U(\phi_1,\vec{k}_1)},\mathrm{Ad}_{U(\phi_2,\vec{k}_2)})$ is larger than $\{\lambda I:\lambda \in \mathbb{R}\}$  if and only if: (1) $\phi_1,\phi_2\in\{\frac{\pi}{2}, \frac{3\pi}{2}\}$, (2) one of $\phi_i\in \{\frac{\pi}{2},\frac{3\pi}{2}\}$ and $\vec{k}_1\perp\vec{k}_2$.
\end{fact}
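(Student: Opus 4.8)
The plan is to pass to the rotation picture throughout. Since $\mr{Ad}_{U(\phi,\vec k)}=O(2\phi,\vec k)$ is the rotation of $\mk{so}(3)\cong\RR^3$ by the angle $2\phi$ about the axis $\vec k$, and since the gates $U(\phi_i,\vec k_i)$ lie outside $\{I,-I\}$ (so $\phi_i\in(0,\pi)\cup(\pi,2\pi)$; here $[u(\vec k_1),u(\vec k_2)]\neq 0$ simply means $\vec k_1\not\parallel\vec k_2$), and writing $R_i:=O(2\phi_i,\vec k_i)$, the question is: for which axes and angles is there a real $3\times 3$ matrix $L$, not a multiple of $I$, commuting with both $R_1$ and $R_2$? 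As observed in the paragraph preceding the statement, if neither $\phi_i$ equals $\tfrac{\pi}{2}$ or $\tfrac{3\pi}{2}$ then $\mc{C}(\mr{Ad}_\mc{S})=\mc{C}(\mr{ad}_\mc{X})=\{\lambda I\}$; so after relabelling we may assume $\phi_1\in\{\tfrac{\pi}{2},\tfrac{3\pi}{2}\}$, i.e.\ $R_1=O(\pi,\vec k_1)$ is the half-turn about $\vec k_1$. Its eigenvalues being $1$ on $\RR\vec k_1$ and $-1$ on $\vec k_1^{\perp}$, the commutant of $R_1$ alone is exactly the set of matrices preserving the splitting $\RR\vec k_1\oplus\vec k_1^{\perp}$, scalar on the line and arbitrary on the plane; this is the pool from which we cut down using $R_2$.

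Next I would treat case (1), where also $\phi_2\in\{\tfrac{\pi}{2},\tfrac{3\pi}{2}\}$, by producing an explicit non-scalar element of the commutant: with $\vec n=\vec k_1\times\vec k_2\neq 0$ (using $\vec k_1\not\parallel\vec k_2$), the half-turn $O(\pi,\vec n)$ has axis orthogonal to both $\vec k_1$ and $\vec k_2$, hence commutes with the two half-turns $R_1$ and $R_2$ (two half-turns commute precisely when their axes are parallel or orthogonal; cf.\ Section \ref{sec:su2_su3_properties}). Since $O(\pi,\vec n)$ has eigenvalues $1,-1,-1$, it is not a scalar, so the commutant is strictly larger than $\{\lambda I\}$ and condition (1) is confirmed.

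For the complementary case $\phi_2\notin\{\tfrac{\pi}{2},\tfrac{3\pi}{2}\}$ (so $2\phi_2\not\equiv 0,\pi\pmod{2\pi}$ and $R_2=O(2\phi_2,\vec k_2)$ has three distinct eigenvalues $1,e^{\pm 2i\phi_2}$), I would record that the commutant of $R_2$ preserves $\RR\vec k_2\oplus\vec k_2^{\perp}$, is scalar on $\RR\vec k_2$, and restricts on $\vec k_2^{\perp}$ to a scaled rotation; in particular a matrix that is diagonal in an orthonormal basis of $\vec k_2^{\perp}$ and commutes with $R_2|_{\vec k_2^{\perp}}$ is forced to be scalar there. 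Taking $L$ in the intersection of the two commutants, $L$ preserves the lines $\RR\vec k_1$, $\RR\vec k_2$ and the planes $\vec k_1^{\perp},\vec k_2^{\perp}$, hence also the line $\vec k_1^{\perp}\cap\vec k_2^{\perp}$ (one-dimensional as $\vec k_1\not\parallel\vec k_2$), and I would split on whether $\vec k_1\perp\vec k_2$. If $\vec k_1\perp\vec k_2$, then in the orthonormal frame $(\vec k_1,\vec k_2,\vec k_1\times\vec k_2)$ the matrix $L$ is diagonal, and the constraint from $R_2$ on $\vec k_2^{\perp}=\mr{span}(\vec k_1,\vec k_1\times\vec k_2)$ forces $L=\mr{diag}(a,b,a)$; choosing $b\neq a$ gives a non-scalar $L$ (a one-line check shows it commutes with $R_1$ and $R_2$), so the commutant exceeds $\{\lambda I\}$ --- this is condition (2). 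If $\vec k_1\not\perp\vec k_2$ (and $\not\parallel$), I would take the unit normal $\vec m$ to $\Pi=\mr{span}(\vec k_1,\vec k_2)$, note $\RR\vec m=\vec k_1^{\perp}\cap\vec k_2^{\perp}$, so $L\vec k_1=a\vec k_1$, $L\vec k_2=b\vec k_2$, $L\vec m=c\vec m$; writing $\vec k_1=\alpha\vec k_2+\beta\vec u$ with $\vec u\in\Pi$ a unit vector orthogonal to $\vec k_2$ (so $\alpha=\vec k_1\cdot\vec k_2\neq 0$, $\beta\neq 0$), applying $L$ and comparing $\vec k_2$-components forces $a=b$ and $L\vec u=a\vec u$; then $L|_{\vec k_2^{\perp}}=\mr{diag}(a,c)$ in the orthonormal basis $(\vec u,\vec m)$ must be scalar, so $a=c$ and $L=aI$. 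Hence in this last case $\mc{C}(\mr{Ad}_{U(\phi_1,\vec k_1)},\mr{Ad}_{U(\phi_2,\vec k_2)})=\{\lambda I\}$.

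Putting the cases together and restoring the symmetry between the two indices gives exactly the dichotomy of the statement. The step I expect to be the main obstacle is the final one, where $\vec k_1$ and $\vec k_2$ meet at a generic (non-right) angle: there is no orthonormal frame adapted simultaneously to $R_1$ and $R_2$, so one cannot simply read off a block decomposition, and instead has to argue that the only common invariant lines are $\RR\vec k_1,\RR\vec k_2,\RR\vec m$ and then push these through the scaled-rotation constraint from $R_2$; by contrast the other cases become short once the half-turn structure of $R_1$ is in hand.
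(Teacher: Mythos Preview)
Your proof is correct and follows essentially the same strategy as the paper's: reduce via Fact~\ref{diff-so} to the case where at least one $\phi_i\in\{\tfrac{\pi}{2},\tfrac{3\pi}{2}\}$, exhibit an explicit non-scalar commuting matrix in cases (1) and (2), and in the remaining case intersect the commutants of the two rotations to force $L=\lambda I$. The only difference is presentational: the paper parametrises the two commutants explicitly as $A=\alpha_2 O(\theta_2,\vec k_2)+\beta_2\kb{\vec k_2}{\vec k_2}$ and $B=E(\vec k_1^{\perp})+\beta_1\kb{\vec k_1}{\vec k_1}$ and matches them on the basis $\{\vec k_1,\vec k_2,\vec k_1\times\vec k_2\}$, whereas you argue directly with invariant lines and the scaled-rotation constraint on $\vec k_2^{\perp}$; your route is a little cleaner but the linear algebra underneath is identical.
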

\begin{proof} By Fact \ref{diff-so} $\mc{C}(\mathrm{Ad}_{U(\phi_1,\vec{k}_1)},\mathrm{Ad}_{U(\phi_2,\vec{k}_2)})$ can be larger than $\{\lambda I:\lambda \in \mathbb{R}\}$ if at least one of the spectral angles of $\mathrm{Ad}_{U(\phi_1,\vec{k}_1)}$, $\mathrm{Ad}_{U(\phi_2,\vec{k}_2)}$ is $k\pi$. Therefore we have to consider situation when either two angles $\phi_1$ and $\phi_2$ are equal to $\frac{k\pi}{2}$ or exactly one of $\phi_i$'s is $\frac{k\pi}{2}$, $k\in\{1,3\}$. For the case (1) generators are of the form $U\left(\frac{k\pi}{2},\vec{k}_1\right)$ and $U\left(\frac{k\pi}{2},\vec{k}_2\right)$, where $\vec{k}_1\cdot\vec{k}_2$ is arbitrary.  Note that $\mathrm{Ad}_{U\left(\frac{k\pi}{2},\vec{k}_1\right)}=O(k \pi,\vec{k}_1)$ and  $\mathrm{Ad}_{U\left(\frac{k\pi}{2},\vec{k}_2\right)}=O(k \pi,\vec{k}_2)$ are rotation matrices by angles $2\phi_1=2\phi_2=k\pi$. A rotation $O(\phi_3,\vec{k}_{3})$  by an arbitrary angle $\phi_3$ and about the axis $\vec{k}_3=\vec{k}_1\times\vec{k}_2$ commutes with the rotations $O(k \pi,\vec{k}_1)$ and $O( k\pi,\vec{k}_2)$ and is different than $\lambda I$.

Let us consider the case when exactly one of $\phi_i$'s is $\frac{k\pi}{2}$. We are given the generators  $U\left(\frac{k\pi}{2},\vec{k}_1\right)$ and $U\left(\phi_2,\vec{k}_2\right)$.  Note that the rotation  $O(\pi,\vec{k})$, where $\vec{k}\parallel\vec{k}_2$, commutes with both $\mathrm{Ad}_{U\left(\frac{k\pi}{2},\vec{k}_1\right)}=O(k\pi,\vec{k}_1)$ and $\mathrm{Ad}_{U\left(\phi_2,\vec{k}_2\right)}=O(2\phi_2,\vec{k}_2)$ provided $\vec{k}_1\perp\vec{k}_2$. Therefore in this case $\mc{C}(\mathrm{Ad}_{U(\frac{k\pi}{2},\vec{k}_1)},\mathrm{Ad}_{U(\phi_2,\vec{k}_2)})$ is larger than $\{\lambda I:\lambda \in \mathbb{R}\}$. We are left with showing that if $\vec{k}_1\not\perp\vec{k}_2$ and exactly one $\phi_i$'s is an odd multiple of $\pi$, the space $\mc{C}(\mathrm{Ad}_{U(\frac{k\pi}{2},\vec{k}_1)},\mathrm{Ad}_{U(\phi_2,\vec{k}_2)})$ is equal to $\{\lambda I:\lambda \in \mathbb{R}\}$.

By formula (\ref{su2}) if $\vec{k}_1\not\perp\vec{k}_2$, $\phi_1=\frac{k\pi}{2}$ and $\phi_2\neq \frac{k\pi}{2}$, then the only orthogonal matrix commuting with $\mathrm{Ad}_{U(\frac{k\pi}{2},\vec{k}_1)}=O(k\pi,\vec{k}_1)$ and $\mathrm{Ad}_{U(\phi_2,\vec{k}_2)}$ is the identity matrix. In the following we show that relaxing orthogonality to an arbitrary endomorphism gives only $\lambda I$. To see this, note that endomorphisms commuting with $\mathrm{Ad}_{U(\phi_2,\vec{k}_2)}$ are of the form 
\[
A=\alpha_2 O(\theta_2,\vec{k}_2)+\beta_2\kb{\vec{k}_2}{\vec{k}_2},
\]
where $\alpha_2,\beta_2\in \mathbb{R}$ and $\theta_2\in[0,2\pi)$. On the other hand matrices commuting with $\mathrm{Ad}_{U(\frac{k\pi}{2},\vec{k}_1)}$ are of the form
\[
B=E(\vec{k}_1^\perp)+\beta_1\kb{\vec{k}_1}{\vec{k}_1},
\]
where $E(\vec{k}_1^\perp)$ is an arbitrary matrix acting on the $2$-dimensional space perpendicular to $\vec{k}_1$ such that $E(\vec{k}_1^\perp)\vec{k}_1=0$ and $\beta_1\in \mathbb{R}$. Let $\{\vec{k}_1, \vec{k}_2,\vec{k}_{12}\}$, where $\vec{k}_{12}=\vec{k}_1\times \vec{k}_2$ be a basis of $\mathbb{R}^3$. Matrices $A$ and $B$ must agree on the basis vectors. This way we obtain the following equations:
\begin{gather}
\beta_1\vec{k}_1=\alpha_2O(\theta_2,\vec{k}_2)\vec{k}_1+\beta_2\bk{\vec{k}_1}{\vec{k}_2}\vec{k}_2,\\
(\alpha_2+\beta_2)\vec{k}_2=E(\vec{k}_1^\perp)\vec{k}_2+\beta_1\bk{\vec{k}_1}{\vec{k}_2}\vec{k}_1,\\
E(\vec{k}_1^\perp)\vec{k}_{12}=\alpha_2O(\theta_2,\vec{k}_2)\vec{k}_{12}.
\end{gather}
The left hand side of (\ref{bad_angles}) is a vector perpendicular to $\vec{k}_1$ and the right hand side of (\ref{bad_angles}) is a vector perpendicular to $\vec{k}_2$. The only vector satisfying both of these conditions is proportional to $\vec{k}_{12}$ and therefore $\theta_2=n\pi$. Hence $O(\theta_2,\vec{k}_2)=\pm I$. From equation (\ref{bad_angles}) we get 
\[\beta_1\vec{k}_1=\pm\alpha_2\vec{k}_1+\beta_2\bk{\vec{k}_1}{\vec{k}_2}\vec{k}_2,\]
which means $\beta_1=\pm\alpha_2$ and either $\beta_2=0$ or $\vec{k}_1\perp\vec{k}_2$. If $\beta_2=0$ then $A=\pm\alpha_2 I$ and hence the equality between $A$ and $B$ implies $$\mc{C}(\mathrm{Ad}_{U(\frac{k\pi}{2},\vec{k}_1)},\mathrm{Ad}_{U(\phi_2,\vec{k}_2)})=\{\lambda I:\lambda \in \mathbb{R}\}.$$ Therefore the only solution that yields a  bigger space $\mc{C}(\mathrm{Ad}_{U(\frac{k\pi}{2},\vec{k}_1)},\mathrm{Ad}_{U(\phi_2,\vec{k}_2)})$ corresponds to $\vec{k}_1\perp\vec{k}_2$. 
\end{proof} 
\subsection{Universal $SU(2)$ gates}\label{sec:su2}
In this section we consider the set $\mc{S}$ of  two noncommuting matrices $U(\phi_1,\vec{k}_1)$, $U(\phi_2,\vec{k}_2)$ and ask when they generate $SU(2)$. We treat separately three cases:
\begin{enumerate}
\item  When $\mc{C}(\mathrm{Ad}_{U(\phi_1,\vec{k}_1)},\mathrm{Ad}_{U(\phi_2,\vec{k}_2)})=\{\lambda I\}$ and at least one of $\phi_i$'s is nonexceptional - by Theorem \ref{main}, $\overline{<\mc{S}>}=SU(2)$, 
\item When $\mc{C}(\mathrm{Ad}_{U(\phi_1,\vec{k}_1)},\mathrm{Ad}_{U(\phi_2,\vec{k}_2)})=\{\lambda I\}$ and both angles are exceptional. This determines the maximal running time of the algorithm from section \ref{alg} to be $l=13$. 
\item When $\mc{C}(\mathrm{Ad}_{\mc{S}})\neq\{\lambda I\}$ we identify what is the structure of $<\mc{S}>$.
\end{enumerate}
%

We start from studying the last case. We already know that when $\vec{k}_1\perp\vec{k}_2$ and  $\phi_2=\frac{m\pi}{2}$,  where $m\in\{1,3\}$ the group generated by  $U(\phi_1,\vec{k}_1)$, $U(\phi_2,\vec{k}_2)$ is not $SU(2)$ as $\mc{C}(\mathrm{Ad}_{U(\phi_1,\vec{k}_1)},\mathrm{Ad}_{U(\phi_2,\vec{k}_2)})\neq\{\lambda I\}$. We will now show that in this case  this group is either finite or infinite {\it dicyclic group}.  To this end let $b:=U(\phi_1,\vec{k}_1)$ and $x:=U(\frac{\pi}{2},\vec{k}_2)$ and assume $b$ is of finite order. The group generated by $b$ and $x$ has the following presentation:
\begin{gather}
H=<b,x|\, x^4=I,\,b^n=I,\,xbx^{-1}=b^{-1}>.
\end{gather}
As $H$ contains $-I$ we have $(-b)^n=-I$ for $n$ odd. Let $a=-b$ then
\begin{gather}
H= <a,x|\, x^4=I,\,a^{2n}=I,\,xax^{-1}=a^{-1}>,
\end{gather}
which is the definition of the dicyclic group of order $4n$ (it is the central extension of the dihedral group of order $2n$ by $-I$). In case when $a$ is of the infinite order, after closure, we obtain a group consisting of two connected components. The first one is a one parameter group $\{U(t,\vec{k}_1):t\in\mathbb{R}\}$ generated by $U(\phi_1,\vec{k}_1)$ and the second one is its normaliser $\{U(\frac{\pi}{2},\vec{k}_2)U(t,\vec{k}_1):t\in\mathbb{R}\}$. The only other case when $\mc{C}(\mathrm{Ad}_{U(\phi_1,\vec{k}_1)},\mathrm{Ad}_{U(\phi_2,\vec{k}_2)})\neq\{\lambda I\}$ corresponds to the situation when both $\phi_1$ and $\phi_2$ are odd multiples of $\frac{\pi}{2}$. In this case the group generated by $U(\phi_1,\vec{k}_1)$, $U(\phi_2,\vec{k}_2)$ is the same as the group generated by $U(\gamma,\vec{k}_{12})=U(\phi_1,\vec{k}_1)U(\phi_2,\vec{k}_2)$ and $U(\phi_2,\vec{k}_2)$. One can easily calculate that $\cos\gamma=\vec{k}_1\cdot\vec{k}_2$ and $\vec{k}_{12}\perp\vec{k}_2$. Thus the group is once again the dicyclic group of the order $4n$ where $n$ is the order of $U(\gamma,\vec{k}_{12})$.
\begin{lem}\label{lem:permut}
Assume that $U(\phi_1,\vec{k}_1)$ and $U(\phi_2,\vec{k}_2)$ do not commute and  $\vec{k}_1\cdot\vec{k}_2= 0$ and $\phi_2\in\{\frac{\pi}{2},\frac{3\pi}{2}\}$. Then the group generated by $U(\phi_1,\vec{k}_1)$ and $U(\phi_2,\vec{k}_2)$  is either 1)the dicyclic group of order $4n$
\begin{gather}
n=\mathrm{max}(\mathrm{order}U(\phi_1,\vec{k}_1),\mathrm{order}U(\phi_1+\pi,\vec{k}_1),
\end{gather}
when $\mathrm{order}U(\phi_1,\vec{k}_1)<\infty$ or 2) the infinite dicyclic group if $\mathrm{order}U(\phi_1,\vec{k}_1)=\infty$. When both $\phi_i$'s belong to $\{\frac{\pi}{2},\frac{3\pi}{2}\}$ the group generated by $U(\phi_1,\vec{k}_1)$ and $U(\phi_2,\vec{k}_2)$ is also the dicyclic group of the order $4n$ where $n$ is the order of $U(\gamma,\vec{k}_{12})=U(\phi_1,\vec{k}_1)U(\phi_2,\vec{k}_2)$.
\end{lem}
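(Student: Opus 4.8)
Write $b:=U(\phi_1,\vec k_1)$ and $x:=U(\phi_2,\vec k_2)$, so that the group to be identified is $<b,x>$. The plan is to pin down two relations and then read off the structure. Since $\phi_2\in\{\frac{\pi}{2},\frac{3\pi}{2}\}$, formula (\ref{su2}) gives $x^{2}=e^{2\phi_2 u(\vec k_2)}=\cos(2\phi_2)I+\sin(2\phi_2)u(\vec k_2)=-I$, hence $x^{4}=I$ and $-I\in<b,x>$. Moreover, conjugation by $x$ acts on $SU(2)$ as the inner automorphism $\Phi_{\mathrm{Ad}_x}$ of (\ref{auto}), where $\mathrm{Ad}_x=O(2\phi_2,\vec k_2)=O(\pi,\vec k_2)$ is the rotation by $\pi$ about $\vec k_2$; because $\vec k_1\cdot\vec k_2=0$ this rotation sends $\vec k_1\mapsto-\vec k_1$, so $xbx^{-1}=U(\phi_1,-\vec k_1)=U(-\phi_1,\vec k_1)=b^{-1}$. (Also $b\neq\pm I$, else $b$ would be central and would commute with $x$.)

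Suppose first $\mathrm{order}\,b<\infty$. The abelian subgroup $C:=<b,-I>$ lies in a maximal torus of $SU(2)$ — $b$ being diagonalisable and $-I$ being the unique involution of such a torus — and is therefore cyclic; since $U(\phi_1+\pi,\vec k_1)=-b$, an elementary count of orders gives $\lvert C\rvert=\mathrm{lcm}(\mathrm{order}\,b,\,2)=\max\bigl(\mathrm{order}\,U(\phi_1,\vec k_1),\ \mathrm{order}\,U(\phi_1+\pi,\vec k_1)\bigr)=:n$. Let $a$ be a generator of $C$ (one may take $a=b$ when $\mathrm{order}\,b$ is even and $a=-b$ when it is odd), so $-I=a^{n/2}$. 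Now $x\notin C$, because every element of $C$ commutes with $b$ while $x$ does not; combined with $x^{2}=-I\in C$ and $xax^{-1}=a^{-1}$, this shows $C\triangleleft<b,x>$ and $<b,x>=C\sqcup xC$. Hence $<b,x>$ is presented by $\langle a,x\mid a^{n}=I,\ x^{2}=a^{n/2},\ xax^{-1}=a^{-1}\rangle$, which is precisely the dicyclic group with cyclic subgroup $C$ of order $n$, and its order follows immediately.

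Suppose next $\mathrm{order}\,b=\infty$, i.e.\ $\phi_1/\pi$ irrational. Using Fact \ref{liesub} and the fact that a closed abelian subgroup of $SU(2)$ is contained in a maximal torus, $\overline{<b>}$ is the maximal torus $T=\{U(t,\vec k_1):t\in\RR\}\cong U(1)$. As $xTx^{-1}=T$ with $xtx^{-1}=t^{-1}$, while $x^{2}=-I\in T$ and $x\notin T$, we get $\overline{<b,x>}=T\sqcup xT=\{U(t,\vec k_1):t\in\RR\}\sqcup\{U(\phi_2,\vec k_2)U(t,\vec k_1):t\in\RR\}$, the normaliser of $T$ in $SU(2)$, i.e.\ the infinite dicyclic group. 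Finally, when $\phi_1\in\{\frac{\pi}{2},\frac{3\pi}{2}\}$ too, the product formula (\ref{product}) with $\cos\phi_1=\cos\phi_2=0$ gives $U(\gamma,\vec k_{12}):=U(\phi_1,\vec k_1)U(\phi_2,\vec k_2)$ with $\vec k_{12}$ proportional to $\vec k_1\times\vec k_2$ (so $\vec k_{12}\perp\vec k_2$) and $\cos\gamma=\vec k_1\cdot\vec k_2$; since $<b,x>=<U(\gamma,\vec k_{12}),x>$ and the pair $\{U(\gamma,\vec k_{12}),U(\phi_2,\vec k_2)\}$ again meets the hypotheses ($\vec k_{12}\perp\vec k_2$, and $\phi_2$ an odd multiple of $\frac{\pi}{2}$), the two previous paragraphs apply to it verbatim and yield once more a dicyclic group, now with $n$ the order of $U(\gamma,\vec k_{12})$.

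The one step that is more than routine bookkeeping is the claim that $<b,x>$ realises exactly the dicyclic presentation — with no relations beyond the three listed. This reduces to $x\notin C$ and $[<b,x>:C]=2$, both established above, together with the cyclicity of $C$; in the infinite case the corresponding point is the identification $\overline{<b,x>}=$ the torus normaliser, which rests on Fact \ref{liesub} and the classification of closed subgroups of $SU(2)$ (finite groups, subtori, normalisers of subtori, and $SU(2)$ itself). Everything else is a direct computation with (\ref{su2}), (\ref{auto}) and (\ref{product}).
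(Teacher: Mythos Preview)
Your proof is correct and follows essentially the same route as the paper's: set $b=U(\phi_1,\vec k_1)$, $x=U(\phi_2,\vec k_2)$, derive $x^2=-I$ and $xbx^{-1}=b^{-1}$, and then read off the dicyclic presentation (reducing the case of both $\phi_i\in\{\tfrac{\pi}{2},\tfrac{3\pi}{2}\}$ to the first via the product $U(\gamma,\vec k_{12})$). Your argument is in fact more careful than the paper's in one respect: you explicitly justify that the relations are \emph{complete} by exhibiting the coset decomposition $<b,x>=C\sqcup xC$ with $C=<b,-I>$ cyclic, whereas the paper simply writes down the presentation and asserts it defines the group.
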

In other words, the group generated by two noncommuting matrices from $SU(2)$ that do not satisfy the necessary condition for universality is either a finite or an infinite dicyclic group. 

\subsubsection{Two exceptional angles}\label{2exp}

Let $\phi_1\in\mathcal{L}_{SU(2)}\setminus\{0,\frac{\pi}{2},\pi,\frac{3\pi}{2}\}$\footnote{The case when both $\phi_i$'s are odd multiples of $\frac{\pi}{2}$ was treated in lemma \ref{lem:permut}.} and $\phi_2\in\mathcal{L}_{SU(2)}\setminus\{ 0,\pi\}$ and let $\mathcal{S}=\{U(\phi_1,\vec{k}_1),U(\phi_2,\vec{k}_2)\}$ be a two-element subset of  $SU(2)$. Using automorphism (\ref{auto}), for any $O\in SO(3)$ the group generated by $\mc{S}$ is isomorphic with the group generated by $U(\phi_1,O\vec{k}_1)$ and $U(\phi_2,O\vec{k}_2)$. This freedom allows us to choose $O\in SO(3)$ such that $\vec{k}_1^\prime= O\vec{k}_1=[0,0,1]$ and $\vec{k}_2^\prime= O\vec{k}_2=[\sin\alpha,0,\cos\alpha]$, for some $\alpha\in[0,2\pi)$. Thus in the following we will work with matrices $\mc{S}^\prime=\{U(\phi_1,\vec{k}^\prime_1),U(\phi_2,\vec{k}^\prime_2)\}$. Our aim is to determine how long does it take for the algorithm from section \ref{alg} to decide the universality of $\mc{S}^\prime$. If the algorithm does not terminate with $l=1$ this means that the product of matrices from $\mc{S}$ have exceptional spectral angles. Thus using formula \ref{product}
\begin{gather}
\cos\alpha=\vec{k}^\prime_{1}\cdot\vec{k}^\prime_2 = \frac{\cos\phi_1\cos\phi_2-\cos\gamma}{\sin\phi_1\sin\phi_2},
\end{gather} 
for some $\gamma\in \mc{L}_{SU(2)}$. In order to determine all such cases we need to exclude all triplets  $\phi_1,\phi_2, \gamma$  that lead to $|\cos\alpha|\geq1$. For all remaining cases we run our algorithm with matrices $\mathcal{S}$. The termination results are as follows:
\begin{enumerate}
\item The algorithm terminates in Step 2 for $l\leq 4$ and the resulting group is $SU(2)$.
\item The algorithm terminates in Step 3 with $5\leq l \leq 6$ and the resulting group has $24$ elements and is isomorphic to the binary therahedral group $<2,3,3>:=\lbrace a,b,c|a^2=b^3=c^3=abc\rbrace$.
\item The algorithm terminates in Step 3 with $7\leq l \leq 8$ and the resulting group has $48$ elements and is isomorphic to the binary octahedral group $<2,3,4>:=\lbrace a,b,c|a^2=b^3=c^4=abc\rbrace$.
\item  The algorithm terminates in Step 3 with $8\leq l \leq 13$ and the resulting group has $120$ elements and is isomorphic to the binary  icosahedral group $<2,3,5>:=\lbrace a,b,c|a^2=b^3=c^5=abc\rbrace$.
\end{enumerate}
To be more precise among all $10560$ exceptional triplets $\phi_1,\phi_2, \gamma$ there is $4816$ satisfying $|\cos\alpha|<1$. The number of triplets $\phi_1,\phi_2, \gamma$ that give termination of the algorithm for the length of the word equal to $l$ and the resulting groups are presented in Table \ref{tab:1}. 

\begin{table}[h]
\centering 
\begin{tabular}{||c|c|c|c||}
\hline\hline $l$&Step&Number of triplets $\phi_1,\phi_2, \gamma$&Generated group\\\hline\hline 
$-$&$1$&$80$&dicyclic group\\\hline 
$3$&$2 $&$3232$&$SU(2)$\\\hline 
$4$& $2$& $160 $&$SU(2)$\\\hline 
$5$& $3$ &$56 $&$<2,3,3>$\\\hline 
$6$& $3$ &$40 $&$<2,3,3>$\\\hline 
$7$ & $3$&$144 $&$<2,3,4>$\\\hline
$8$&$3$&$80$&$<2,3,4>$\\\hline
$8$&$3$&$240$&$<2,3,5>$\\\hline
$9$&$3$ &$352 $&$<2,3,5>$\\\hline 
$10$&$3$&$288$&$<2,3,5>$\\\hline 
$11$&$3$&$32$&$<2,3,5>$\\\hline 
$12$&$3$&$80$&$<2,3,5>$\\\hline 
$13$&$3$&$32$&$<2,3,5>$ \\\hline\hline
\end{tabular}
\caption{\label{tab:1} The number of exceptional triplets $\phi_1,\phi_2, \gamma$ terminating the universality algorithm for different $l$'s.}
\end{table}

As a direct consequence we get the following theorem:
\begin{thm}
\label{thm:two-mode-uni}Assume $\mc{S}=\{U(\phi_1,\vec{k}_1),U(\phi_2,\vec{k}_2)\}\subset SU(2)$. In order to verify universality of $\mc{S}$ it is enough to consider words of the length $l\leq 4$. Moreover, the algorithm terminates for $l\leq 13$. If it terminates in Step 1 the resulting group is either infinite or finite dicyclic group. If it terminates with $1\leq l\leq 4$ the resulting group is $SU(2)$. For $l\geq 5$ it is binary tetrahedral or binary octahedral or binary icosahedral group.
\end{thm}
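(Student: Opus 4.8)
The plan is to deduce the statement from the structural results already established, organising everything around the dichotomy supplied by Fact~\ref{solsu2}. Throughout I keep the standing hypothesis of this section that the two generators do not commute (if they did, $\overline{<\mc{S}>}$ would be abelian, hence lie in a maximal torus, certainly not be $SU(2)$, and the enlarged commutant would already make Step~1 return NO). The first branch to treat is $\mc{C}(\mr{Ad}_{U(\phi_1,\vec{k}_1)},\mr{Ad}_{U(\phi_2,\vec{k}_2)})\neq\{\lambda I\}$. By Fact~\ref{solsu2} this occurs exactly when $\phi_1,\phi_2\in\{\tfrac\pi2,\tfrac{3\pi}2\}$, or when exactly one $\phi_i$ lies in $\{\tfrac\pi2,\tfrac{3\pi}2\}$ and $\vec{k}_1\perp\vec{k}_2$; in both cases Step~1 returns NO and the algorithm stops there, while Lemma~\ref{lem:permut} identifies $\overline{<\mc{S}>}$ as a finite or an infinite dicyclic group. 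That settles every assertion about termination in Step~1.

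In the complementary branch, $\mc{C}(\mr{Ad}_{U(\phi_1,\vec{k}_1)},\mr{Ad}_{U(\phi_2,\vec{k}_2)})=\{\lambda I\}$, so Step~1 returns YES and $l$ is set to $1$. If at least one angle, say $\phi_1$, is non-exceptional, then by Definition~\ref{exangle} one has $U(\phi_1,\vec{k}_1)^n\notin Z(SU(2))$ for all $1\le n\le N_{SU(2)}=6$ (Fact~\ref{fact:max_exp_su2s3}), whereas Fact~\ref{fact:max_exponent} yields some $n\le N_{SU(2)}$ with $U(\phi_1,\vec{k}_1)^{n}\in\mc{B}$, so $U(\phi_1,\vec{k}_1)^{n}\in\mc{B}\setminus Z(SU(2))$; hence Step~2 succeeds already at $l=1$, and Theorem~\ref{main} (equivalently, Corollary~\ref{finite} together with Corollary~\ref{Ad1}) gives $\overline{<\mc{S}>}=SU(2)$.

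If instead both $\phi_i$ are exceptional then, because we are in this branch, Fact~\ref{solsu2} forbids both being odd multiples of $\tfrac\pi2$, and $\phi_i\in\{0,\pi\}$ is impossible by non-commutativity, so after a possible swap I may assume $\phi_1\in\mc{L}_{SU(2)}\setminus\{0,\tfrac\pi2,\pi,\tfrac{3\pi}2\}$ and $\phi_2\in\mc{L}_{SU(2)}\setminus\{0,\pi\}$. I would then use the automorphism~(\ref{auto}) to put $\vec{k}_1=[0,0,1]$ and $\vec{k}_2=[\sin\alpha,0,\cos\alpha]$, which alters $\overline{<\mc{S}>}$ only up to isomorphism and leaves a single real parameter $\alpha$. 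If the length-$2$ word $U(\phi_1,\vec{k}_1)U(\phi_2,\vec{k}_2)$ has a non-exceptional angle, the previous paragraph (applied to the enlarged $\mc{S}$) ends the run by $l=2$ with $\overline{<\mc{S}>}=SU(2)$; otherwise its angle $\gamma$ lies in $\mc{L}_{SU(2)}$ and, by~(\ref{product}), $\cos\alpha=(\cos\phi_1\cos\phi_2-\cos\gamma)/(\sin\phi_1\sin\phi_2)$, so the whole configuration is pinned down by an exceptional triple $(\phi_1,\phi_2,\gamma)$ with $|\cos\alpha|\le 1$. There are $20\cdot 22\cdot 24=10560$ such triples, $4816$ of them with $|\cos\alpha|<1$, and for each of those I would simply run the algorithm (a finite computation), recovering the entries of Table~\ref{tab:1}: the run always stops with $l\le 13$, it stops in Step~2 with $l\le 4$ exactly when $\overline{<\mc{S}>}=SU(2)$, and otherwise $\overline{<\mc{S}>}$ is isomorphic to $\langle 2,3,3\rangle$, $\langle 2,3,4\rangle$ or $\langle 2,3,5\rangle$; the $|\cos\alpha|=1$ triples fall back into the first branch and give dicyclic groups. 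Putting the branches together, if the algorithm has not certified $\overline{<\mc{S}>}=SU(2)$ by the time $l=4$ then $\mc{S}$ is not universal (it will eventually be declared finite in Step~3), so universality is indeed decided by words of length $l\le 4$, even though certifying finiteness may carry the run up to $l\le 13$.

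I expect the sole genuine obstacle to be this finite but sizeable enumeration: checking that the list of exceptional triples is exhaustive, that no surviving configuration escapes termination past $l=13$, and identifying the finite groups that arise as the binary tetrahedral, octahedral and icosahedral groups (and, in the first branch, as the finite or infinite dicyclic groups). Everything else is a routine bookkeeping application of Facts~\ref{solsu2},~\ref{fact:max_exp_su2s3} and~\ref{fact:max_exponent}, Lemma~\ref{lem:permut}, Corollary~\ref{finite} and Theorem~\ref{main}.
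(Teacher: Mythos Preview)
Your proposal follows the paper's own argument essentially verbatim: split on whether the adjoint commutant is trivial (Fact~\ref{solsu2} and Lemma~\ref{lem:permut} for the dicyclic branch), dispose of non-exceptional angles via Theorem~\ref{main}, normalise the axes via the automorphism~(\ref{auto}), and then carry out the finite enumeration over exceptional triples $(\phi_1,\phi_2,\gamma)$ summarised in Table~\ref{tab:1}. The only inaccuracy is minor bookkeeping: the $|\cos\alpha|=1$ triples correspond to parallel axes (commuting generators, which you already excluded), not to dicyclic groups---the $80$ Step~1 entries in Table~\ref{tab:1} instead come from $\phi_2\in\{\tfrac\pi2,\tfrac{3\pi}2\}$ with $\cos\alpha=0$, and are already covered by your first branch.
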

\section{Universality of $2$-mode beamsplitters}\label{sec:2mode_in3modes}
In this section we address the universality problem of a single gate that belong to $SO(2)$ or $SU(2)$  and acts on a $d$-dimensional space, where $d>2$. More precisely, we consider the Hilbert space $\mathcal{H}=\mc{H}_1\oplus\ldots\oplus\mc{H}_d$, where $\mc{H}_k\simeq\mathbb{C}$, $d>2$. Next we take a matrix $B\in SU(2)$ or $B\in SO(2)$. This matrix will be referred to as a $2$-mode beamsplitter. We assume that we can permute {\it modes} and therefore we have access to matrices $B$ and $B^{\sigma}=\sigma^tB\sigma$, where $\sigma$ is the permutation matrix. Next, we define matrices $B_{ij}$ or $B^{\sigma}_{ij}$ to be the matrices that act on a $2$-dimensional subspace $\mc{H}_i\oplus\mc{H}_j\subset \mc{H}$ as $B$ or $B^{\sigma}$ respectively and on the other components of $\mc{H}$ as the identity. This way we obtain the set of $2{d \choose 2}=d(d-1)$ matrices $\mathcal{S}_d=\{B_{ij},B^{\sigma}_{ij}:i<j,\:i,j\in\{1,\ldots,d\}\}$ in $SU(d)$ or $SO(d)$ respectively. Let us denote by $\mc{X}_d=\{b_{ij}, b_{ij}^{\sigma}:i<j,\:i,j\in\{1,\ldots,d\}\}$ the set of corresponding Lie algebra elements $B_{ij}=e^{b_{ij}}$, $B_{ij}^{\sigma}=e^{b_{ij}^{\sigma}}$ (constructed as in Section \ref{log-construct}). Our goal is to find out when $\mathcal{S}_d$ is universal, i.e. when $\overline{<\mc{S}_d>}=SO(d)$ or $\overline{<\mc{S}_d>}=SU(d)$. In particular we focus on showing, for which $B$ the set $\mc{S}_3$ is universal. It is known that for such $B$ also any set $\mc{S}_d$ with $d>3$ will be universal (see \cite{Reck,sawicki1} for two alternative proofs).

\subsection{Spaces $\mc{C}(\mathrm{Ad}_{\mc{S}_3})$ and $\mc{C}(\mathrm{ad}_{\mc{X}_3})$}\label{sec:generating_Lie_algebra} 
In this section we characterise when $\mc{C}(\mathrm{Ad}_{\mc{S}_3})=\{\lambda I\}$ for both orthogonal and unitary beamsplitters. Our strategy is to first check when $\mc{C}(\mathrm{ad}_{\mc{X}_3})=\{\lambda I\}$. This can be done relatively easy. Then we use Facts \ref{diff-su} and \ref{diff-so}  to find $\mc{C}(\mathrm{Ad}_{\mc{S}_3})$. 
\subsubsection{The case of orthogonal group}
Let $B\in SO(2)$ be a rotation matrix by an angle $\phi\in (0,2\pi)$. Making use of the notation introduced in Section \ref{kubity} we have 
\begin{gather}
\mc{S}_3=\{B_{23}(\pm\phi),B_{13}(\pm\phi),B_{12}(\pm\phi)\},\\
\mc{X}_3=\{\pm \phi X_{23},\pm \phi X_{13},\pm \phi X_{12}\},
\end{gather}
where $B_{ij}(\pm\phi)$ correspond to the rotation matrices in three dimensions, i.e. $B_{12}(\phi)=O(\pm\phi,\vec{k}_z)$, $B_{13}(\phi)=O(\pm\phi,\vec{k}_y)$ and $B_{23}(\phi)=O(\pm\phi,\vec{k}_x)$, where $\vec{k}_x=[1,0,0]$, $\vec{k}_y=[0,1,0]$, $\vec{k}_z=[0,0,1]$ and matrices $X_{i,j}$ are defined by (\ref{Xij}). Note that matrices belonging to $\mc{X}$ form a basis of the Lie algebra $\mk{so}(3)$ iff $\phi\neq 0$. Therefore by Corollary \ref{ad1} we know that $\mc{C}(\mr{ad}_{\mc{X}_3})=\{\lambda I\}$. The adjoint matrices $\mr{Ad}_{O(\pm\phi,\vec{k}_i)}$ are again rotation matrices by angles $\pm\phi$ along axes $\vec{k}_{i}$. On the other hand, by Fact \ref{diff-so}  we know that $\mc{C}(\mathrm{Ad}_{\mc{S}_3})$ can be different than $\mc{C}(\mathrm{ad}_{\mc{X}_3})$ only if $\phi=\pm\pi$. Indeed in this case the adjoint matrices $\mr{Ad}_{O(\pm\phi,\vec{k}_i)}$ commute. Summing up we have
\begin{fact}\label{Cso3}
For a $2$-mode orthogonal beamsplitter. If $\phi\neq 0$ then $\mc{C}(\mr{ad}_{\mc{X}_3})=\{\lambda I\}$. On the other hand $\mc{C}(\mr{Ad}_{\mc{S}_3})=\{\lambda I\}$ iff $\phi\notin\{ 0,\pi\}$.
\end{fact}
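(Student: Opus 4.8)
\emph{Proof proposal.} The plan is to treat the two claims in turn, starting with the Lie-algebra side, which is the easier one. Up to the nonzero rescaling by $\phi$ and up to signs, the set $\mc{X}_3=\{\pm\phi X_{23},\pm\phi X_{13},\pm\phi X_{12}\}$ is just a nonzero multiple of the standard basis $\{X_{23},X_{13},X_{12}\}$ of $\mk{so}(3)$, so for $\phi\neq 0$ it spans $\mk{so}(3)$ and in particular generates it as a Lie algebra. Since $\mk{so}(3)$ is a compact simple Lie algebra, Corollary \ref{ad1} then yields at once $\mc{C}(\mr{ad}_{\mc{X}_3})=\{\lambda I:\lambda\in\RR\}$. (For $\phi=0$ one has $\mc{X}_3=\{0\}$, which generates nothing, and $\mc{C}(\mr{ad}_{\mc{X}_3})=\mr{End}(\mk{so}(3))$; this is why $\phi=0$ is excluded.)

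For the group statement I would first record the general inclusion $\mc{C}(\mr{ad}_{\mc{X}_3})\subseteq\mc{C}(\mr{Ad}_{\mc{S}_3})$, valid because $\mr{Ad}_{e^{X}}=e^{\mr{ad}_X}$; together with the first part this already gives $\{\lambda I\}\subseteq\mc{C}(\mr{Ad}_{\mc{S}_3})$ for $\phi\neq 0$, so only the reverse inclusion is in question. The key input is Fact \ref{diff-so}: $\mc{C}(\mr{Ad}_{\mc{S}_3})$ can be strictly larger than $\mc{C}(\mr{ad}_{\mc{X}_3})$ only if, for one of the generators $B_{ij}(\pm\phi)=O(\pm\phi,\vec{k})\in SO(3)$, the sum or difference of its spectral angles is an odd multiple of $\pi$. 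Each such generator has a single spectral angle $\phi\in(0,2\pi)$, and on this range the only odd multiple of $\pi$ that can arise is $\phi=\pi$; equivalently, $\mr{Ad}_{O(\phi,\vec{k})}$, which under the identification $\mk{so}(3)\simeq\RR^3$ is simply the rotation by $\phi$ about $\vec{k}$, has its $2\times 2$ rotation block equal to $-I_2$ exactly when $\phi=\pi$. Hence for $\phi\notin\{0,\pi\}$ we obtain $\mc{C}(\mr{Ad}_{\mc{S}_3})=\mc{C}(\mr{ad}_{\mc{X}_3})=\{\lambda I\}$, using also that passing from $\mc{S}_3$ to $\{B_{12}(\phi),B_{13}(\phi),B_{23}(\phi)\}$ does not change the commutant since $B_{ij}(-\phi)=B_{ij}(\phi)^{-1}$.

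It remains to produce, for $\phi\in\{0,\pi\}$, an element of $\mc{C}(\mr{Ad}_{\mc{S}_3})$ that is not a scalar. For $\phi=0$ this is immediate, as every $\mr{Ad}_{B_{ij}(0)}=I$. For $\phi=\pi$ I would write $\mr{Ad}_{B_{12}(\pi)}$, $\mr{Ad}_{B_{13}(\pi)}$, $\mr{Ad}_{B_{23}(\pi)}$ in the basis $\{\vec{k}_x,\vec{k}_y,\vec{k}_z\}$: being the rotations by $\pi$ about the three coordinate axes they equal $\mr{diag}(1,-1,-1)$, $\mr{diag}(-1,1,-1)$ and $\mr{diag}(-1,-1,1)$. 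Any diagonal matrix commutes with all three, so $\mr{diag}(1,1,-1)\in\mc{C}(\mr{Ad}_{\mc{S}_3})$ yet is plainly not a multiple of the identity, so $\mc{C}(\mr{Ad}_{\mc{S}_3})\neq\{\lambda I\}$. (In fact an endomorphism commuting with the first two already has to be diagonal, so $\mc{C}(\mr{Ad}_{\mc{S}_3})$ is precisely the space of diagonal $3\times 3$ matrices, but a single non-scalar element suffices.)

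This is largely a bookkeeping argument and I do not anticipate a real obstacle. The one point needing a little care is reconciling the general form of Fact \ref{diff-so}, stated for $SO(d)$ with arbitrarily many spectral angles, with the degenerate case $d=3$ (i.e. $k=1$), where there is a single rotation block and the ``sum or difference of spectral angles'' criterion collapses to the lone condition $\phi=\pi$ on the range $(0,2\pi)$.
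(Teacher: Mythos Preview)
Your proof is correct and follows essentially the same route as the paper: the Lie-algebra part via Corollary \ref{ad1}, the reduction to $\phi=\pi$ via Fact \ref{diff-so}, and then checking that for $\phi=\pi$ the adjoint matrices (which are just the coordinate-axis $\pi$-rotations) commute. You are slightly more explicit than the paper in the last step, writing out the diagonal matrices and exhibiting a non-scalar commuting element, whereas the paper simply notes that the $\mr{Ad}_{O(\pm\pi,\vec{k}_i)}$ commute; your caveat about the $k=1$ degeneration in Fact \ref{diff-so} is also well taken.
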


\subsubsection{The case of unitary group}
Let $B\in SU(2)$. Making use of the notation introduced in Section \ref{kubity} we assume $B=U(\phi,\vec{k})$, $\phi\neq 0\,\mr{mod}\,\pi$, $\vec{k}=[k_x,k_y,k_z]$ and $k_x^2+k_y^2+k_z^2=1$. Therefore we have: 

\begin{gather}
\mc{X}_3=\{b_{ij}, b_{ij}^\sigma:\,1\leq i<j\leq3\}\}=\phi\cdot\{k_x{X}_{ij}+k_y{Y}_{ij}+k_z{Z}_{ij}, \nonumber \\-k_x{X}_{ij}+k_y{Y}_{ij}-k_z{Z}_{ij}:\,1\leq i<j\leq3\},\\
\mc{S}_3=\{B_{ij},B_{ij}^\sigma:\,1\leq i<j\leq3\}\}=\{ I_{ij}(\phi)+\sin\phi(k_x{X}_{ij}+k_y{Y}_{ij}+k_z{Z}_{ij}),\nonumber\\ I_{ij}(\phi)+\sin\phi(-k_x{X}_{ij}+k_y{Y}_{ij}-k_z{Z}_{ij}):\,1\leq i<j\leq3\},
\end{gather}
where $I_{ij}(\phi)=\cos\phi (E_{ii}+E_{jj})+E_{ll}$, $l\in\{1,2,3\}\setminus\{i,j\}$ and matrices $\{X_{ij},Y_{ij},Z_{ij}\}$ are defined as in (\ref{Xij}). We start from finding $\mc{C}(\mr{ad}_{\mc{X}_3})$. To this end note that $\left[b_{ij},b_{ij}^\sigma\right]=4k_y\left(k_x{Z}_{ij}-k_z{X}_{ij}\right)$. If $\left[b_{ij},b_{ij}^\sigma\right]\neq 0$ then $b_{ij}$ and $b_{ij}^\sigma$ generate $\mk{su}(2)_{ij}$. Thus we have access to all elements ${{X}_{ij}}$, ${{Y}_{ij}}$ and ${{Z}_{ij}}$ $1\leq i<j\leq3$. Hence $\mc{X}_3$ generates $\mk{su}(3)$ and  $\mc{C}(\mr{ad}_{\mc{X}_3})=\{\lambda I\}$. 
If in turn $[b_{ij},b_{ij}^\sigma]=0$  then we need to consider four cases: (1) $k_y\neq0$ and $k_x=0=k_z$, (2) $k_y=0$ and  $k_x\neq0$ and $k_z\neq 0$, (3) $k_y=0=k_z$ and $k_x\neq0$, (4) $k_y=0=k_x$ and $k_z\neq0$. 
\begin{enumerate}
\item  In this case $b_{ij}=k_y{Y}_{ij}=b_{ij}^\sigma$, therefore we have access to
all $\{{Y}_{ij}\}_{i<j}$, $i,j\in\{1,2,3\}$. But by the commutation relations $[{Y}_{ij},{Y}_{ik}]=-{X}_{jk}$,
$[{Y}_{ij},{Y}_{jk}]=-{X}_{ik}$, $[{Y}_{ij},{Y}_{kj}]=-{X}_{ik}$ and $\left[{X}_{ij},{Y}_{ij}\right]=2{Z}_{ij}$. Thus we can generate all basis elements of $\mk{su}(3)$ starting from ${Y}_{ij}$'s. This means $\mc{C}(\mr{ad}_{\mc{X}_3})=\{\lambda I\}$. 

\item In this case $b_{ij}=-b_{ij}^\sigma$.
Direct calculations show that elements: 
\begin{gather*}
\left[b_{12},\left[b_{12},b_{13}\right]\right],\,\left[b_{12},\left[b_{12},b_{23}\right]\right],\,\left[b_{13},\left[b_{13},b_{12}\right]\right],\\
\left[b_{13},\left[b_{13},b_{23}\right]\right],\,\left[b_{23},\left[b_{23},b_{12}\right]\right],\,\left[b_{12},\left[b_{12},\left[b_{13},b_{23}\right]\right]\right],\\
\left[b_{23},\left[b_{13},\left[b_{23},b_{12}\right]\right]\right],\,\left[b_{13},\left[b_{13},\left[b_{23},b_{12}\right]\right]\right],
\end{gather*}
form a basis of $\mathfrak{su}(3)$. Thus $\mc{C}(\mr{ad}_{\mc{X}_3})=\{\lambda I\}$. 
\item In this case the algebra generated by $\mc{X}_3$ is clearly $\mk{so}(3)$. Hence $\mc{C}(\mr{ad}_{\mc{X}_3})\neq\{\lambda I\}$. 

\item In this case the algebra generated by $\mc{X}_3$ is abelian. Hence $\mc{C}(\mr{ad}_{\mc{X}_3})\neq\{\lambda I\}$. 
\end{enumerate}
We have just shown:
\begin{fact}\label{fact:su2_in_su3_algebra}
For a $2$-mode unitary beamsplitter $B=I\cos\phi+\sin\phi(k_x{X}+k_y{Y}+k_z{Z})$, where $k_x^2+k_y^2+k_z^2=1$ we have $\mc{C}(\mr{ad}_{\mc{X}_3})=\{\lambda I\}$ unless  (a) $k_y=0=k_z$ and $k_x=1$, (b) $k_y=0=k_x$ and $k_z=1$. 
\end{fact}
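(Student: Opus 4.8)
The plan is to reduce the statement to a Lie algebra generation question via Corollary \ref{ad1}: since $\mk{su}(3)$ is a compact simple Lie algebra, $\mc{C}(\mr{ad}_{\mc{X}_3})=\{\lambda I\}$ if and only if $\mc{X}_3$ generates $\mk{su}(3)$. So the whole task is to decide, as a function of the unit vector $\vec{k}=[k_x,k_y,k_z]$, whether the Lie algebra generated by the elements $b_{ij}$, $b_{ij}^\sigma$ is all of $\mk{su}(3)$.

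First I would compute the commutator inside a fixed pair of modes, $\left[b_{ij},b_{ij}^\sigma\right]=4k_y\left(k_x Z_{ij}-k_z X_{ij}\right)$. When this is nonzero, $b_{ij}$ and $b_{ij}^\sigma$ span a non-abelian subalgebra of $\mk{su}(2)_{ij}$, hence generate $\mk{su}(2)_{ij}$, so that all of $X_{ij},Y_{ij},Z_{ij}$ (for $1\leq i<j\leq 3$) become available; these are the standard generators of $\mk{su}(3)$, and we are done. Therefore the only situations needing attention are those where $\left[b_{ij},b_{ij}^\sigma\right]=0$ for every pair, i.e. $k_y=0$, or $k_x=k_z=0$ (which forces $k_y=\pm1$). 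Using that $\vec{k}$ is a unit vector, this splits exhaustively into the four subcases listed before the Fact: (1) $k_y\neq 0$, $k_x=k_z=0$; (2) $k_y=0$, $k_x\neq 0\neq k_z$; (3) $k_y=k_z=0$, $k_x\neq 0$; (4) $k_x=k_y=0$, $k_z\neq 0$.

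In case (1) the generators reduce to the $Y_{ij}$'s, and I would apply the relations $[Y_{ij},Y_{ik}]=-X_{jk}$, $[Y_{ij},Y_{jk}]=-X_{ik}$, $[Y_{ij},Y_{kj}]=-X_{ik}$ together with $[X_{ij},Y_{ij}]=2Z_{ij}$ to recover every element of the standard basis, so $\mc{X}_3$ generates $\mk{su}(3)$. In case (2) one has $b_{ij}^\sigma=-b_{ij}$, so the algebra is generated by the three elements $b_{12},b_{13},b_{23}$; here I would exhibit the eight (at most triply nested) commutators listed in the excerpt and verify by direct computation that they are linearly independent, hence a basis of the $8$-dimensional space $\mk{su}(3)$. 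In case (3) the generators are proportional to $X_{ij}$'s and span only $\mk{so}(3)\subsetneq\mk{su}(3)$; in case (4) the generators are the mutually commuting $Z_{ij}$'s and generate an abelian algebra. In the last two cases $\mc{X}_3$ does not generate $\mk{su}(3)$, so $\mc{C}(\mr{ad}_{\mc{X}_3})\neq\{\lambda I\}$, and these are precisely the exceptions (a) $k_y=k_z=0$, $k_x=1$ and (b) $k_x=k_y=0$, $k_z=1$ in the statement (up to the irrelevant overall sign of $\vec{k}$).

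The main obstacle is case (2): checking that the specific list of nested commutators of $b_{12},b_{13},b_{23}$ is linearly independent in $\mk{su}(3)$. This is a finite, purely computational linear-algebra verification, but it is the only genuinely calculational step — everything else follows from the $\mk{su}(2)$ subalgebra structure and the known commutation relations of the standard basis. A secondary point to handle carefully is confirming that the vanishing condition $4k_y(k_x Z_{ij}-k_z X_{ij})=0$, combined with $k_x^2+k_y^2+k_z^2=1$, really does force exactly one of the four subcases, so that the case analysis is complete.
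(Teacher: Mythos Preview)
Your proposal is correct and follows essentially the same approach as the paper: the reduction via Corollary~\ref{ad1} to Lie algebra generation, the commutator $[b_{ij},b_{ij}^\sigma]=4k_y(k_xZ_{ij}-k_zX_{ij})$, the resulting four-case split, and the treatment of each case (including the explicit list of nested commutators in case (2)) are exactly what the paper does. The only point worth flagging is that in case~(2) the paper, like you, leaves the linear independence of those eight commutators as a direct computation rather than spelling it out.
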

Next we characterise $\mc{C}(\mr{Ad}_{\mc{S}_3})$. The adjoint matrices $\mr{Ad}_{B_{ij}}$ and $\mr{Ad}_{B_{ij}^\sigma}$ are elements of $SO(\mk{su}(3))\simeq SO(8)$. The rotation angles of both $\mr{Ad}_{B_{ij}}$ and $\mr{Ad}_{B_{ij}^\sigma}$ are $\pm \phi$, $2\phi$ and $0$. On the other hand, by Fact \ref{diff-su} we know that $\mc{C}(\mathrm{Ad}_{\mc{S}_3})$ can be different than $\mc{C}(\mathrm{ad}_{\mc{X}_3})$ only if the rotation angle is $\pm\pi$. This corresponds to situations when either $\phi=\pm\pi$ or $\phi=\pm\frac{\pi}{2}$. In the first case $B=-I$, thus obviously $\mc{C}(\mathrm{Ad}_{\mc{S}_3})\neq \{\lambda I\}$. The case $\phi=\pm \frac{\pi}{2}$ corresponds to $\frac{\pi}{2}\cdot\mc{S}_3=\mc{X}_3$. 

\begin{fact}\label{fact:su2_in_su3_group}
For a $2$-mode unitary beamsplitter $B=I\cos\phi+\sin\phi(k_x{X}+k_y{Y}+k_z{Z})$ we have $\mc{C}(\mr{Ad}_{\mc{S}_3})=\{\lambda I\}$ unless  (a) $k_y=0=k_z$ and $k_x=1$, (b) $k_y=0=k_x$ and $k_z=1$, (c) $\phi=\pm\frac{\pi}{2}$ and $k_z=0$.
\end{fact}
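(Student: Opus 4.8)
The plan is to bootstrap from Fact~\ref{fact:su2_in_su3_algebra}, which already settles the algebra-level commutant, using Fact~\ref{diff-su} to control the gap between $\mc{C}(\mr{Ad}_{\mc{S}_3})$ and $\mc{C}(\mr{ad}_{\mc{X}_3})$. First I would record the easy inclusion $\mc{C}(\mr{ad}_{\mc{X}_3})\subseteq\mc{C}(\mr{Ad}_{\mc{S}_3})$, so that in cases (a) and (b) the commutant is nonscalar by Fact~\ref{fact:su2_in_su3_algebra}. Next, every $B_{ij}$ and every $B_{ij}^{\sigma}$ is an $SU(2)$-rotation by $\phi$ (just about the axes $(k_x,k_y,k_z)$ and $(-k_x,k_y,-k_z)$), so each has spectrum $\{e^{i\phi},e^{-i\phi},1\}$, and its spectral angles differ pairwise by $\pm\phi$ and $\pm 2\phi$; this is an odd multiple of $\pi$ exactly when $\phi=\pm\pi$ (impossible, since a beamsplitter has $\phi\neq 0\bmod\pi$, i.e. $B\neq -I$) or $\phi=\pm\tfrac{\pi}{2}$. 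By Fact~\ref{diff-su}, $\mc{C}(\mr{Ad}_{\mc{S}_3})=\mc{C}(\mr{ad}_{\mc{X}_3})$ whenever $\phi\neq\pm\tfrac{\pi}{2}$, and there the statement is just Fact~\ref{fact:su2_in_su3_algebra}. Everything thus reduces to the single value $\phi=\pm\tfrac{\pi}{2}$, where $\cos\phi=0$, $B=\pm u(\vec k)$, each $b_{ij}=\pm\tfrac{\pi}{2}u_{ij}(\vec k)$ and $B_{ij}=e^{b_{ij}}$, and in particular $B_{ij}^{2}$ is the diagonal sign matrix equal to $-I$ on modes $i,j$ and $1$ on the remaining mode.

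The first move for $\phi=\pm\tfrac{\pi}{2}$ is to feed the six elements $B_{ij}^{2},(B_{ij}^{\sigma})^{2}\in<\mc{S}_3>$ into an arbitrary $L\in\mc{C}(\mr{Ad}_{\mc{S}_3})$. Since $L$ must preserve every common eigenspace of the matrices $\mr{Ad}_{\mr{diag}(\pm1,\pm1,\pm1)}$, intersecting these eigenspaces shows that $L$ preserves the Cartan subalgebra $\mk{t}=\mr{span}\{Z_{12},Z_{23}\}$ and each root space $\mk{g}_{ij}=\mr{span}\{X_{ij},Y_{ij}\}$ (notation of \eqref{Xij}). Hence $L=L_{\mk t}\oplus L_{12}\oplus L_{13}\oplus L_{23}$ with each summand a $2\times 2$ real block, and it remains to pin these blocks down using the full adjoint matrices $\mr{Ad}_{B_{ij}},\mr{Ad}_{B_{ij}^{\sigma}}$, organised along the same case split $k_z=0$ versus $k_z\neq0$ as in the proof of Fact~\ref{fact:su2_in_su3_algebra}.

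If $k_z=0$ then $u(\vec k)=k_xX+k_yY$ is off-diagonal, so every $B_{ij}$ and $B_{ij}^{\sigma}$ is a monomial (generalised permutation) matrix; hence $<\mc{S}_3>$ is contained in the monomial subgroup of $SU(3)$, whose adjoint action manifestly preserves $\mk t$ (conjugating a diagonal matrix by a monomial one keeps it diagonal). Therefore the Killing-orthogonal projection $\Pi_{\mk t}$ onto $\mk t$ lies in $\mc{C}(\mr{Ad}_{\mc{S}_3})$ and is not a multiple of the identity, which is precisely exception (c). For the converse, when $k_z\neq0$ and we are not in (a) or (b), I would argue that $\vec k$ has a nonzero $Z_{ij}$-component but is not $\pm\hat z$, so the rotation by $\pi$ about $\vec k$ that $\mr{Ad}_{B_{ij}}$ induces on $\mk{su}(2)_{ij}\cong\RR^3$ sends $Z_{ij}$ to a vector with a nonzero $\mk{g}_{ij}$-component; combining $[L,\mr{Ad}_{B_{ij}}]=0$ with the fact that $L$ preserves $\mk t$ and $\mk g_{ij}$ forces $Z_{ij}$ to be an eigenvector of $L_{\mk t}$ for every pair $ij$, and since $Z_{13}=Z_{12}+Z_{23}$ this gives $L_{\mk t}=\lambda I$. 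The same identity then makes the $\mk g_{ij}$-component of $\mr{Ad}_{B_{ij}}(Z_{ij})$ an eigenvector of $L_{ij}$ with eigenvalue $\lambda$, and running the argument also with $B_{ij}^{\sigma}$ produces a second such eigenvector, proportional to $k_xX_{ij}-k_yY_{ij}$; when $k_xk_y\neq0$ these two are independent, so $L_{ij}=\lambda I$ and hence $L=\lambda I$ on all of $\mk{su}(3)$.

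The genuinely delicate point is the residual subcase $k_z\neq0$ with $k_x=0$ or $k_y=0$ (but $\vec k\neq\pm\hat z$), where the two eigenvectors above coincide and one direction in each $\mk g_{ij}$ is still unconstrained. Here I would bring in one further relation: inside $SU(2)_{ij}$ the elements $B_{ij},B_{ij}^{\sigma}$ (or, in the case $b_{ij}^{\sigma}=-b_{ij}$, the element $B_{ij}$ together with the cross-gates) generate, after closure, a nontrivial one-parameter family of rotations whose axis is proportional to $\vec k\times\vec k^{\sigma}$, and intersecting the constraint this puts on $L|_{\mk{su}(2)_{ij}}$ with the coupling imposed by the fact that $\mr{Ad}_{B_{12}}$ mixes the root spaces $\mk g_{13},\mk g_{23}$ (without preserving them individually) while $L$ preserves each of them, again forces $L_{ij}=\lambda I$; assembling the blocks gives $L=\lambda I$. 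I expect this last subcase — keeping precise track of which linear constraints on the $2\times2$ blocks each group element contributes — to be where the real bookkeeping lies, even though every individual step is a short computation with the explicit parameterisations of Section~\ref{kubity}.
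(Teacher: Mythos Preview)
Your reduction to $\phi=\pm\tfrac{\pi}{2}$ via Fact~\ref{diff-su} and Fact~\ref{fact:su2_in_su3_algebra} matches the paper exactly, but from that point on the two arguments diverge. The paper does \emph{not} attempt a structural proof of the implication ``$\phi=\pm\tfrac{\pi}{2}$ and $k_z\neq0$ $\Rightarrow$ $\mc{C}(\mr{Ad}_{\mc{S}_3})=\{\lambda I\}$''; it simply reports that this was checked by symbolic computation on the six $8\times8$ matrices $\mr{Ad}_g$, $g\in\mc{S}_3$, and for the exception $k_z=0$ it writes out the action of each $\mr{Ad}_{B_{ij}}$ on $\mr{span}\{Z_{12},Z_{23}\}$ explicitly. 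Your monomial-matrix observation recovers this last point more conceptually.

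Your by-hand argument for $k_z\neq0$ is therefore genuinely new relative to the paper. The block decomposition coming from the squares $B_{ij}^{2}=\mr{diag}(\pm1,\pm1,\pm1)$ is a clean device, and the deduction that each $Z_{ij}$ is an eigenvector of $L_{\mk t}$ (hence $L_{\mk t}=\lambda I$ via $Z_{13}=Z_{12}+Z_{23}$), followed by the two $\mk g_{ij}$-eigenvectors $k_xX_{ij}\pm k_yY_{ij}$, is correct and settles the case $k_xk_y\neq0$ completely. This is a real proof of what the paper offloads to a computer.

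The residual subcase $k_xk_y=0$ with $k_z\neq0$ is, as you admit, not yet closed. One caution: invoking ``after closure, a nontrivial one-parameter family'' inside $SU(2)_{ij}$ is unsafe here, since at $\phi=\tfrac{\pi}{2}$ both $B_{ij}$ and $B_{ij}^{\sigma}$ have order $4$ and their product need not have infinite order for every $(k_y,k_z)$. The sound route is the one you mention second: use that $\mr{Ad}_{B_{12}}$ genuinely couples $\mk g_{13}$ and $\mk g_{23}$ (it acts there with rotation angles $\pm\tfrac{\pi}{2}$, hence with no real eigenvector inside either block alone), and combine that with the block-diagonal form of $L$ to force the remaining eigenvalue to equal $\lambda$. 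That is still a finite computation in the parametrisation of Section~\ref{kubity}, and is what the paper's symbolic check is implicitly doing.
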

\begin{proof}
Recall that $\mc{C}(\mr{ad}_{\mc{X}_3})\subseteq\mc{C}(\mr{Ad}_{\mc{S}_3})$. Cases (a) and (b) correspond to situations when $\mc{C}(\mr{ad}_{\mc{X}_3})\neq\{\lambda I\}$. Case (c) follows from direct calculations for six $\mr{Ad}_g$ matrices with $\phi=\pm\frac{\pi}{2}$ and $g\in \mc{S}_3$. They were done with the help of a symbolic calculation software. We only verify that when $\phi=\pm\frac{\pi}{2}$ and $k_z=0$ indeed $\mc{C}(\mr{Ad}_{\mc{S}_3})\neq\{\lambda I\}$. Therefor we define $\mk{h}=\mr{Span}_{\mathbb{R}}\{{Z}_{12},{Z}_{23}\}$, $\mr{dim}_\mathbb{R}\mk{h}=2$ and show that for $\phi=\pm\frac{\pi}{2}$ and $k_z=0$ the space $\mathfrak{h}$ is an invariant subspace for matrices  $\mr{Ad}_{B_{ij}}$ and $\mr{Ad}_{B_{ij}^\sigma}$, i.e. of $\mc{S}_3$. To this end we calculate 
\begin{gather}
\mr{Ad}_{B_{12}}{Z}_{12}=-{Z}_{12},\,\,\mr{Ad}_{B_{13}}Z_{12}=-{Z}_{23},\,\,\mr{Ad}_{B_{23}}{Z}_{12}={Z}_{12}+{Z}_{23},\\
\mr{Ad}_{B_{12}}{Z}_{23}={Z}_{23}+{Z}_{12},\,\,\mr{Ad}_{B_{13}}{Z}_{23}=-{Z}_{12},\,\,\mr{Ad}_{B_{23}}{Z}_{23}=-{Z}_{23}.
\end{gather}
and $\mr{Ad}_{B^\sigma_{ij}}{Z}_{kl}=\mr{Ad}_{B_{ij}}{Z}_{kl}$. Therefore the projection operator $P:\mk{su}(3)\rightarrow \mk{h}$ commutes with matrices from $\mc{S}_3$ and thus it belongs to $\mc{C}(\mr{Ad}_{\mc{S}_3})$.
\end{proof}

It is interesting to look at the structure of the group $\overline{<\mc{S}_3>}$ when $k_z=0$ and $\phi=\frac{\pi}{2}$.  Matrices are of the form $B_{ij}=e^{i\psi}E_{ij}-e^{-i\psi}E_{ji}+E_{kk}$ and $B_{ij}^\sigma=-e^{-i\psi}E_{ij}+e^{i\psi}E_{ji}+E_{kk}$, where $1\leq i<j\leq 3$, $k\neq i,j $ and $\psi\in[0,2\pi)$. If $\psi$ is a rational multiple of $\pi$, then it is easy to see that  $<\mc{S}_3>$ is a finite group and when $\psi$ is an irrational multiple of $\pi$ the group $\overline{<\mc{S}_3>}$ is infinite and disconnected. In fact these are groups isomorphic to $\Delta(6n^2)$ and $\Delta(6\infty^2)$ given in \cite{fulton}.
\subsection{When $\mc{S}_{3}$ is universal?}
Having characterised when $\mc{C}(\mr{Ad}_{\mc{S}_3})=\{\lambda I\}$ we check  in this section when the group $<\mc{S}_3>$ is infinite and this way we get the full classification of universal $2$-mode beamsplitters.
\subsubsection{The case of the orthogonal group}
\label{sec:so2_into_so3_matlab}
Combining Theorem \ref{main} with Fact \ref{Cso3} for $\phi\notin \mathcal{L}_{SO(3)}$ we obtain that the group generated by  $\mc{S}_3$ is exactly $SO(3)$. When $\phi\in\mathcal{L}_{SO(3)}$ we consider the matrix:  $O(\gamma,\vec{k}_{xz})=O(\phi,\vec{k}_x)O(\phi,\vec{k}_z)$. The trace  yields the following equation that relates $\gamma$ and $\phi$:
\begin{equation}
\cos\gamma=\frac{\cos^2\phi+2\cos\phi-1}{2}.
\label{def:gamma_so3}
\end{equation}
If $\phi = \frac{(2k+1)\pi}{2}$, where $k\in \mathbb{Z}$, then matrices $O(\phi,\vec{k}_x)$, $O(\phi,\vec{k}_y)$ and $O(\phi,\vec{k}_z)$ are permutation matrices and they form $3$-dimensional representation of $S_3$. For all remaining $\phi \in\mathcal{L}_{SO(3)}$ we calculate $\cos\gamma$ using (\ref{def:gamma_so3}) and compare it with the values of $\cos\alpha$ for all $\alpha\in\mathcal{L}_{SO(3)}$. We find out they never agree. Therefore $\gamma\notin\mathcal{L}_{SO(3)}$ and we can apply Theorem  \ref{main} and Fact \ref{Cso3} to $U(\gamma,\vec{k}_{xz})$. Summing up:
\begin{thm}\label{beam-real}
Any $2$-mode orthogonal beamsplitter with $\phi\notin\{\frac{\pi}{2},\frac{3\pi}{2}\}$ is universal on $3$ and hence $n>3$ modes.
\end{thm}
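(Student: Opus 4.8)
The plan is to derive the theorem from the classification result Theorem \ref{main} (together with the structural Fact \ref{Cso3}). Theorem \ref{main} reduces universality of a set $\mc{S}$ to two checks: that $\mc{C}(\mr{Ad}_{\mc{S}})=\{\lambda I\}$, and that at least one element of $\mc{S}$ has non-exceptional spectrum in the sense of Definition \ref{exangle}; moreover one may harmlessly test an element of $<\mc{S}>$ rather than of $\mc{S}$, since adjoining a word to $\mc{S}$ changes neither the generated group nor the (already minimal) commutant. For the beamsplitter set $\mc{S}_3=\{B_{23}(\pm\phi),B_{13}(\pm\phi),B_{12}(\pm\phi)\}$, with $B_{23}(\phi)=O(\phi,\vec k_x)$, $B_{13}(\phi)=O(\phi,\vec k_y)$, $B_{12}(\phi)=O(\phi,\vec k_z)$, the first check is immediate from Fact \ref{Cso3}: $\mc{C}(\mr{Ad}_{\mc{S}_3})=\{\lambda I\}$ precisely when $\phi\notin\{0,\pi\}$, and $\phi=0$ is excluded since a beamsplitter has $\phi\in(0,2\pi)$, while $\phi=\pi$ is the trivial case $B=-I_2$ (where the necessary condition genuinely fails). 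So only the second check remains: exhibit, inside $<\mc{S}_3>$, an $SO(3)$-rotation whose angle avoids the finite list $\mc{L}_{SO(3)}$ of exceptional angles.

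I would split on whether $\phi$ is itself exceptional. If $\phi\notin\mc{L}_{SO(3)}$, the generator $O(\phi,\vec k_x)\in\mc{S}_3$ has spectrum $\{e^{i\phi},e^{-i\phi},1\}$ and is already non-exceptional, so Theorem \ref{main} finishes the proof. If $\phi\in\mc{L}_{SO(3)}$ --- one of the $46$ angles in \eqref{bad_angles} --- I would instead consider the word $O(\gamma,\vec k_{xz}):=O(\phi,\vec k_x)\,O(\phi,\vec k_z)\in<\mc{S}_3>$ and compare traces: since a rotation of $SO(3)$ by angle $\psi$ has trace $1+2\cos\psi$, one gets $\cos\gamma=\tfrac12(\cos^2\phi+2\cos\phi-1)$, i.e.\ \eqref{def:gamma_so3}. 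For the two values $\phi\in\{\tfrac{\pi}{2},\tfrac{3\pi}{2}\}$ the matrices $O(\phi,\vec k_x),O(\phi,\vec k_y),O(\phi,\vec k_z)$ are signed coordinate permutations and $<\mc{S}_3>$ is a finite group, so $\mc{S}_3$ fails to be universal --- which is exactly why the theorem excludes these angles. For every remaining $\phi\in\mc{L}_{SO(3)}\setminus\{\tfrac{\pi}{2},\pi,\tfrac{3\pi}{2}\}$ I would evaluate $\cos\gamma$ from the displayed relation and compare it against the finite set $\{\cos\alpha:\alpha\in\mc{L}_{SO(3)}\}$; since it never matches, $\gamma\notin\mc{L}_{SO(3)}$, so $O(\gamma,\vec k_{xz})$ is a non-exceptional element of $<\mc{S}_3>$. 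Applying Theorem \ref{main} to $\mc{S}_3\cup\{O(\gamma,\vec k_{xz})\}$ then yields $\overline{<\mc{S}_3>}=SO(3)$; universality on $n>3$ modes follows from the reduction of \cite{Reck,sawicki1}.

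The main obstacle is the last, finite verification: one substitutes each of the $46$ exceptional angles into $\cos\gamma=\tfrac12(\cos^2\phi+2\cos\phi-1)$ and must certify that the resulting algebraic number never coincides with $\cos\alpha$ for any exceptional $\alpha$. This is concrete but delicate, because several of the cosines in play are nearby algebraic numbers, so it should be carried out with exact symbolic arithmetic rather than floating point; I would organise it by the order $n$ of the root of unity $e^{i\phi}$ and rule out coincidences degree by degree, which makes it routine with a computer-algebra system.
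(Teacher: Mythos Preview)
Your proposal is correct and follows essentially the same route as the paper: invoke Fact \ref{Cso3} for the commutant condition, dispose of the non-exceptional case via Theorem \ref{main}, and for $\phi\in\mc{L}_{SO(3)}$ test the word $O(\phi,\vec k_x)O(\phi,\vec k_z)$ through the trace identity $\cos\gamma=\tfrac12(\cos^2\phi+2\cos\phi-1)$, checking finitely many values against $\mc{L}_{SO(3)}$. Your explicit remark that adjoining a word of $<\mc{S}_3>$ to $\mc{S}_3$ neither enlarges the generated group nor shrinks the already trivial commutant, and your separate handling of $\phi=\pi$ (where the necessary condition itself fails), make the logic slightly cleaner than the paper's presentation, but the argument is the same.
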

\subsubsection{The case of the unitary group}
Recall that by Fact \ref{fact:su2_in_su3_group}  the space $\mc{C}(\mr{Ad}_{\mc{S}_3})=\{\lambda I\}$ if and only if all the entries of a matrix $B\in SU(2)$ are nonzero and at least one of them belongs to $\mathbb{C}$. So we are left with checking if under these assumptions $<\mc{S}_3>$ is infinite. Let $\{e^{i\phi},e^{-i\phi}\}$ be the spectrum of $B$. Matrices $B_{ij}$ and $B_{ij}^\sigma$ have the same spectra $\{e^{i\phi},e^{-i\phi},1\}$. Looking at the definitions of the open balls $B_{\alpha}$, $\alpha^3=1$ we see that a matrix from $SU(3)$ with one spectral element equal to one can be introduced (by taking powers) only to the ball with $\alpha=1$. Moreover, the maximal $n$ that is needed is exactly the same as for $SO(3)$ and the exceptional angles belong to the set $\mc{L}_{SO(3)}$. Therefore, by Theorem \ref{main}, $\phi\notin\mc{L}_{SO(3)}$ implies that the group generated by, for example, $B_{12}$ and $B_{23}$ is infinite. In the following we show that $<\mc{S}_3>$ is infinite also for $\phi\in\mc{L}_{SO(3)}$ (providing $\phi$ is such that $\mc{C}(\mr{Ad}_{\mc{S}_3})=\{\lambda I\}$).

Let us consider $<\mc{R}>=<B_{12}(\phi),B_{23}(\phi)>$ with $\phi\in\mc{L}_{SO(3)}$. Our goal is to show that  $\mc{R}\subset\mc{S}_3$ generates an infinite group. To this end we use the following procedure:
\begin{enumerate}
\item We calculate trace of the product $B_{12}(\phi)B_{23}(\phi)$ and note that it belongs to $\mathbb{R}$. Therefore spectrum of $B_{12}(\phi)B_{23}(\phi)$ is of the form $\{e^{i\gamma},e^{-i\gamma},1\}$, where the relation between $\phi$ and $\gamma$ is given by
\begin{gather}\label{def:trace}
\mr{tr}B_{12}(\phi)B_{23}(\phi) = 2\cos\phi+\cos^2\phi+k_z^2\sin^2\phi=2\cos\gamma+1.
\end{gather} 
\item Using (\ref{def:trace}), for each $\gamma\in\mathcal{L}_{SO(3)}$ we compute
\begin{gather}\label{def:z2}
k_z^2=\frac{2\cos\gamma+1-2\cos\phi-\cos^2\phi}{\sin^2\phi},
\end{gather}
and check whether $0<k_z^2<1$. The pairs $(\phi,\gamma)$ that fails this test are excluded form the further considerations. We note that $k_z^2=1$ corresponds to diagonal matrices $B_{12}(\phi), B_{23}(\phi)$ and $k_z^2=0$ corresponds the situation when $\mc{C}(\mr{Ad}_{\mc{S}_3})\neq \{\lambda I\}$. 
\item For the pairs $(\phi,\gamma)$ that give $0<k_z^2<1$ we consider the matrix $U(\gamma^\prime)=B_{12}(2\phi)B_{23}(2\phi)$. Its trace is again real and we get 
\begin{gather}\label{def:trace3}
\mr{tr}B_{12}(2\phi)B_{23}(2\phi) = \frac{1}{2} (2+4\cos(2\phi)+(1-k_z^2)(\cos(4\phi)-1) )=2\cos\gamma^\prime+1,
\end{gather}
where $k_z^2$ is determined by $\phi$ and $\gamma$. Direct computations show that $\gamma^\prime\notin\mc{L}_{SO(3)}$ if $\phi\notin\left\lbrace \pm\frac{\pi}{2},\pm\frac{2\pi}{3} \right\rbrace $. We treat both of these cases separately.  
\item For $\phi=\pm\frac{2\pi}{3}$ and the fixed $k_z^2$ we consider yet another product of matrices $U(\gamma^{''})=B_{23}^2(\phi)B^2_{12}(\phi)B_{23}(\phi)B_{12}(\phi)$ with a real trace:
\begin{gather}\label{trace4}
\mr{tr}B_{23}^2(\phi)B^2_{12}(\phi)B_{23}(\phi)B_{12}(\phi) =\frac{1}{8}( \cos \phi+3 \cos (2 \phi)+4 \cos (3 \phi)+6 \cos (4 \phi)\nonumber\\\nonumber+4 \cos (5\phi)+\cos (6 \phi)-2)+32 k_z^4 \sin ^4\phi \cos ^2\phi+8 k_z^2 \sin ^2\phi (-2 \cos \phi+ \nonumber \\+4 \cos (2 \phi)+2 \cos (3 \phi)+\cos (4 \phi)+4))=2\cos\gamma^{''}.\end{gather}
Direct computations show that $\gamma^{''}\notin\mc{L}_{SO(3)}$, thus we are done for $\phi\in\mc{L}_{SO(3)}\backslash\{\frac{\pi}{2},-\frac{\pi}{2} \} $. The same composition for $U_{23}\left(\frac{\pi}{2} \right),U_{12}\left(\frac{\pi}{2} \right)$ may give a matrix of the spectral angle $\gamma=\pm\frac{2\pi}{3}$.
\end{enumerate}
For $\phi=\pm\frac{\pi}{2}$ an additional treatment is needed. It consists of three steps:
\begin{enumerate}
\item Assume  $B_{ij}\left(\frac{\pi}{2} \right)$ does not commute with its permutations $B_{ij}^\sigma(\frac{\pi}{2})$ for $1\leq i<j\leq 3$. In this case we can use  $B_{ij}(\gamma)=B_{ij}\left(\frac{\pi}{2} \right)B_{ij}^\sigma\left(\frac{\pi}{2} \right)$, $1\leq i<j\leq 3$ as the new set of generators. Note that the angle $\gamma$ depends on the trace of $B_{ij}\left(\frac{\pi}{2} \right)B_{ij}^\sigma\left(\frac{\pi}{2} \right)$ as $\cos\gamma=1 - 2 k_y^2$. Thus $\gamma\neq\pm\frac{\pi}{2}$ if $k_y^2\neq\frac{1}{2}$ and then we can apply the previous procedure to show that $<B_{12}(\gamma),B_{23}(\gamma)>$ is infinite. 
\item For $\phi=\pm\frac{\pi}{2}$ and $k_y^2=\frac{1}{2},\;k_x^2+k_z^2=\frac{1}{2}$ we consider yet another product 
\begin{gather*}
\mr{tr}B^2_{12}\left(\frac{\pi}{2}\right)B_{13}\left(\frac{\pi}{2}\right)B_{23}\left(\frac{\pi}{2}\right)B^2_{13}\left(\frac{\pi}{2}\right)=k_z^2=2\cos\gamma^{'''}
\end{gather*}We find out that the only $\gamma\in\mc{L}_{SO(3)}$ satisfying $2\cos\gamma=k_z^2-1$ for $0\leq k_z^2\leq \frac{1}{2}$ are $\gamma=\pm\frac{2\pi}{3}$. But then $k_z^2=0$. Thus by Fact  \ref{fact:su2_in_su3_group} the space $\mc{C}(\mr{Ad}_{\mc{S}_3})$ is larger than $\{\lambda I\}$.
\item Finally we assume that matrices $B_{ij}\left(\frac{\pi}{2} \right)$ commute with their permutations. Recall that it happens if either $k_y=\pm1$ and $k_x=k_z=0$ or $k_y=0$ and $k_x,k_z\neq 0$. The group generated for $k_y=\pm1$ is of course finite. Therefore we need to consider only the case when $k_y=0$ and $k_x,k_z\neq 0$. But in this case step 2 of the previous procedure is never satisfied (from equation (\ref{def:z2}) one can only obtain $k_z^2=0$ for $\gamma=\pm\frac{2\pi}{3}$).
\end{enumerate} 
Summing up:
\begin{thm}
Any $2$-mode unitary gate, such that all its entries are nonzero and at least one of them is a complex number is universal on $3$ and hence $n>3$ modes.
\end{thm}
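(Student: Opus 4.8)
The plan is to verify the two hypotheses of Theorem~\ref{main} for the set $\mc{S}_3$, conclude $\overline{<\mc{S}_3>}=SU(3)$, and then invoke the reduction of \cite{Reck,sawicki1} to pass from $3$ to $n>3$ modes. Write $B=U(\phi,\vec k)$ with $\vec k=(k_x,k_y,k_z)$, $|\vec k|=1$. The first step is an elementary unpacking of the entrywise hypotheses: all four entries of $B$ are nonzero precisely when $\sin\phi\neq0$, $(k_x,k_y)\neq(0,0)$, and it is \emph{not} the case that $\phi\in\{\tfrac\pi2,\tfrac{3\pi}{2}\}$ and $k_z=0$; and $B$ has a complex entry precisely when $(k_y,k_z)\neq(0,0)$ (the only real $B$ being those with $k_y=k_z=0$). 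Comparing these constraints with the three exceptions (a),(b),(c) in Fact~\ref{fact:su2_in_su3_group} shows that each of (a),(b),(c) is excluded, so $\mc{C}(\mr{Ad}_{\mc{S}_3})=\{\lambda I\}$: the necessary universality condition holds.

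It remains to show that $<\mc{S}_3>$ is infinite, for then Corollary~\ref{Ad1} (equivalently Theorem~\ref{main}) gives $\overline{<\mc{S}_3>}=SU(3)$. By Corollary~\ref{finite} it suffices to exhibit one element of $<\mc{S}_3>$ with a non-exceptional spectrum. Every $B_{ij}$ and $B_{ij}^\sigma$ has spectrum $\{e^{i\phi},e^{-i\phi},1\}$, and since a matrix of $SU(3)$ fixing a vector can be carried by powers only into the ball $B_1$ — a one-parameter Dirichlet problem in $\phi$ — this spectrum is exceptional exactly when $\phi\in\mc{L}_{SO(3)}$; so for $\phi\notin\mc{L}_{SO(3)}$ we are already done, with $g=B_{12}$. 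For the finitely many remaining values $\phi\in\mc{L}_{SO(3)}$ the key observation is that a matrix of $SU(3)$ with \emph{real} trace $t$ has characteristic polynomial $(\lambda-1)\bigl(\lambda^2-(t-1)\lambda+1\bigr)$, hence spectrum $\{1,e^{i\gamma},e^{-i\gamma}\}$ with $2\cos\gamma=t-1$. So it is enough to produce, for every admissible $(\phi,\vec k)$ with $\phi\in\mc{L}_{SO(3)}$, a word $P$ in the $B_{ij}(\phi)$ with real trace whose angle $\gamma$ lies outside the finite set $\mc{L}_{SO(3)}$.

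I would carry this out by a short cascade. Start with $P_1=B_{12}(\phi)B_{23}(\phi)$, whose (real) trace equals $2\cos\phi+\cos^2\phi+k_z^2\sin^2\phi$; for each of the finitely many $\gamma\in\mc{L}_{SO(3)}$ solve the resulting equation for the single residual parameter $k_z^2$ and discard the pairs $(\phi,\gamma)$ for which $k_z^2$ lies outside $(0,1)$ (the boundary cases $k_z^2\in\{0,1\}$ are degenerate — diagonal generators or the commutant exceptions — and are treated separately). For the surviving pairs pass to $P_2=B_{12}(2\phi)B_{23}(2\phi)$ and check, again against $\mc{L}_{SO(3)}$, that its angle escapes — which it does except for $\phi\in\{\pm\tfrac\pi2,\pm\tfrac{2\pi}{3}\}$. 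These two residual values are the genuine obstacle and need longer words: for $\phi=\pm\tfrac{2\pi}{3}$ one checks that the angle of $B_{23}(\phi)^2B_{12}(\phi)^2B_{23}(\phi)B_{12}(\phi)$ is non-exceptional; and for $\phi=\pm\tfrac\pi2$ (the permutation-like beamsplitters) one splits according to whether $B_{ij}(\tfrac\pi2)$ commutes with its permutation $B_{ij}^\sigma(\tfrac\pi2)$ — in the non-commuting case one restarts the cascade with the new generators $B_{ij}(\tfrac\pi2)B_{ij}^\sigma(\tfrac\pi2)$, whose angle satisfies $\cos\gamma=1-2k_y^2$, handling the borderline $k_y^2=\tfrac12$ by one more explicit word whose trace is affine in $k_z^2$ and forces $k_z^2=0$, i.e. case (c) of Fact~\ref{fact:su2_in_su3_group}; in the commuting case necessarily $k_y=0$, $k_x,k_z\neq0$, and the first cascade never produces an exceptional angle there. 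In every admissible case some word $P$ has a non-exceptional spectrum, so $<\mc{S}_3>$ is infinite; combined with $\mc{C}(\mr{Ad}_{\mc{S}_3})=\{\lambda I\}$ this yields $\overline{<\mc{S}_3>}=SU(3)$, and \cite{Reck,sawicki1} then gives universality on every $n>3$.
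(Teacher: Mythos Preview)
Your proposal is correct and follows essentially the same route as the paper: first using Fact~\ref{fact:su2_in_su3_group} to check that the entrywise hypotheses rule out exactly the three commutant exceptions, and then running precisely the cascade of real-trace words $B_{12}B_{23}$, $B_{12}(2\phi)B_{23}(2\phi)$, $B_{23}^2B_{12}^2B_{23}B_{12}$, together with the $\phi=\pm\tfrac{\pi}{2}$ side branch (pass to $B_{ij}B_{ij}^\sigma$, treat $k_y^2=\tfrac12$ with one further word, and dispose of the commuting sub-case), that the paper uses to force a non-exceptional spectral angle. Your explicit observation that a real-trace element of $SU(3)$ has characteristic polynomial $(\lambda-1)(\lambda^2-(t-1)\lambda+1)$ is the clean justification behind the spectral form $\{1,e^{i\gamma},e^{-i\gamma}\}$ that the paper uses without spelling out.
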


\section*{Acknowledgments}

We would like to thank Tomasz Maci\k{a}\.{z}ek for fruitful discussions. AS would like to thank Bartosz Naskrecki for stimulating discussions concerning Dirichlet theorem, Adam Bouland and Laura Man\v{c}inska for two long meetings concerning universal Hamiltonians, Etienne Le Masson for his suggestion on adding Fact \ref{liesub}. KK would like to thank Daniel Burgath for his interest and comments. We would like to also thank the anonymous referees for
suggestions that led to improvements of the paper. This work was supported by National Science Centre, Poland under the grant SONATA BIS: 2015/18/E/ST1/00200. AS also acknowledges the support from the Marie Curie International Outgoing Fellowship.


\begin{thebibliography}{9}
\bibitem{albertini} F. Albertini, D. D'Alessandro, \emph{Notions of controllability for bilinear multilevel quantum systems}, IEEE Automat. Contr. 48, 1399-1403, 2003
\bibitem{babai1} L. Babai, {\it Deciding finiteness of matrix groups in Las Vegas polynomial time}, Proceedings of the Third Annual
ACM SIAM Symposium on Discrete Algorithms (Orlando, FL, 1992). ACM, New York, pp. 33-40, 1992
\bibitem{babai2} L. Babai, R. Beals, D. N. Rockmore,{\it Deciding finiteness of matrix groups in deterministic polynomial time}, Proc. of
International Symposium on Symbolic and Algebraic Computation. ISSAC-93. ACM Press, pp. 117-126,  1993
\bibitem{barenco} A. Barenco {\it et al.}, {\it Elementary gates for quantum computation}, Phys. Rev. A 52, 3457-3467, 1995
\bibitem{BG1}J. Bourgain and A. Gamburd, {\it A spectral gap theorem in SU$(d)$}, J. Eur. Math. Soc. 014.5, 1455-1511, 2012
\bibitem{BG2}J. Bourgain and A. Gamburd, {\it On the spectral gap for finitely-generated subgroups of SU(2)}, Invent. Math., 171, Issue 1, 83-12, 2008
\bibitem{bottcher} A. B\"ottcher, D .Wenzel, \emph{The Frobenius norm and the commutator}, Linear Algebra Appl. 429, 1864-1885, 2008
\bibitem{aronson} A. Bouland, S. Aaronson, {\it Generation of Universal Linear Optics by Any Beam Splitter}, Phys. Rev. A 89, 062316, 2014
\bibitem{exp1} Y. Bromberg {\it et al.}, {\it Quantum and Classical Correlations in Waveguide Lattices}, Phys. Rev. Lett. 102, 253904, 2009
\bibitem{Dieck}T. Br\"ocker, T. {{tom Dieck}}, {\it Representations of Compact Lie Groups}, Springer-Verlag, New
York, MR 86i:22023,1985
\bibitem{bishop} R. Bishop, \emph{A relation between volume, mean curvature, and diameter}, Amer. Math. Soc. Not. 10 (1963), 364
\bibitem{brockett} R. W. Brockett, \emph{	
System Theory on Group Manifolds and Coset Spaces}, SIAM J. Control 10-2, 265-284, 1972
\bibitem{Brylinski} R. Bryli\'nski, G.Chen, {\it Mathematics of Quantum Computation}, Boca Raton, FL: Chapman and Hall/CRC Press, 2002
\bibitem{burrello} M. Burrello, G. Mussardo, X. Wan, {\it Topological quantum gate construction by iterative pseudogroup hashing}, New J. Phys. 13, 025023, 2011
\bibitem{Cartan} E. Cartan, {\it La th\'eorie des groupes finis et continus et l'Analysis Situs}, M\'emorial Sc. Math. XLII, 1-6, 1930
\bibitem{Laura} A. M. Childs {\it et al.}, {\it  Characterization of universal two-qubit Hamiltonians}, Quantum Info. Comput. 11, 19-39, 2011
\bibitem{curtis} W. Curtis, I. Reiner, \emph{Representation Theory of Finite Groups and Associative Algebras}, Interscience Publishers, John Wiley and sons, 1962
\bibitem{Derksen05} H. Derksen, E. Jeandel, P. Koiran, {\it Quantum automata and algebraic groups}, Journal of Symbolic Computation 39, 357-371, 2005
\bibitem{babai3}A. S. Detinko, D. L. Flannery, {\it On deciding finiteness of matrix groups}, Journal of Symbolic Computation 44 1037-1043, 2009
\bibitem{dirichlet} T. W. Cusick, \emph{Dirichlet's diophantine approximation theorem}, B. Aust. Math. Soc. 16, 219 - 224, 1977
\bibitem{deutsch} D. Deutsch, A. Barenco, A. Ekert, {\it Universality in Quantum Computation}, Proc. Roy. Soc. Lond. A 425, 73-90, 1989
\bibitem{fulton} W. M. Fairbarn, T. Fulton, W. H. Klink, \emph{Finite and Disconnected Subgroups of $SU(3)$ and their Application to the Elementary-Particle Spectrum}, J. Math. Phys. 5, 1038-1051, 1964
\bibitem{field} M. Field, \emph{Generating Sets for compact semisimple Lie Groups}, Proc. Amer. Math. Soc. 127, 3361-3365, 1999
\bibitem{freedman} M. H. Freedman, A. Kitaev, J. Lurie, \emph{Diameters of Homogeneous Spaces}, Math. Res. Lett. 10, 11, 2003
\bibitem{hardy} G. H. Hardy, E.M.Wright, \emph{An introduction to the Theory of Numbers}, Oxford at the Clarendon Press, 1960
\bibitem{harrow} A. W. Harrow, B. Recht, and I. L. Chuang, {\it Efficient discrete approximations of quantum gates}, J. Math. Phys. 43:9, 4445-4451, 2002
\bibitem{jeandel} E. Jeandel, \emph{Universality in Quantum Computation},  In: D\'iaz J., Karhum\"aki J., Lepist\"o A., Sannella D. (eds) {\it Automata, Languages and Programming. ICALP 2004}, Lect. Notes Comp. Sc. 3142, Springer, 2004
\bibitem{jurdjevic} V. Jurdjevic, H. Sussmann, \emph{Control systems on Lie groups}, J. Differ. Equat. 12, 313-329, 1972
\bibitem{kuranishi} M. Kuranishi, \emph{On everywhere dense imbedding of free groups in Lie groups}, Nagoya Mathematical J. 2, 63-71, 1951
\bibitem{Loyd}S. Lloyd, {\it Almost Any Quantum Logic Gate is Universal} Phys. Rev. Lett. \textbf{75}, 2, 1995
\bibitem{nojman}J. von Neumann, {\it \"Uber die analytischen Eigenschaften von Gruppen linearer Transformationen und ihrer Darstellungen}, Mathematische Zeitschrift 30, 3-42, 1929
\bibitem{nielsen}M. Nielsen, I. Chuang, {\it Quantum Computation and Quantum Information}, Cambridge University Press, 2000
\bibitem{oszman} M. Oszmaniec, J. A. Gutt, M. Ku\'s, {\it Classical simulation of fermionic linear optics augmented with noisy ancillas}
Phys. Rev. A, vol. 90, p. 020302, 2014
 \bibitem{oszman2} M. Oszmaniec {\it et al.}, Random bosonic states for robust quantum metrology, Phys. Rev. X 6, 041044 (2016).
 \bibitem{OZ17}M. Oszmaniec, Z. Zimobr\'as, Universal extensions of restricted classes of quantum operations,  arXiv:1705.11188 (2017) 
\bibitem{exp2} A. Politi {\it et al.}, {\it Silica-on-Silicon Waveguide Quantum Circuits}, Science 320, 646-649, 2008 
\bibitem{Reck} M.Reck {\it et al.}, {\it Experimental realization of any discrete unitary operator}, Phys. Rev. Lett. 73, 58-61, 1994
\bibitem{sawicki1} A. Sawicki, {\it Universality of beamsplitters}, Quantum Info. Comput. 16, 291-312, 2016
\bibitem{Schirmer} S. G. Schirmer, H. Fu, A. I. Solomon, \emph{ Complete controllability of quantum systems}, Phys. Rev. \textbf{A} 63, 063410, 2001
\bibitem{schuch} N. Schuch, J. Siewert, \emph{Natural two-qubit gate for quantum computation using the XY interaction}, Phys. Rev. A 67, 032301 (2003)
\bibitem{sugiura} M. Sugiura, \emph{Unitary representations and harmonic analysis}, Second Edition, North-Holland Mathematical Library, 1990
\bibitem{zeier1} R. Zeier, T. Schulte-Herbr\"uggen, {\it Symmetry principles in quantum systems theory}, J. Math. Phys. 52, 113510, 2011
\bibitem{zeier} R. Zeier, Z. Zimbor\'as, \emph{On squares of representations of compact Lie algebras}, J. Math. Phys. 56, 081702, 2015
\bibitem{daniel} Z. Zimbor\'as \emph{et al.}, {\it Symmetry criteria for quantum simulability of effective interactions}, Phys. Rev. A 92, 042309, 2015
\end{thebibliography}
\end{document}